\newcommand{\conferenceFull}[2]{#2}
\title{Separability in B\"uchi VASS and Singly Non-Linear Systems of Inequalities}
\author{Pascal Baumann}{Max Planck Institute for Software Systems (MPI-SWS), Germany}{pbaumann@mpi-sws.org}{https://orcid.org/0000-0002-9371-0807}{}
\author{Eren Keskin}{TU Braunschweig, Germany}{e.keskin@tu-bs.de}{https://orcid.org/0009-0009-5621-6568}{}
\author{Roland Meyer}{TU Braunschweig, Germany}{roland.meyer@tu-bs.de}{https://orcid.org/0000-0001-8495-671X}{}
\author{Georg Zetzsche}{Max Planck Institute for Software Systems (MPI-SWS), Germany}{georg@mpi-sws.org}{https://orcid.org/0000-0002-6421-4388}{}
\authorrunning{P. Baumann, E. Keskin, R. Meyer, and G. Zetzsche} %
\keywords{Vector addition systems, infinite words, separability, inequalities, quantifier elimination, rational, polynomials} %
\newcommand{\cV}{\mathcal{V}}
\newcommand{\cS}{\mathcal{S}}
\newcommand{\Z}{\mathbb{Z}}
\newcommand{\N}{\mathbb{N}}
\newcommand{\Q}{\mathbb{Q}} %
\newcommand{\R}{\mathbb{R}} %
\newcommand{\ourbold}[1]{\bm{#1}}
\newcommand{\bmm}{\ourbold{m}}
\newcommand{\bmn}{\ourbold{n}}
\newcommand{\bs}{\ourbold{s}}
\newcommand{\bx}{\ourbold{x}}
\newcommand{\by}{\ourbold{y}}
\newcommand{\bz}{\ourbold{z}}
\newcommand{\bu}{\ourbold{u}}
\newcommand{\bv}{\ourbold{v}}
\newcommand{\bb}{\ourbold{b}}
\newcommand{\bc}{\ourbold{c}}
\newcommand{\bA}{\ourbold{A}}
\newcommand{\bD}{\ourbold{D}}
\newcommand{\bI}{\ourbold{I}} %
\newcommand{\bzero}{\ourbold{0}}
\newcommand{\regsep}[2]{#1\mathrel{|}#2}
\newcommand{\notregsep}[2]{#1\mathrel{\not|}#2}
\newcommand{\KM}[1]{\mathsf{KM}(#1)}
\newcommand{\realrootsof}[1]{\mathsf{Roots}_{\R}(#1)}
\newcommand{\Roots}[1]{\mathsf{Roots}(#1)}
\newcommand{\Bounds}[1]{\mathsf{Bounds}(#1)}
\newcommand{\Separation}[1]{\mathsf{Separation}(#1)}
\DeclareMathOperator\maxc{maxc} %
\DeclareMathOperator\adj{adj} %
\DeclareMathOperator\sgn{sgn} %
\DeclareDocumentCommand{\inteff}{o}{%
        \ensuremath{%
                \IfNoValueTF{#1}{%
			\delta%
                }{
			\delta_{#1}%
                }
        }
}
\DeclareDocumentCommand{\exteff}{o}{%
        \ensuremath{%
                \IfNoValueTF{#1}{%
			\varphi%
                }{
			\varphi%
                }
        }
}
\newcommand{\inteffof}[1]{\inteff(#1)}
\newcommand{\exteffof}[1]{\exteff(#1)}
\newcommand{\NP}{\mathsf{NP}}
\newcommand{\PSPACE}{\mathsf{PSPACE}}
\newcommand{\EXPSPACE}{\mathsf{EXPSPACE}}
\newcounter{inlineenum}
\DeclareMathOperator{\rank}{rank}
\begin{document}

\thispagestyle{empty} %

\maketitle

\begin{abstract}
  The $\omega$-regular separability problem for B\"uchi VASS coverability languages has recently been shown to be decidable, but with an $\EXPSPACE$ lower and a non-primitive recursive upper bound---the exact complexity remained open. 
  We close this gap and show that the problem is $\EXPSPACE$-complete.
  A careful analysis of our complexity bounds additionally yields a $\PSPACE$ procedure in the case of fixed dimension $\geq 1$, which matches a pre-established lower bound of $\PSPACE$ for one dimensional B\"uchi VASS.
  Our algorithm is a non-deterministic search for a witness whose size, as we show, can be suitably bounded.
  Part of the procedure is to decide the existence of runs in VASS that satisfy certain non-linear properties.
	Therefore, a key technical ingredient is to analyze a class of systems of inequalities    where one variable may occur in non-linear (polynomial) expressions.

	These so-called singly non-linear systems (SNLS) take the form $\bA(x)\cdot \by \geq \bb(x)$, where $\bA(x)$ and $\bb(x)$ are a matrix resp.\ a vector whose entries are polynomials in $x$, and $\by$ ranges over vectors in the rationals.
  Our main contribution on SNLS is an exponential upper bound on the size of rational solutions to singly non-linear systems. The proof consists of three steps. 
  First, we give a tailor-made quantifier elimination to characterize all real solutions to $x$. 
  Second, using the root separation theorem about the distance of real roots of polynomials, we show that if a rational solution exists, then there is one with at most polynomially many bits.
	Third, we insert the solution for $x$ into the SNLS, making it linear and allowing us to invoke standard solution bounds from convex geometry.

	Finally, we combine the results about SNLS with several techniques from
	the area of VASS to devise an $\EXPSPACE$ decision procedure for
	$\omega$-regular separability of B\"{u}chi VASS.
\end{abstract}

\section{Introduction}
Vector addition systems with states (VASS) are one of the most popular and well-studied models of concurrent systems. A $d$-dimensional VASS consists of finitely many control states and $d$ counters. Transitions between control states can increment or decrement the $d$ counters, but importantly, one can only take a transition if the new counter values remain non-negative.  %

\subparagraph{Separability problems}
In recent years, a strong focus of the research on VASS was on \emph{separability problems}~\cite{DBLP:conf/fsttcs/KocherZ23,Baumann23,DBLP:conf/lics/CzerwinskiZ20,DBLP:conf/fsttcs/ThinniyamZ19,DBLP:conf/icalp/CzerwinskiHZ18,DBLP:journals/lmcs/CzerwinskiL19,DBLP:conf/concur/CzerwinskiLMMKS18,DBLP:conf/icalp/ClementeCLP17,DBLP:conf/stacs/ClementeCLP17,DBLP:conf/concur/Keskin023,KeskinMeyer2024a}. Here, we label the transitions of the input VASS $\cV_1$, $\cV_2$ by letters, which gives rise to languages $L_1$ and $L_2$. Then, we ask whether there exists a language $S$, from some class $\cS$ of allowed separators, such that $L_1\subseteq S$ and $L_2\cap S=\emptyset$. Here, $\cS$ is typically the class of regular languages. 

An important motivation for studying separability problems is that
separators can be viewed as certificates for disjointness, and thus the non-existence of a run in the product of $\cV_1$ and $\cV_2$. Such certificates are crucial for understanding safety verification for infinite-state systems, %
where the difficult part is to prove the non-existence of a run (the existence of a run is usually easy to show). 
In particular, certificates for non-existence %
are often the ingredient that is conceptually hardest to come by. For example, in the case of reachability in VASS, the KLM decomposition~\cite{DBLP:journals/siamcomp/Mayr84,DBLP:conf/stoc/Kosaraju82,DBLP:journals/tcs/Lambert92,DBLP:conf/lics/LerouxS15} and Leroux's Presburger-definable inductive invariants~\cite{DBLP:conf/popl/Leroux11} can be viewed as such certificates. Regular separators could play a similar role in alternative approaches to reachability.

In addition to understanding certificates, the recent attention on separability has led to other applications. For example, work on separability by bounded languages has led to a general framework %
to address unboundedness problems for VASS~\cite{DBLP:conf/icalp/CzerwinskiHZ18}. Moreover, separability results %
were used in an algorithm for deciding inclusion between unambiguous VASS~\cite{DBLP:conf/concur/CzerwinskiH22}.

With the recent contribution by Keskin and Meyer~\cite{KeskinMeyer2024a} (together with earlier decidability results for subclasses and variants~\cite{Baumann23,DBLP:conf/lics/CzerwinskiZ20,DBLP:conf/icalp/CzerwinskiHZ18,DBLP:journals/lmcs/CzerwinskiL19,DBLP:conf/concur/CzerwinskiLMMKS18,DBLP:conf/icalp/ClementeCLP17,DBLP:conf/stacs/ClementeCLP17}), proving regular separability decidable for (finite-word) VASS, the \emph{decidability} status of regular separability has largely been settled. However, concerning \emph{complexity}, regular separability is far from understood, with few results: So far, the only exact complexity results are $\PSPACE$-completeness for (succintly represented) one-dimensional VASS~\cite{DBLP:journals/lmcs/CzerwinskiL19}, $\EXPSPACE$-completeness for VASS coverability languages~\cite{DBLP:conf/concur/CzerwinskiLMMKS18}, and Ackermann-completeness for VASS reachability languages~\cite{KeskinMeyer2024a}.

\subparagraph*{B\"{u}chi VASS} A particularly challenging problem is ($\omega$-)regular
separability in B\"{u}chi VASS~\cite{Baumann23}. In a B\"{u}chi VASS $\cV$, the
language $L(\cV)$ consists of \emph{infinite words} induced by runs that visit
some final state infinitely often. As demonstrated by Baumann, Meyer, and
Zetzsche~\cite{Baumann23}, B\"{u}chi VASS behave quite differently in terms
of regular separability from their finite-word counterpart, coverability
languages of VASS~\cite{DBLP:conf/concur/CzerwinskiLMMKS18}. Nevertheless, Baumann, Meyer, and Zetzsche
proved decidability of regular separability for B\"{u}chi VASS~\cite{Baumann23}.
However, the complexity remained open: Their algorithm requires at least
Ackermannian time (because it constructs Karp-Miller graphs),
and the only known lower bound is $\EXPSPACE$.

\subparagraph{Challenge: Non-linear constraints} Improving the complexity established in \cite{Baumann23} is challenging due to the characterization of inseparability there:
Inseparability is equivalent to the existence of a constellation of runs, called an \emph{inseparability
flower}, that must satisfy a \emph{non-linear constraint}, meaning a constraint that is not
expressible in linear arithmetic (i.e. first-order logic of $(\Z;+,<,0,1)$ or
$(\Q;+,<,0,1)$). Essentially, such a flower is a triple $(\alpha,\beta,\gamma)$
of cyclic runs such that (among other linear inequalities) the counter effect
of the combined run $\alpha\beta\gamma$ is a scalar multiple of the counter effect of just $\alpha$.
In other words, we are looking for runs with effects $\bu,\bv\in\Z^n$ such that
\begin{equation} \exists x\in\Q\colon \bv=x\cdot \bu\ .
\label{scalar-multiple}\end{equation} Detecting runs with such constraints is
difficult: There are powerful generic $\EXPSPACE$ algorithms for detecting runs
that satisfy unboundedness conditions~\cite{Demri13},
linear %
constraints~\cite{AtigH11}, or variants
of computation tree logic (CTL)~\cite{DBLP:conf/mfcs/BlockeletS11}. However, %
condition
\eqref{scalar-multiple} falls in neither of those categories. 

In fact, we are not aware of any algorithmic approach to solving systems of
linear inequalities with constraints of type \eqref{scalar-multiple} (let alone
inside algorithms for VASS). There is a result by Gurari \& Ibarra~\cite{DBLP:journals/jacm/GurariI79}
showing that integral feasibility of systems of equalities $\bA\cdot\by=\bb(x)$ can be decided in $\NP$, where $\bb(x)$ is a vector containing in each component a
quotient of polynomials in $x$. However, these do not seem to capture
\eqref{scalar-multiple}: By moving the denominators from $\bb(x)$ to the left-hand side, one obtains equations
where every variable from $\by$ is multiplied with the same polynomial over $x$. However, for \eqref{scalar-multiple}, we need to multiply a \emph{subset} of the linear variables (namely, those in $\bu$) with a polynomial (namely, $x$).
The same is
true for the logic of \emph{almost linear arithmetic} due to
Weispfenning~\cite{DBLP:journals/jsc/Weispfenning90}, whose existential fragment is also
solvable in $\NP$. Here, the definable sets are finite unions of solution sets
of Gurari \& Ibarra. 

Furthermore, it is not even clear how to detect inseparability flowers by invoking
reachability in VASS (even though this would only yield an Ackermann upper
bound): To some extent, algorithms for reachability permit non-linear
constraints---for example, using standard tricks, it is decidable whether one
can reach a configuration with counter values $(m,n)$ such that $n\le 2^m$.
However, the condition in \eqref{scalar-multiple} does not even seem to be
captured by such methods. 

\subparagraph*{Contribution} Our main result is that regular separability in
B\"{u}chi VASS is $\EXPSPACE$-complete, and $\PSPACE$-complete in fixed
dimension $\ge 1$. The key technical ingredient is a method that we expect to
be of independent interest: We develop a procedure for solving systems of
linear inequalities with a single non-linear variable, which we call
\emph{singly non-linear systems of inequalities} (SNLS). We use our results
about SNLS to show that if an inseparability witness exists, then there is one
where all runs have at most doubly exponential length, yielding an $\EXPSPACE$
procedure. In fixed dimension, we obtain singly exponential bounds, leading to a $\PSPACE$ procedure.

\subparagraph{Step I: Singly non-linear systems of inequalities}
Intuitively, a singly non-linear system of inequalities (SNLS) is a system of inequalities that is linear in all but one variable. This means, there is one variable $x$ that may appear in arbitrary polynomials, but all others can only occur linearly. More precisely, an SNLS is a system of inequalities of the form
\begin{equation} \bA(x) \cdot \by\ge \bb(x),\quad \by\ge\bzero\ , \label{snls-form} \end{equation}
where $\bA(x)\in\Z[x]^{m\times n}$ is an $m\times n$ matrix over the ring $\Z[x]$ of integer polynomials in $x$, $\bb(x)\in\Z[x]^m$ is a vector of polynomials from $\Z[x]$, and $\by$ ranges over $\Q^n$. Notice that here indeed, $x$ can be freely multiplied with itself and other variables, whereas the expression on the left-hand side must be linear in each component of $\by$. 

Our main result about SNLS is that if a system as in \eqref{snls-form} has a solution $(x,\by)\in\Q\times\Q^n$, then it has a solution where all numbers (numerators and denominators) are bounded exponentially in the description size of $\bA(x)$ and $\bb(x)$, even if numbers in the description are encoded in binary. This implies in particular that feasibility of SNLS is in $\NP$.

In the proof, we first show that the set of all $x\in\Q$ for which there is a
solution $(x,\by)$ can be described by a Boolean combination $\Phi$ of
polynomial constraints of the form $p(x)\ge 0$, for polynomials
$p\in\Z[x]$. This amounts to a quantifier elimination procedure for a class of first-order formulas in the ordered field $(\Q;+,\cdot,<,0,1)$. This is perhaps surprising, since this structure does not admit quantifier elimination in general~\cite[Theorem 2]{macintyre1983elimination}.

Let us give a geometric explanation how we arrive at the constraints $\Phi(x)$:
For each choice of $x$, the SNLS $\bA(x)\cdot\by\ge\bb(x),~\by\ge\bzero$
defines a %
polyhedron. It is a standard fact in convex geometry that
such a polyhedron has a point on a minimal face, and moreover this point can be
expressed as the inverse of a submatrix of $\bA(x)$ multiplied with $\bb(x)$.
This expression can then be plugged back into
$\bA(x)\cdot\by\ge\bb(x)$ to obtain a set of polynomial constraints on $x$,
subject to a particular determinant being non-zero. The latter non-zero
condition can as well be expressed as a polynomial constraint.

We then show that $\Phi$ has a small solution: In one case, a rational root of
one of the polynomials $p$ is a solution---these can be bounded by the
\emph{Rational Root Theorem}. The other case is that the solution $x$ lies
strictly between two roots $r_1<r_2$ of participating polynomials. But then one
can observe that any rational number between those roots is a solution (if
no other root lies between $r_1$ and $r_2$). Using the \emph{Root
Separation Theorem} (specifically, Rump's Bound~{\cite{Mishra93}}), which lower-bounds the size
of such intervals $(r_1,r_2)$, we can then conclude that such an interval must
contain a rational number with small numerator and denominator.

Once we exhibit a small $x$, we can plug it into $\bA(x)\cdot\by\ge\bb(x)$
to obtain a system of linear inequalities.
Then we use standard bounds to obtain a small (i.e.\ exponential)
solution $\by\in\Q^n$. It should be noted that while our result about SNLS
concerns rational solutions, we apply it in the case where $\bb(x)\ge \bzero$,
which means a rational solution can be turned into an integral solution by
multiplying a common denominator.

\subparagraph{Step II: Rackoff-like bounds}
After establishing the solution bound for SNLS, we use this result in the context of B\"uchi VASS to show the existence of inseparability witnesses that are small, i.e.\ consist of runs that are at
most doubly exponential in length. Here, we use an adaptation of the Rackoff
technique~\cite{DBLP:journals/tcs/Rackoff78} similar to the proofs of
Habermehl~\cite{Habermehl97} and Atig \&
Habermehl~\cite{AtigH11}. 
In
\cite{AtigH11}, it is shown that runs satisfying
(restricted) linear inequalities can be detected in $\EXPSPACE$. 
For this, they use a Rackoff-style induction to bound the length of such runs. 
We devise a similar Rackoff-style induction to work with SNLS instead of
ordinary linear inequalities. 
Different compared to the earlier works is the fact that our witnesses contain $\omega$-counters, which may change when invoking the induction hypothesis.
Moreover, we need to use a result of Demri on selective unboundedness~\cite[Theorem 4.6]{Demri13} (in $\EXPSPACE$ in the general case and $\PSPACE$ in fixed dimension) to check the coverability of our witnesses.

\newcommand{\aval}{\mathit{c}}

\newcommand{\ints}{\mathbb{Z}}
\newcommand{\rats}{\mathbb{Q}}

\newcommand{\arat}{\mathit{t}}
\newcommand{\anint}{a}
\newcommand{\anintp}{b}
\newcommand{\apoly}{p}
\newcommand{\apolyp}{q}
\newcommand{\avar}{x}
\newcommand{\avarp}{y}
\newcommand{\avarpp}{x}

\newcommand{\areal}{r}
\newcommand{\apolyset}{P}

\newcommand{\param}{\avarpp}

\newcommand{\apolyof}[1]{\apoly(#1)}
\newcommand{\apolypof}[1]{\apolyp(#1)}

\newcommand{\asol}{\ourbold{s}}

\newcommand{\degof}[1]{\deg(#1)}

\newcommand{\maxcof}[1]{\maxc(#1)}

\newcommand{\aset}{\mathit{S}}

\newcommand{\binencof}[1]{\mathit{bin}(#1)}

\newcommand{\adjof}[1]{\adj(#1)}
\newcommand{\dimof}[1]{\dim(#1)}
\newcommand{\detof}[1]{\det(#1)}
\newcommand{\sgnof}[1]{\sgn(#1)}
\newcommand{\rowof}[1]{\mathit{row}(#1)}
\newcommand{\colof}[1]{\mathit{col}(#1)}

\newcommand{\normparof}[2]{|\!|#2|\!|_{#1}}
\newcommand{\normof}[1]{\normparof{}{#1}}
\newcommand{\binarysizeof}[1]{\normparof{2}{#1}}
\newcommand{\unarysizeof}[1]{\normparof{1}{#1}}

\section{Preliminaries}

\subparagraph{B\"uchi VASS}
A \emph{B\"uchi vector addition system with states (B\"uchi VASS)}
of dimension~$d\in\mathbb{N}$ over an alphabet $\Sigma$
is a tuple $\avass=(Q, q_0, \Sigma, T, F)$.
It consists of a finite set of states $Q$,
an initial state $q_0\in Q$,
a set of final states $F\subseteq Q$,
and a finite set of transitions $T \subseteq\ Q\times \Sigma^*\times \mathbb{Z}^d\times Q$.
The size of the B\"uchi VASS is $|\avass|:=|Q|+|F|+\sum_{(q, w,\delta, q')\in T}\big(|w|+\binarysizeof{\delta}\big)$.
By $\binarysizeof{\delta}$, we mean the size of the binary encoding of $\delta$.
Since we only consider B\"uchi VASS in this paper, we often simply call them VASS.
If $d=0$, we call $\avass$ a \emph{B\"uchi automaton}.

The semantics of the B\"uchi VASS is defined over its \emph{configurations}, which are elements of $Q\times\mathbb{N}^d$.
The \emph{initial configuration} of $\avass$ is $(q_0, \bzero)$.
We lift the transitions of the B\"uchi VASS to a relation over configurations
$\rightarrow\ \subseteq\ Q\times\mathbb{N}^d\times\Sigma^*\times Q\times\mathbb{N}^d$ as follows: 
$(q, \bmm)\xrightarrow{w}(q', \bmm')$ if there is $(q, w, \delta, q')\in T$ such that $\bmm'=\bmm+\delta$.
A \emph{run} of the B\"uchi VASS is a (possibly infinite) sequence of configurations of the form
$\sigma=(p_0, \bmm_{0})\xrightarrow{w_1}(p_1, \bmm_1)\xrightarrow{w_2}\cdots$.

A run $\sigma$ is \emph{accepting} if it starts from the initial configuration
and visits final states infinitely often,
meaning there are infinitely many configurations $(q, \bmm)$ in $\sigma$ with $q\in F$. 
The run is said to be \emph{labeled} by the word $w=w_0w_1\cdots$ in $\Sigma^{\omega}$. 
The \emph{language} $L(\cV)$ of the B\"uchi VASS consists of all infinite
words that label an accepting run.

An infinite-word language $L \subseteq \Sigma^\omega$ is called \emph{regular}
if it is accepted by a B\"uchi automaton.
As we only consider infinite-word languages, we just call them languages.

\subparagraph{Arithmetic} Our approach to regular separability in B\"{u}chi VASS rests on a result about solutions to singly non-linear systems of inequalities. This also requires some terminology.

We define the integers, rationals, polynomials, and matrices together with the operations we need to perform on them.
Let $\anint\in\ints$ be an integer. 
Its size $\binarysizeof{\anint}=\sizeof{\binencof{\anint}}$ is the length of its binary encoding. 
We also use $\unarysizeof{\anint}$ to denote the size of the unary encoding. 
This is the absolute value plus an extra bit for the sign. 
A polynomial with integer coefficients $\apoly\in\ints[\avar]$ is a sum $\sum_{i=0}^{k}\anint_i \avar^i$ with $\anint_0,\ldots, \anint_k\in\ints$ and $\anint_k\neq 0$ if $k>0$. 
We define $\unarysizeof{\apoly}=\sum_{i=0}^k\unarysizeof{\anint_i}$ and similar for $\binarysizeof{\apoly}$. 
The \emph{degree} of the polynomial is $\degof{\apoly}=k$, its \emph{maximal coefficient} is $\maxcof{\apoly}=\max_{i\in[0, k]}\unarysizeof{a_i}$.  
Note that $\unarysizeof{\apoly} \leq (\degof{\apoly} + 1) \cdot \maxcof{\apoly}$. 
A real number $r\in\R$ with $\apoly(r)=0$ is called a \emph{root} of the polynomial. 
Let $\aset$ be a set with a size function $\normof{-}$ defined on it.
We consider matrices $\bA\in\aset^{m\times n}$ over~$\aset$, 
and define their size $\normof{\bA}$ by summing up the sizes of the entries.
We use $\rowof{\bA}=m$ and $\colof{\bA}=n$. 
When $\aset = \ints[\avar]$, we also use $\degof{\bA}$  for the highest degree of a polynomial in $\bA$ and $\maxcof{\bA}$ for the maximal coefficient of a polynomial in $\bA$.
Pairs $(s_1, s_2)\in\aset\times\aset$ form a special case with size $\normof{(s_1, s_2)}=\normof{s_1}+\normof{s_2}$. 
In particular, a rational number $\arat\in\rats$ is a pair $\arat = \frac{\anint}{\anintp}$ of integers $\anint, \anintp\in\ints$ with $\unarysizeof{\arat}=\unarysizeof{\anint}+\unarysizeof{\anintp}$, and similar for $\binarysizeof{\arat}$.

We perform addition $\anint+\anintp$ and multiplication $\anint\cdot \anintp$ among integers, rationals, polynomials, and matrices. 
These operations can be executed in time polynomial in $\binarysizeof{\anint}+\binarysizeof{\anintp}$. 
The same holds for the comparison $\anint\geq \anintp$ among integers and rationals.
We also add, multiply, and compare integers and rationals with $-\infty$ and $\infty$.
The definitions are as expected.

\section{Main results}\label{main-results}
A language $\aregsep\subseteq\analph^{\omega}$ is said to \emph{separate} languages $\alang_1, \alang_2\subseteq\analph^{\omega}$, if $\alang_1\subseteq \aregsep$ and $\aregsep\cap \alang_2=\emptyset$. 
We call $\alang_1$ and $\alang_2$ \emph{regular separable}, denoted by $\alang_1\separable\alang_2$, if there is a separator $\aregsep$ that is a regular language. 
The problem we address is the \emph{regular separability problem} for B\"{u}chi VASS:
\begin{description}
\item[Given] Two VASS $\avass_1$ and $\avass_2$ over some alphabet $\Sigma$.
\item[Question] Does $\langof{\avass_1}\separable\langof{\avass_2}$ hold?
\end{description}
We also consider the variants of this problem where the inputs are of fixed dimension:
For a fixed number $d \in \nat\setminus\set{0}$, the \emph{$d$-dimensional regular separability problem} for B\"{u}chi VASS is the same problem as above, except that the input VASS $\avass_1$ and $\avass_2$ are restricted to be of dimension at most $d$.
Our first main result is the following:
\begin{theorem}\label{Theorem:UpperBound}
The regular separability problem for B\"{u}chi VASS is $\EXPSPACE$-complete. Moreover, the $d$-dimensional regular separability problem is $\PSPACE$-complete for all $d\geq1$.
\end{theorem}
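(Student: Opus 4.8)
The lower bounds are not new: an $\EXPSPACE$ lower bound for the general problem is established in~\cite{Baumann23}, and a $\PSPACE$ lower bound in fixed dimension already holds for one-dimensional (Büchi) VASS~\cite{DBLP:journals/lmcs/CzerwinskiL19}, matching the fixed-dimension claim. So the work is in the upper bounds, and the plan has three stages: (i) reduce ``an inseparability witness exists'' to the feasibility of a singly non-linear system of inequalities together with coverability/unboundedness conditions; (ii) use the SNLS solution bound (Step~I) and a Rackoff-style induction to show that, whenever a witness exists, one of at most doubly exponential length exists (singly exponential in fixed dimension); (iii) search for such a witness nondeterministically, checking the numeric side-conditions directly and the coverability side-conditions via selective unboundedness.

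For stage~(i) we would start from the combinatorial characterization of inseparability of~\cite{Baumann23}: the languages of $\cV_1$ and $\cV_2$ are not regular separable iff there is an inseparability flower, essentially a triple $(\alpha,\beta,\gamma)$ of cyclic runs subject to several linear inequalities on their counter effects together with the non-linear requirement that the effect $\bv$ of $\alpha\beta\gamma$ be a rational scalar multiple $\bv = x\cdot\bu$ of the effect $\bu$ of $\alpha$. Treating the numbers of occurrences of the participating elementary cycles as the linear unknowns $\by$ and keeping the scalar as the single non-linear variable $x$, the flower constraints---after substituting $\bu$ and $\bv$ by their linear expressions in $\by$ and clearing denominators so that $\bb(x)\ge\bzero$---assemble into an SNLS $\bA(x)\cdot\by\ge\bb(x),\ \by\ge\bzero$. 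Since $\bb(x)\ge\bzero$ here, any rational solution rescales to an integral one, which is what the VASS semantics needs. What this algebraic encoding does not by itself record is that the postulated runs really exist in the VASS and that the designated counters are $\omega$-unbounded along them; these are kept as separate coverability/unboundedness obligations on a product construction.

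For stage~(ii), the SNLS bound of Step~I gives a solution with exponentially many bits---so the numeric part is feasible in $\NP$---hence a candidate flower whose cycle multiplicities and scalar $x$ are exponentially bounded. To bound the \emph{length} of the runs realizing that flower, and not merely how often its cycles are taken, we would carry out a Rackoff-style induction in the spirit of Rackoff~\cite{DBLP:journals/tcs/Rackoff78}, Habermehl~\cite{Habermehl97}, and Atig \& Habermehl~\cite{AtigH11}, but with SNLS constraints in place of the linear ones of~\cite{AtigH11}, and with the extra bookkeeping that the set of $\omega$-counters, and the behaviour demanded of them, may change when we descend to the induction hypothesis. The outcome should be: if a flower exists, then one exists whose runs have length at most doubly exponential in $|\cV_1|+|\cV_2|$, improving to singly exponential when the dimension $d$ is fixed. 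We expect this stage to be the main obstacle: classical Rackoff arguments succeed because, once counters are large, the system behaves linearly and standard linear-algebra bounds apply, whereas here the controlling constraint is genuinely non-linear in $x$, so the induction must carry enough of the SNLS structure (degrees and coefficient magnitudes of $\bA(x)$ and $\bb(x)$) to keep the length bound from escalating past doubly exponential, all while the bounded versus $\omega$-unbounded status of the counters is itself shifting along the recursion.

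Stage~(iii) is then a guess-and-check. A nondeterministic procedure guesses a flower of the established size---the control flow and the cycle multiplicities, the value of $x$, and the partition of counters into bounded and $\omega$-unbounded ones, all of at most exponential description---verifies the polynomially many, exponentially bounded linear inequalities and the SNLS membership directly, and steps through the constituent runs keeping only the current configuration, whose counters stay doubly (resp.\ singly) exponential and hence fit in exponential (resp.\ polynomial) space; the $\omega$-unboundedness obligations are discharged by Demri's selective-unboundedness procedure~\cite[Theorem~4.6]{Demri13}, which runs in $\EXPSPACE$ in general and in $\PSPACE$ in fixed dimension. This is a nondeterministic $\EXPSPACE$ (resp.\ $\PSPACE$) computation; Savitch's theorem removes the nondeterminism, and since these classes are closed under complement we likewise decide regular separability itself, giving the stated deterministic bounds.
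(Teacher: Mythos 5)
Your plan matches the paper's proof in all essential respects: it reuses the inseparability-flower characterization from~\cite{Baumann23} (reformulated in the paper so as to avoid the Karp--Miller graph, after a reduction to $D_n$-separability), encodes the non-linear scalar-multiple condition as an SNLS over cycle multiplicities, bounds witness length via a Rackoff-style induction that tracks the changing $\omega$-counters, and closes with a guess-and-check using Demri's selective unboundedness and Savitch's theorem. The one technical point your plan glosses over is that the flower's starting configuration may carry doubly-exponential non-$\omega$ counter values, so the coverability side-condition cannot be reduced to simultaneous unboundedness via a naively-sized subtraction gadget---the paper invokes Lipton's construction to encode those subtractions in a VASS of only exponential size. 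Also, the $\PSPACE$ lower bound for one-dimensional B\"uchi VASS is established in~\cite{Baumann23}, not in~\cite{DBLP:journals/lmcs/CzerwinskiL19}, which concerns finite-word VASS.
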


As mentioned above, the proof is based on a small model property for what we call singly non-linear systems of inequalities. We expect this result to be of independent interest.
Formally, a \emph{singly non-linear system (SNLS)} is a system of inequalities of the form
\begin{align*}
\bA(\param)\cdot \by \geq \bb(\param) \wedge \by \geq \zerovec\ .
\end{align*}
Here, $\bA(\param) \in \Z[\param]^{m \times n}$ is an $m\times n$ matrix over the set of polynomials with integer coefficients in variable $\param$,
and $\bb \in \Z[\param]^m$ is a vector of polynomials. 
We also write an SNLS as $\cS = (\bA(\param), \bb(\param))$, or $\cS(\param, \by)$ to emphasize the variables. 
A \emph{solution} to $\cS$ is a pair $(\arat, \asol)\in\Q\times \Q^n$ that satisfies $\bA(\arat)\cdot\asol\geq \bb(\arat)\wedge \asol\geq \zerovec$.  
If a solution exists, we call the system \emph{feasible}.

Our second main result is a bound on the size of least solutions.
\begin{theorem}\label{solution-size}
	If the SNLS $\cS$ is feasible, then it has a solution $(\arat, \asol)$, where all components of $\asol$ have the same denominator, with $\unarysizeof{\arat}, \unarysizeof{\asol}\in (\colof{\cS}\cdot \degof{\cS} \cdot \maxcof{\cS})^{\bigoof{\degof{\cS}^2 \cdot \rowof{\cS}^4}}$.
\end{theorem}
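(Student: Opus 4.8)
The plan is to follow the three-step strategy outlined in the introduction, turning it into a quantitative argument that tracks bit-sizes at each stage.

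\textbf{Step 1: Quantifier elimination for the admissible values of $\param$.} First I would show that the projection $\{\arat \in \Q \mid \exists \asol \in \Q^n : (\arat,\asol) \text{ solves } \cS\}$ is definable by a Boolean combination $\Phi(\param)$ of polynomial constraints $p(\param) \geq 0$ with $p \in \Z[\param]$. The geometric idea: for a fixed value $\arat$, the system $\bA(\arat)\cdot\by \geq \bb(\arat),\ \by \geq \zerovec$ defines a polyhedron $P_\arat \subseteq \Q^n$, and $P_\arat \neq \emptyset$ iff $P_\arat$ has a vertex of the standard form — a point $\by$ obtained by selecting $n$ linearly independent rows from the combined constraint matrix (rows of $\bA(\arat)$ together with rows of $-\bI$, and the corresponding right-hand sides from $\bb(\arat)$ and $\zerovec$), setting those constraints to equalities, and solving. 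By Cramer's rule each coordinate of such a candidate vertex is $\det(M_i)/\det(M)$ where $M$ is the $n\times n$ submatrix and $M_i$ its $i$-th column replacement; these determinants are polynomials in $\param$ of degree at most $n\cdot\degof{\cS}$. Substituting this rational-function candidate back into all the inequalities $\bA(\param)\cdot\by \geq \bb(\param)$, clearing the denominator $\det(M)$ (and carefully tracking its sign via an auxiliary constraint $\det(M)^2 > 0$ together with a sign variable, or by splitting on $\det(M) > 0$ vs.\ $\det(M) < 0$), I obtain for each choice of the $n$-row subset a conjunction of polynomial inequalities in $\param$; disjoining over all subsets yields $\Phi(\param)$. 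The polynomials appearing in $\Phi$ then have degree bounded polynomially in $\rowof{\cS},\colof{\cS},\degof{\cS}$ and coefficients whose unary size is bounded by $(\colof{\cS}\cdot\degof{\cS}\cdot\maxcof{\cS})^{\bigoof{\rowof{\cS}}}$ — the blow-up coming from expanding determinants of matrices of polynomials.

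\textbf{Step 2: A small rational root of $\Phi$.} Now I need a rational $\arat$ satisfying $\Phi$ with small $\unarysizeof{\arat}$. Consider the finite set $R \subseteq \R$ of all real roots of all polynomials occurring in $\Phi$. The truth value of each atom $p(\param) \geq 0$ is constant on each open interval between consecutive elements of $R$ and at each point of $R$, so $\Phi$ holds either at some rational root $\arat \in R \cap \Q$, or on some maximal open interval $(r_1, r_2)$ with $r_1, r_2$ consecutive roots (or $\pm\infty$). In the first case, the Rational Root Theorem bounds $\unarysizeof{\arat}$ linearly in the size of the coefficients of that polynomial — hence within the stated bound. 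In the second case, any rational in $(r_1,r_2)$ works; if one endpoint is $\pm\infty$ we can pick an integer of size bounded via Cauchy's root bound, and otherwise $r_1, r_2$ are roots of (possibly distinct) polynomials in $\Phi$ each of degree $\leq D$ and coefficient size $\leq C$, so $r_1, r_2$ are roots of the product, a polynomial of degree $\leq 2D$ and coefficient size exponential in $D\log C$. Rump's Bound (the root separation theorem, \cite{Mishra93}) then gives $r_2 - r_1 \geq 2^{-\poly(\text{size})}$, and any interval of that length contains a dyadic rational with numerator and denominator of size polynomial in that same exponent — i.e.\ $\unarysizeof{\arat} \in (\colof{\cS}\cdot\degof{\cS}\cdot\maxcof{\cS})^{\bigoof{\degof{\cS}^2\cdot\rowof{\cS}^4}}$ after unwinding the bounds through Step 1. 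I expect this to be the delicate part: keeping the exponents honest when composing the determinant blow-up from Step 1 with the root-separation bound, and handling the endpoints-at-infinity and the sign-of-determinant case split cleanly.

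\textbf{Step 3: From $\arat$ back to a small $\asol$.} Having fixed a concrete small rational $\arat = \anint/\anintp$, substitute it into the SNLS to get the linear system $\bA(\arat)\cdot\by \geq \bb(\arat),\ \by \geq \zerovec$ over $\Q^n$. The entries of $\bA(\arat)$ and $\bb(\arat)$ are rationals whose unary size is bounded by $\degof{\cS}$ evaluations of polynomials of coefficient size $\maxcof{\cS}$ at $\arat$, hence by $\degof{\cS}\cdot\maxcof{\cS}\cdot\unarysizeof{\arat}^{\degof{\cS}}$, which stays within the target bound. Since this system is feasible (by choice of $\arat \models \Phi$) and linear, standard bounds from convex geometry — e.g.\ that a feasible system of rational linear inequalities has a basic (vertex or vertex-of-recession-cone) solution whose entries have bit-size polynomial in the bit-size of the system, and which moreover can be taken with a common denominator (the determinant of a defining submatrix) — yield the desired $\asol$ with all components sharing one denominator and $\unarysizeof{\asol}$ within $(\colof{\cS}\cdot\degof{\cS}\cdot\maxcof{\cS})^{\bigoof{\degof{\cS}^2\cdot\rowof{\cS}^4}}$. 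Combining the three bounds and absorbing polynomial factors into the exponent gives the statement.
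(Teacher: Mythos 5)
Your proposal follows essentially the same three-step strategy as the paper: a Cramer's-rule/vertex-based quantifier elimination with a sign-of-determinant split, then Rump's root-separation bound, the Rational Root Theorem, and Cauchy's bound to locate a small rational $t$, and finally a determinant-based vertex bound on the instantiated linear system to obtain $\bs$ with a common denominator. The only minor differences are that the paper directly reuses the adjugate formula $\bs^* = \tfrac{\adjof{\bD_R(t)}}{\detof{\bD_R(t)}}\cdot\bc_R(t)$ from its quantifier-elimination step rather than re-invoking a generic convex-geometry vertex bound, and your Step~1 coefficient-size exponent of $\bigoof{\rowof{\cS}}$ should be $\bigoof{\rowof{\cS}^2}$ (the $m!$ factor from the Leibniz expansion contributes an extra factor of $m$), which you would need to track when ``keeping the exponents honest'' but which does not change the final bound.
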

\Cref{solution-size} implies that a feasible system $\cS$ always has a solution of size at most singly exponential in $\unarysizeof{\cS}$.
This gives an upper bound on the complexity of feasibility.
\begin{corollary}
Feasibility of SNLS is in $\NP$. 
\end{corollary}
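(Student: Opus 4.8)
The plan is to read \Cref{solution-size} as a guess-and-check argument. On input an SNLS $\cS = (\bA(\param),\bb(\param))$, the NP procedure nondeterministically guesses a candidate solution $(\arat,\asol) \in \Q \times \Q^n$---written in binary, with the $n$ components of $\asol$ sharing a common denominator---and then deterministically checks that $\bA(\arat)\cdot\asol \ge \bb(\arat)$ and $\asol \ge \bzero$. Soundness is just the definition of feasibility; completeness is exactly \Cref{solution-size}, which guarantees that whenever $\cS$ is feasible, some solution fits the required size bound, so the right string is among the guesses.

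The point requiring care is that the guessed witness must be of size polynomial in the \emph{binary} description length $\binarysizeof{\cS}$. I would argue this by noting that $\colof{\cS}$, $\degof{\cS}$, and $\rowof{\cS}$ are all bounded by $\binarysizeof{\cS}$ (an entry of degree $k$ forces $k+1$ coefficients, hence at least $k+1$ bits, into the description, and the numbers of rows and columns are likewise bounded by the description length), while only $\maxcof{\cS}$---the \emph{unary} size of the largest coefficient---can be as large as $2^{\binarysizeof{\cS}}$. Substituting into the bound of \Cref{solution-size}, the exponent $\degof{\cS}^2\cdot\rowof{\cS}^4$ is polynomial in $\binarysizeof{\cS}$ and the base $\colof{\cS}\cdot\degof{\cS}\cdot\maxcof{\cS}$ is $2^{O(\binarysizeof{\cS})}$, so $\unarysizeof{\arat},\unarysizeof{\asol} \le 2^{\mathrm{poly}(\binarysizeof{\cS})}$. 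Therefore the binary sizes $\binarysizeof{\arat}$ and $\binarysizeof{\asol}$, and hence the full binary encoding of $(\arat,\asol)$, are polynomial in $\binarysizeof{\cS}$, as an NP witness requires.

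It then remains to observe that the verification runs in polynomial time. Evaluating each entry of $\bA(\param)$ and $\bb(\param)$ at $\param = \arat$ means evaluating a polynomial of degree at most $\degof{\cS}$ with coefficients of binary size at most $\binarysizeof{\cS}$ at a rational of polynomial binary size; by Horner's scheme this yields rationals of binary size $\mathrm{poly}(\binarysizeof{\cS})$ in polynomial time. Forming $\bA(\arat)\cdot\asol$ and comparing it componentwise with $\bb(\arat)$ then involves only polynomially many additions, multiplications, and comparisons of rationals of polynomial binary size, each polynomial-time by the arithmetic bounds recorded in the preliminaries. Hence the check is in $\mathsf{P}$ and feasibility of SNLS is in $\NP$. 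I do not expect a genuine obstacle here: the only thing to get right is the unary-versus-binary bookkeeping above, which turns the singly exponential bound on the \emph{unary} size of the witness into the polynomial bound on its \emph{binary} size that NP demands.
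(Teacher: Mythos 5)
Your proposal is correct and follows exactly the argument the paper intends: the corollary is stated without a detailed proof, with the text simply observing that \Cref{solution-size} yields a singly exponential solution size, and your guess-and-check formulation fills that in. Your careful unary-versus-binary bookkeeping---noting that $\maxcof{\cS}$ (which can be $2^{\binarysizeof{\cS}}$) appears only in the base while $\colof{\cS},\degof{\cS},\rowof{\cS}$ (all bounded by $\binarysizeof{\cS}$) govern the exponent---is precisely the point that needs to be made, and you handle the polynomial-time verification via Horner's scheme correctly as well.
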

The reader may have noted that SNLS are more general than the non-linear systems we are confronted with when checking separability. 
There are at least two arguments in support of the generalization.  
First, non-linearity is not well-understood, and we believe a class of systems that admits an efficient algorithm for checking feasibility will find its applications.
Second, the generalization only adds little complexity to the proof or, phrased differently, the special case already needs most considerations. 

\subparagraph{Organization} The remainder of the paper is organized as follows. In \cref{section:quant-elim}, we prove \cref{solution-size} and in \cref{section:search}, we show \cref{Theorem:UpperBound}.

\section{Singly Non-Linear Systems}\label{section:quant-elim}

In this section, we prove \cref{solution-size}.

\subparagraph{Some notation}
By $\bA(\arat)$ or $\mathit{eval}(\bA(\param), \arat)$ we mean the matrix with rational entries that results from $\bA(\param)$ by evaluating all polynomials at~$\arat$. 
Let $\bA\in R^{n\times n}$ be a square matrix over some ring $R$. In our exposition, we will consider matrices over the rings $\Z$, $\Q$, and $\Z[x]$.
We write $\detof{\bA}$ for the determinant, and recall that if $R$ is a field (such as $\Q$), then $\bA$ is invertible if and only if $\detof{\bA}\neq 0$.
The \emph{adjugate} (also called \emph{classical adjoint}) of $\bA$ is the matrix $\adjof{\bA}\in R^{n\times n}$ with $\adjof{\bA}[j,i]=(-1)^{i+j}\det(\bA_{ij})$, where $\bA_{ij}$ is the matrix obtained from $\bA$ by removing the $i$-th row and the $j$-th column.  It is well-known that then $\bA\cdot\adj(\bA)=\det(\bA)\cdot \bI$, where $\bI$ is the identity matrix in dimension $n$. In particular, if $\bA$ is invertible, its inverse can be computed as $\bA^{-1}=\frac{\adjof{\bA}}{\detof{\bA}}$~\cite[Chapter XIII, Prop. 4.16]{Lang2002}. 

In upper bound arguments, we will use the well-known Leibniz formula for
determinants, which says $\detof{\bA}=\sum_{\sigma\in S_n}\sgnof{\sigma}\cdot
\prod_{i=1}^n \coordacc{\bA}{i, \sigma(i)}$~\cite[Chapter XIII, Prop.
4.6]{Lang2002}.  Here, $S_n$ is the set of all permutations of $[1,n]$ and $\sgnof{\sigma}\in\{-1,1\}$ is the sign of $\sigma\in S_n$.    

\subparagraph{Bounding solutions}
For the proof of \cref{solution-size}, we proceed in two steps.
We first show that if an SNLS $\cS(\param, \by)$ is feasible, then we find a small rational $\arat$ for $\param$ such that the system $\cS(\arat, \by)$ is feasible.
This system is the result of evaluating all polynomials in $\cS$ at $\arat$, and thus having only $\by$ as the variables.
\begin{lemma}\label{exist-solutions-size}
  If the SNLS $\cS(\param, \by)$ is feasible, then there is a number $\arat\in\rats$ with $\unarysizeof{\arat}\in (\colof{\cS}\cdot \degof{\cS} \cdot \maxcof{\cS})^{\bigoof{\degof{\cS} \cdot \rowof{\cS}^3}}$
  such that $\cS(\arat, \by)$ is feasible.
\end{lemma}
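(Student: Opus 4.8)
\textbf{Proof plan for \cref{exist-solutions-size}.}
The plan is to follow the geometric strategy outlined in the introduction: for each fixed value of $\param$, the set $\{\by : \bA(\param)\cdot\by\geq\bb(\param),\ \by\geq\bzero\}$ is a polyhedron, and if it is nonempty it has a vertex (or, if the polyhedron has no vertices, a point on a minimal face) that can be written as $\bC^{-1}\cdot\bd$, where $\bC$ is an $n\times n$ submatrix formed from $n$ of the defining rows (rows of $\bA(\param)$ together with the unit rows coming from $\by\geq\bzero$), and $\bd$ is the corresponding right-hand side (entries of $\bb(\param)$ or $0$). Using $\bC^{-1}=\adj(\bC)/\det(\bC)$, such a point has coordinates that are \emph{rational functions} of $\param$, with numerator and denominator polynomials whose degrees are bounded by $n\cdot\degof{\cS}$ and whose coefficients are bounded (via the Leibniz formula) singly exponentially in $\normof{\cS}$. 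Plugging this candidate vertex back into the full system $\bA(\param)\cdot\by\geq\bb(\param),\ \by\geq\bzero$ and clearing the (possibly sign-varying) denominator $\det(\bC)$ yields, for each choice of the submatrix $\bC$, a Boolean combination of polynomial constraints $p(\param)\geq 0$ (or $>0$, or $=0$ for the invertibility condition $\det(\bC)\neq 0$, which splits into $\det(\bC)>0 \vee \det(\bC)<0$) in the single variable $\param$. Call the disjunction over all submatrix choices $\Phi(\param)$; I would argue that $\cS(\arat,\by)$ is feasible if and only if $\Phi(\arat)$ holds. The degree of each polynomial appearing in $\Phi$ is $\bigoof{n\cdot\degof{\cS}}=\bigoof{\colof{\cS}\cdot\degof{\cS}}$, the number of such polynomials is bounded by something like $2\cdot\binom{\rowof{\cS}+\colof{\cS}}{\colof{\cS}}\cdot(\text{rows})$, hence singly exponential in $\normof{\cS}$, and each coefficient has unary size at most $(\colof{\cS}\cdot\degof{\cS}\cdot\maxcof{\cS})^{\bigoof{\colof{\cS}}}$ or so by expanding determinants and products.

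Having reduced the problem to ``find a small rational satisfying a Boolean combination $\Phi(\param)$ of polynomial sign conditions of controlled degree and coefficient size,'' I would finish by a root-counting / root-separation argument. Let $P$ be the finite set of all polynomials occurring in $\Phi$, and let $r_1<r_2<\dots<r_k$ be their distinct real roots. The truth value of $\Phi$ is constant on each open interval $(r_i,r_{i+1})$ (and on $(-\infty,r_1)$ and $(r_k,\infty)$) and is a fixed value at each $r_i$. So a solution $\arat$ exists either (a) as one of the $r_i$ that happens to be rational, in which case the \emph{Rational Root Theorem} bounds $\unarysizeof{\arat}$ by roughly $\maxcof{P}$ which is within the claimed bound, or (b) strictly inside some interval $(r_i,r_{i+1})$ where $\Phi$ is true; by the \emph{Root Separation Theorem} (Rump's bound, \cite{Mishra93}) applied to the product $\prod_{p\in P}p$ — whose degree is $\bigoof{|P|\cdot\colof{\cS}\cdot\degof{\cS}}$ and whose coefficient size is singly exponential — the length of any such interval is at least $1/B$ where $B$ is singly exponential in $\normof{\cS}$; also $|r_1|,|r_k|$ are bounded by a similar $B$ (Cauchy bound), so some interval with bounded endpoints and bounded length contains a rational $\arat$ with numerator and denominator bounded by $\bigoof{B}$. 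Tracking the exponents through: the product polynomial has degree $\bigoof{\rowof{\cS}^{\bigoof{1}}\cdot\colof{\cS}\cdot\degof{\cS}}$ after accounting for $|P|$ being exponential in $\rowof{\cS}+\colof{\cS}$, which is where the $\rowof{\cS}^3$-type blow-up in the exponent of the stated bound comes from (Rump's bound has the degree in the exponent, and the number of polynomials enters the degree).

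The routine-but-delicate parts I would be careful about: (1) the case where the polyhedron is nonempty but \emph{pointed-ness fails} — i.e., it has no vertices because its lineality space is nontrivial; the standard fix is to note that a nonempty polyhedron $\{\by:\bA'\by\geq\bb'\}$ always contains a point lying on a minimal (inclusion-wise) face, and such a point is the unique solution of the subsystem $\bC\by=\bd$ for a full-rank selection $\bC$ of the tight constraints, so the ``$\bC^{-1}\bd$'' description still goes through with $\rank(\bC)=n$; one must phrase the submatrix selection to allow using the $\by\geq\bzero$ rows, which are unit vectors and hence only help keep coefficients small; (2) the sign of $\det(\bC)$: since we clear denominators that may be negative, each inequality $\adj(\bC)[\text{row}]\cdot\bd \gtrless \det(\bC)\cdot(\text{rhs})$ must be split according to $\sgnof{\det(\bC)}$, doubling the number of cases but not changing the asymptotics; (3) making sure the bookkeeping of degrees and coefficient sizes is honest — the determinant of an $n\times n$ matrix of degree-$\degof{\cS}$ polynomials with unary coefficient size $\maxcof{\cS}$ has degree $\leq n\degof{\cS}$ and coefficients of unary size at most $n!\cdot\maxcof{\cS}^n\cdot(\ldots)$, i.e.\ $(\colof{\cS}\cdot\degof{\cS}\cdot\maxcof{\cS})^{\bigoof{\colof{\cS}}}$, and Rump's bound contributes an exponent linear in the degree of the product polynomial.

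\textbf{Main obstacle.} The conceptual heart — and the step most likely to need the most care — is establishing the quantifier-elimination equivalence ``$\cS(\arat,\by)$ feasible $\iff\Phi(\arat)$'' together with a \emph{tight} accounting of the degree of the polynomials in $\Phi$ as a function of $\rowof{\cS}$, $\colof{\cS}$, $\degof{\cS}$. Getting the correct exponent in the final bound ($\bigoof{\degof{\cS}\cdot\rowof{\cS}^3}$) hinges entirely on how the number $|P|$ of participating polynomials — which is exponential in $\rowof{\cS}+\colof{\cS}$ — feeds into the degree of the product to which Rump's bound is applied, and then how that degree sits in the exponent of Rump's bound. The convex-geometry vertex argument and the Rational-Root / Cauchy-bound pieces are standard once set up; the bookkeeping that merges them into the stated exponent is the real work.
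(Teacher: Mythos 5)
Your high-level strategy matches the paper's: eliminate $\by$ by expressing a point on a minimal face as $\adj(\bC)\cdot\bd/\det(\bC)$, splitting on $\sgn(\det\bC)$ to get a Boolean combination $\Phi$ of polynomial sign conditions in $\param$ alone, and then use the Rational Root Theorem / Cauchy's bound / Rump's bound to pin down a small rational. You also correctly anticipate the ``pointedness'' issue and resolve it via minimal faces exactly as the paper does. However, two of your bookkeeping steps are wrong in ways that would \emph{not} reproduce the exponent $\bigoof{\degof{\cS}\cdot\rowof{\cS}^3}$; the bound you would actually get is strictly worse.

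First, the degree of $\det(\bD_R(\param))$. You bound it by $n\cdot\degof{\cS}=\colof{\cS}\cdot\degof{\cS}$ because $\bD_R$ is $n\times n$. But of the $n$ selected rows, all but at most $m=\rowof{\cS}$ are unit rows coming from the $\by\ge\bzero$ block, and unit rows contribute degree $0$ (and coefficient $1$) to every Leibniz term. So the correct bound is $\degof{\det(\bD_R)}\le m\cdot\degof{\cS}$, and likewise the coefficient bound is $(\colof{\cS}\cdot\degof{\cS}\cdot\maxcof{\cS})^{\bigoof{\rowof{\cS}^2}}$ rather than $(\ldots)^{\bigoof{\colof{\cS}}}$. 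This is not a cosmetic point: it is precisely what makes the final exponent depend on $\rowof{\cS}$ and not on $\colof{\cS}$, which is what the lemma claims.

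Second, and more seriously, your application of Rump's bound to $\prod_{p\in P}p$ cannot work. You note yourself that $|P|$ is exponential in $\rowof{\cS}+\colof{\cS}$ (there are $\binom{m+n}{n}$ choices of $R$, each contributing $\bigoof{m+n}$ polynomials). The product of exponentially many polynomials of polynomial degree has \emph{exponential} degree, and Rump's bound places the degree in the exponent, so you would end up with a doubly exponential lower bound on root separation, hence a doubly exponential $\unarysizeof{\arat}$---far beyond the claimed bound. The sentence where you assert the product degree is $\bigoof{\rowof{\cS}^{\bigoof 1}\cdot\colof{\cS}\cdot\degof{\cS}}$ ``after accounting for $|P|$'' has no supporting argument and is in fact false. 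The fix, which the paper uses, is that you only ever need to separate the two \emph{adjacent} roots $r_i<r_{i+1}$ bounding the interval your solution lies in; those are roots of a single pair $p,q\in P$, so it suffices to apply Rump's bound to $p\cdot q$, whose degree is $\bigoof{m\cdot\degof{\cS}}$. With that change the exponent contributed by Rump's bound is $\bigoof{m\cdot\degof{\cS}}$ on a base whose coefficients are $(\colof{\cS}\cdot\degof{\cS}\cdot\maxcof{\cS})^{\bigoof{m^2}}$, which composes to the stated $(\colof{\cS}\cdot\degof{\cS}\cdot\maxcof{\cS})^{\bigoof{\degof{\cS}\cdot\rowof{\cS}^3}}$.

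Minor remark: you should also handle the solution lying outside $[r_1,r_k]$, where Cauchy's bound alone gives a small rational; you fold this into the interval discussion implicitly, which is fine but worth stating, since otherwise the case analysis is incomplete.
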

Lemma~\ref{exist-solutions-size} is non-trivial and will occupy almost this entire section. 
To explain our approach, note that the feasibility of $\cS(\param, \by)$ is equivalent to the feasibility of $\exists \by.\cS(\param, \by)$. 
Our first step is to eliminate the quantifier and determine a new formula $\qelimformdef$ in which $\by$ no longer occurs and that is equivalent to the previous one over the rationals, $\exists \by.\cS(\param, \by)\ratequiv\qelimformdef$.
The equivalence says that for every $\arat\in\rats$, we have $\arat\models \exists \by.\cS(\param, \by)$ if and only if $\arat\models \qelimformdef$.

The second step for \cref{exist-solutions-size} is to show that if the new formula holds, then we find a small solution for $t$. To this end, we will apply the Root Separation Theorem, which provides a lower bound on the distance between distinct real roots of polynomials. After establishing \cref{exist-solutions-size}, we obtain \cref{solution-size} (at the end of this section) by taking the $t$ provided by \cref{exist-solutions-size}, and pair it with the $\bs\in\Q^n$, which must exist according to the quantifier elimination done in the first step of \cref{exist-solutions-size}.
\subsection{Quantifier Elimination}
We show how to remove the quantifier from $\exists \by.\cS(\param, \by)$ with a tailor-made quantifier elimination algorithm. 
The fact that %
quantifier elimination is possible in this setting came as a surprise to us, given the non-linear nature and the setting of rationals. For example, the real closed field $(\R;+,\cdot,<,0,1)$ admits quantifier elimination by a classical result of Tarski~\cite[Theorem 3.3.15]{Marker2002}, but this is not true for the ordered field $(\Q;+,\cdot,<,0,1)$ of rationals~\cite[Theorem 2]{macintyre1983elimination} (see~\cite[p. 71--72]{Marker2002} for a simple example). This means, there are first-order formulas over $(\Q;+,\cdot,<,0,1)$ that have no quantifier-free equivalent. However, we show that if we existentially quantify the linear variables in the formulas induced by SNLS, then those quantifiers can be eliminated.

The precise formulation of the result needs some notation.
A \emph{lower bound constraint} has the form $\apoly(\param)\geq 0$ or $\apoly(\param) > 0$ with $\apoly\in\ints[\param]$ a polynomial with integer coefficients.   
The formula $\qelimformdef$ that we want to obtain takes the form
$\bigvee_{i\in I} \bigwedge_{j\in J_i} \qelimform_{i, j}(\param)$, where the formulae $\qelimform_{i, j}(\param)$ are lower bound constraints.
We call it a DNFLB, short for \emph{disjunctive normal form with lower bound constraints as the literals}.  
We may also omit $\param$ and write $\qelimform$. 
We use $\degof{\qelimform}$ and $\maxcof{\qelimform}$ for the maximal degree resp.\ coefficient of a polynomial in~$\qelimform$.  
\begin{theorem}\label{Theorem:QElim}
For every SNLS $\cS(\param, \by)$, there is a DNFLB~$\qelimformdef$ with $\exists \by.\cS(\param, \by)\ratequiv \qelimformdef$, $\degof{\qelimform}\in\bigoof{\rowof{\cS} \cdot \degof{\cS}}$, and $\maxcof{\qelimform}\in (\colof{\cS}\cdot\degof{\cS} \cdot \maxcof{\cS})^{\bigoof{\rowof{\cS}^2}}$.
\end{theorem}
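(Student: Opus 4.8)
The plan is to prove \cref{Theorem:QElim} by making the geometric sketch from the introduction precise. Fix a candidate value $\arat$ for $\param$. Then $\cS(\arat,\by)$ is an ordinary system of linear inequalities over $\Q^n$, namely $\bA(\arat)\cdot\by \ge \bb(\arat),\ \by\ge\zerovec$, defining a rational polyhedron $P_\arat$. The standard fact from convex geometry is that a nonempty polyhedron of the form $\{\by : \bC\by \ge \bd,\ \by\ge\zerovec\}$ has a vertex of the polyhedron obtained by turning some maximal linearly independent subset of the active inequalities into equalities; equivalently, $P_\arat\neq\emptyset$ iff there is a square subsystem $\bM\by = \be$ (with $\bM$ an invertible $n\times n$ submatrix built from rows of $\bA(\arat)$ and rows of the identity, and $\be$ the matching entries of $\bb(\arat)$ or $0$) whose unique solution $\by^\ast = \bM^{-1}\be$ satisfies all the remaining inequalities of the full system. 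So feasibility of $\exists\by.\cS(\arat,\by)$ is equivalent to a finite disjunction, over all such choices of square subsystem, of the statement: ``$\det(\bM)\neq 0$ and $\bA(\arat)\cdot\by^\ast \ge \bb(\arat)$ and $\by^\ast\ge\zerovec$.''

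Now I would make this uniform in $\param$. For a fixed combinatorial choice of which rows of $\bA(\param)$ and which identity rows form $\bM(\param)$, the entries of $\bM(\param)$ and of $\be(\param)$ are polynomials in $\param$ (of degree at most $\degof{\cS}$, with coefficients controlled by $\maxcof{\cS}$). Using $\bM^{-1} = \adj(\bM)/\det(\bM)$ and clearing the common denominator $\det(\bM(\param))$, each inequality ``row $k$ of $\bA(\param)$ applied to $\by^\ast$ is $\ge b_k(\param)$'' becomes, after multiplying through by $\det(\bM(\param))$, a polynomial inequality in $\param$ — except one must track the sign of $\det(\bM(\param))$, since multiplying an inequality by a negative quantity flips it. This is handled by splitting into the two cases $\det(\bM(\param)) > 0$ and $\det(\bM(\param)) < 0$ (the case $=0$ is excluded), each of which is itself a lower bound constraint (or its negation, which can be rewritten as a strict lower bound constraint on $-\det(\bM(\param))$). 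In each case, clearing the denominator yields genuine lower bound constraints $p(\param) \ge 0$ / $p(\param) > 0$. Conjoining these over all rows $k$ and over the constraints $\by^\ast\ge\zerovec$, and disjoining over the (finitely many) choices of square subsystem and the two sign cases, produces a DNFLB $\qelimformdef$. By construction, for every $\arat\in\rats$ we have $\arat\models\exists\by.\cS(\param,\by)$ iff $\arat\models\qelimformdef$, which is exactly $\ratequiv$ — and note that because this argument only used that $\Q$ is an ordered field (vertices of rational polyhedra are rational, and the algebra of $\adj$, $\det$, Cramer's rule is field-independent), the equivalence holds over $\Q$ even though $(\Q;+,\cdot,<,0,1)$ has no general quantifier elimination.

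The last task is the degree and coefficient bookkeeping, which gives the stated bounds $\degof{\qelimform}\in\bigoof{\rowof{\cS}\cdot\degof{\cS}}$ and $\maxcof{\qelimform}\in(\colof{\cS}\cdot\degof{\cS}\cdot\maxcof{\cS})^{\bigoof{\rowof{\cS}^2}}$. Here the key estimates are: $\bM(\param)$ is $n\times n$ with $n=\colof{\cS}$, its entries have degree $\le\degof{\cS}$, so by the Leibniz formula $\det(\bM(\param))$ and each entry of $\adj(\bM(\param))$ are sums of at most $n!$ products of $\le n$ (resp.\ $\le n-1$) polynomials, hence of degree $O(n\cdot\degof{\cS})$; multiplying in one more row of $\bA(\param)$ (degree $\le\degof{\cS}$) and the vector $\bb(\param)$ keeps the degree in $O(n\cdot\degof{\cS})$, and since $n=\colof{\cS}$ this should be restated carefully — one checks the exponent works out to $\bigoof{\rowof{\cS}\cdot\degof{\cS}}$ once the number of rows actually participating in a square subsystem is taken into account. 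For coefficients: each of the $\le n!$ Leibniz terms has coefficient magnitude at most $(\maxcof{\cS})^n$ times a product of multinomial factors, summing to roughly $n!\cdot(\degof{\cS}\cdot\maxcof{\cS})^{O(n)}$, and $\log(n!) = O(n\log n)$, so the bit size of the coefficients of the resulting polynomials is $(\colof{\cS}\cdot\degof{\cS}\cdot\maxcof{\cS})^{\bigoof{\rowof{\cS}}}$ per elimination step, with the square in the exponent coming from composing the subsystem-selection with the final substitution and the number of distinct inequalities produced. I expect this coefficient-growth bookkeeping to be the main obstacle — not conceptually hard, but requiring care to route the Leibniz-formula bounds through $\adj$, through Cramer's rule, and through the sign-case split without losing a polynomial factor in the exponent, and to be honest about where $\rowof{\cS}$ versus $\colof{\cS}$ enters.
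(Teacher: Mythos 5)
Your proposal follows exactly the same route as the paper: augment the system with the identity rows, invoke the minimal-face/vertex characterization of a non-empty polyhedron to get a square invertible subsystem $\bD_R(\param)$, substitute $\by^* = \adj(\bD_R(\param))\cdot\bc_R(\param)/\det(\bD_R(\param))$, clear the denominator while splitting on $\sgn(\det(\bD_R(\param)))$, and disjoin over all choices of $R$ to obtain a DNFLB. You also correctly flag the one subtle point in the bookkeeping -- that the $n\times n$ submatrix contains at most $\rowof{\cS}$ rows from $\bA(\param)$ (the remaining rows coming from the identity, contributing degree $0$), which is precisely how the paper's appendix turns the naive $\colof{\cS}$-dependent Leibniz bound into the claimed $\bigoof{\rowof{\cS}\cdot\degof{\cS}}$ and $(\colof{\cS}\cdot\degof{\cS}\cdot\maxcof{\cS})^{\bigoof{\rowof{\cS}^2}}$.
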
  
Since our intention is to bound the solutions $\arat$ to variable $\param$, the given estimations on the degree and the maximal coefficient suffice for us. 
The proof actually gives an algorithm to compute $\qelimform$ which runs in time exponential in the dimension of $\cS$, but we do not need the effectiveness here.
In the proof of \cref{Theorem:QElim}, we will use a standard fact about polyhedra:
\begin{lemma}\label{minimal-face}
	Suppose $\bD\in\Q^{m\times n}$ and $\bc\in \Q^m$. If the system
	$\bD\cdot\bx\ge\bc$ has a solution in $\Q^n$, then there is a solution
	$\bs\in\Q^n$ that also satisfies $\bD'\cdot\bs=\bc'$, where $(\bD',\bc')$ is
	a subset of the rows of $(\bD,\bc)$ such that $\rank(\bD')=\rank(\bD)$.
\end{lemma}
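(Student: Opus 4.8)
The plan is to prove this classical ``minimal face'' fact about polyhedra directly, via a rank-maximization argument. Call a row of $(\bD,\bc)$ \emph{tight} at a point $\bx$ if it holds with equality there, and write $(\bD_\bx,\bc_\bx)$ for the submatrix of rows of $(\bD,\bc)$ that are tight at $\bx$. Since $\rank(\bD_\bx)$ takes values in the finite set $\{0,\dots,\min(m,n)\}$ and the system has a solution by hypothesis, I would fix a rational solution $\bs$ for which $\rank(\bD_\bs)$ is maximal, and set $(\bD',\bc'):=(\bD_\bs,\bc_\bs)$. By construction $\bD'\bs=\bc'$, so it only remains to show $\rank(\bD')=\rank(\bD)$.

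To that end, suppose $\rank(\bD')<\rank(\bD)$. Since $\bD'$ consists of rows of $\bD$, we have $\ker(\bD)\subseteq\ker(\bD')$, and the rank gap forces $\ker(\bD')\supsetneq\ker(\bD)$ by rank–nullity; hence there is a \emph{rational} vector $\bz$ with $\bD'\bz=\bzero$ and $\bD\bz\ne\bzero$ (rationality is no issue: a rational basis of $\ker(\bD')$ cannot lie entirely in $\ker(\bD)$). I would then slide along the line $\bs+\lambda\bz$, $\lambda\in\Q$. Every row of $\bD'$ annihilates $\bz$, so it stays tight for all $\lambda$; on the other hand $\bD\bz\ne\bzero$ forces some \emph{non-tight} row $\bd_j$ of $\bD$ to satisfy $\bd_j\cdot\bz\ne0$, and after replacing $\bz$ by $-\bz$ if necessary I may assume $\bd_j\cdot\bz<0$, so this row strictly decreases toward its bound $c_j$ as $\lambda$ grows. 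Letting $\lambda_0$ be the minimum of the finitely many ratios $(\bd_i\cdot\bs-c_i)/(-\bd_i\cdot\bz)$ over rows $\bd_i$ of $\bD$ with $\bd_i\cdot\bz<0$, this is a well-defined positive rational (positive because all such rows are non-tight, so $\bd_i\cdot\bs-c_i>0$), and $\bs':=\bs+\lambda_0\bz\in\Q^n$ is again a solution: rows with $\bd_i\cdot\bz\ge0$ move away from violation or stay put, while rows with $\bd_i\cdot\bz<0$ remain $\ge c_i$ for $\lambda\le\lambda_0$ by the choice of $\lambda_0$.

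Finally, I would compare tight submatrices at $\bs$ and $\bs'$. Every row of $\bD'$ is still tight at $\bs'$, and in addition a row $\bd_k$ attaining the minimum (so $\bd_k\cdot\bz<0$) becomes tight at $\bs'$. Since every row of $\bD'$ annihilates $\bz$ but $\bd_k\cdot\bz\ne0$, the row $\bd_k$ is not a linear combination of the rows of $\bD'$, so $\rank(\bD_{\bs'})\ge\rank(\bD')+1$, contradicting the maximality of $\bs$. Therefore $\rank(\bD')=\rank(\bD)$, which is exactly what the lemma asserts. I expect the only genuinely delicate point to be this last step — arguing that the constraint newly turned into an equality actually raises the rank — and this is precisely where it matters that the displacement direction $\bz$ lies in $\ker(\bD')$ yet outside the kernel of the new row; everything else (existence of the maximizer, rationality of $\bz$ and $\lambda_0$, feasibility of $\bs'$) is routine bookkeeping.
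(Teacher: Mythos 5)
Your proof is correct, and it takes a genuinely different route from the paper's: the paper simply cites Schrijver's decomposition theory for polyhedra (Theorems~8.4 and~8.5 of~\cite{Schrijver1986}, that a non-empty polyhedron has a non-empty minimal face, and that minimal faces are exactly the sets $\{\bs \mid \bD'\bs = \bc'\}$ with $\rank(\bD')=\rank(\bD)$), whereas you give a self-contained, elementary argument by a perturbation/rank-maximization technique. Your argument is sound throughout: the maximizer $\bs$ exists because the rank ranges over a finite set and the feasible set is non-empty; the direction $\bz \in \ker(\bD')\setminus\ker(\bD)$ can be chosen rational since both kernels have rational bases; the step size $\lambda_0$ is a well-defined positive rational precisely because any row with $\bd_i\cdot\bz \neq 0$ must be non-tight (as $\bD'\bz=\bzero$), so its slack is strictly positive; feasibility of $\bs'$ follows by splitting on the sign of $\bd_i\cdot\bz$; and the rank strictly increases at $\bs'$ because the newly-tightened row $\bd_k$ does not annihilate $\bz$ while every row of $\bD'$ does, so $\bd_k$ lies outside the row span of $\bD'$. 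What each approach buys: the paper's citation is shorter and emphasizes that this is a textbook fact, keeping the exposition focused on the novel SNLS and VASS content; your direct proof is longer but makes the lemma entirely self-contained and makes explicit that the construction yields the maximal tight submatrix, which is the essential content of the ``minimal face'' characterization. Both are valid; for a paper one would typically prefer the citation, but your argument would be appropriate in a more expository context or if the reference were unavailable.
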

\begin{proof}
	By well-known decomposition theorems about polyhedra, a polyhedron
	$P=\{\bs\in\Q^n\mid \bD\cdot\bs\ge\bc\}$ is non-empty if and only if it has
	a non-empty minimal face~\cite[Theorem 8.5]{Schrijver1986}. Moreover,
	minimal faces can be characterized as exactly the sets of the form
	$\{\bs\in\Q^n \mid \bD'\cdot\bs=\bc'\}$, where $(\bD',\bc')$ is a subset of
	the rows of $(\bD,\bc)$ such that $\bD'$ has the same rank as
	$\bD$~\cite[Theorem 8.4]{Schrijver1986}.
\end{proof}

We are ready to prove \cref{Theorem:QElim}:
\begin{proof}[Proof of \cref{Theorem:QElim}]
Let $\cS(\param, \by)=\bA(\param)\cdot \by \geq \bb(\param) \wedge \by \geq \zerovec$. 
To fix the dimension, let $\bA\in\Z[\param]^{m\times n}$. 
We can equivalently write the SNLS as $\cS'(\param, \by)=\bD(\param)\cdot\by\geq \bc(\param)$ with 
\begin{equation*}
\bD(\param)\ =\ \left(\begin{aligned}&\bA(\param)\\&\bI_n\end{aligned}\right)\in \Z[\param]^{(m + n) \times n}\qquad \bc(\param)\ =\ \left(\begin{aligned}&\bb(\param)\\&\bzero_n\end{aligned}\right)\in \Z[\param]^{m + n}\ ,
\end{equation*}
i.e.\ we glue the $n \times n$ identity matrix $\bI_n$ to the bottom of $\bA(\param)$ and extend $\bb(\param)$ by $n$ zeros.

Assume $\cS'$ is feasible and the solution for $\param$ is $\arat\in\rats$.
	By \cref{minimal-face} and since $\bD(\arat)$ has rank $n$, we can select a subset of $n$ rows of $\bD(\arat)$ and of $\bc(\arat)$ such that the smaller system has a solution, even with equality. More formally, for any subset $R\subseteq[1,m+n]$, denote by $\bD_R(\arat)$ (resp.\ $\bc_R(\arat)$) the matrix (resp.\ vector) obtained by selecting only the rows in $R$ from $\bD(\arat)$ (resp.\ $\bc(\arat)$). Then \cref{minimal-face} tells us that there is an $\asol\in\Q^n$ with $\bD_R(\arat)\cdot\asol=\bc_R(\arat)$, where $\bD_R(\arat)$ has rank $n$. In particular, $\bD_R(\arat)$ is invertible and thus $\det(\bD_R(\arat))\ne 0$. The fact that $\bD_R(t)$ is invertible allows us to write $\asol=\bD_R(\arat)^{-1}\cdot\bc_R(\arat)$, which will be key for our quantifier elimination.
The argumentation shows that for every $t\in\Q$, $\exists \by.\cS'(t,\by)$ is equivalent to the condition
\begin{equation}
	\bigvee_{\substack{R \subseteq [1,m+n]\\ \sizeof{R} = n}} \detof{\bD_R(t)}\neq 0\ \wedge\ \bD(t)\cdot \bD_R(t)^{-1}\cdot \bc_R(t)\geq \bc(t)\ . \label{condition-feasible}
\end{equation}
	Here, of course, we only know that $\bD_R(\arat)^{-1}$ exists when $\det(\bD_R(\arat))\ne 0$.
	To express \eqref{condition-feasible} using polynomials, we employ the identity $\bD_R(\arat)^{-1}=\tfrac{\adjof{\bD_R(\arat)}}{\detof{\bD_R(\arat)}}$ whenever $\bD_R(\arat)$ is invertible (equivalently, whenever $\detof{\bD_R(\arat)}\ne 0$). Thus, the set of all $\arat$ with $\exists\by\colon \cS'(\arat,\by)$ can be defined by the following DNFLB $\Phi$:
  \begin{align}
  \bigvee_{\substack{R \subseteq [1,m+n]\\ \sizeof{R} = n}}~
    &\Big(
      \detof{\bD_R(\param)} > 0
      ~\wedge~ \bD(\param) \cdot \adjof{\bD_R(\param)} \cdot \bc_R(\param) \geq \detof{\bD_R(\param)} \cdot \bc(\param)
    \Big) \label{phi-qelim}
 \\
    &\vee \Big(
      \detof{\bD_R(\param)} < 0 
      ~\wedge~ \bD(\param) \cdot \adjof{\bD_R(\param)} \cdot \bc_R(\param) \leq \detof{\bD_R(\param)} \cdot \bc(\param)
    \Big)\ ,\notag  \end{align}
where indeed all conditions are of the form $p(\param)\ge 0$ or $p(\param)>0$ for some polynomials~$p$. 
	Note that here, we distinguish the cases $\detof{\bD_R(\param)}<0$ and $\detof{\bD_R(\param)}>0$ because moving a negative $\detof{\bD_R(\param)}$ to the other side of the inequality changes $\ge$ to $\le$. Moreover, note that (in contrast to \eqref{condition-feasible}) in the formulation \eqref{phi-qelim}, all terms are well-defined, independenly of whether the current choice of $R$ makes $\bD_R(x)$ invertible or not.
To be precise, we obtain the DNFLB by subtracting the right-hand sides of the inequalities from the left-hand sides and multiplying the result by $-1$ to invert the inequality where necessary.
The above form will suffice to give an estimate on the maximal degree and the maximal coefficient. 

It is now clear that the coefficients (resp.\ degrees) appearing in $\Phi$ are exponential (resp.\ polynomial) in the bitsize of $\mathcal{S}$. The precise bounds promised in the Theorem are straightforward to deduce from standard bounds on determinants, see \conferenceFull{the full version}{\cref{appendix-snls}} for details.
\end{proof}
\subsection{Root Separation}
To show \cref{exist-solutions-size}, it remains to be shown that any feasible DNFLB $\Phi(\param)$ has a solution that is
exponentially bounded.  The key observation is that if $r$ and $r'$ are
adjacent roots of a polynomial $p(\param)\in\Z[\param]$ and a constraint $p(\param)\ge 0$ or
$p(\param)>0$ is satisfied for some $\arat$ for $\param$ with $r<\arat<r'$,
then any number $\arat'$ in the open interval $(r,r')$ will
also satisfy the constraint: The polynomial does not change its sign between
$r$ and $r'$. Thus, we can think of $\R$ as being split into (i)~roots of $p$
and (ii)~intervals between roots of $p$ (and the infinite intervals below the smallest and above the largest root).
Then whether $\arat \in \Q$ satisfies $p(\param)\ge 0$ or $p(\param)>0$ only depends on
which of those parts of $\R$ the number $\arat$ belongs to. This remains true if we refine this
decomposition of $\R$ according to \emph{all} polynomials occurring in $\Phi$.

In order to construct rational numbers with small numerator and denominator in intervals $(r,r')$, we will rely on a Root Separation Theorem, saying that polynomial roots are not too close. More specifically, we use Rump's Bound~{\cite[Theorem 8.5.5]{Mishra93}}:
\begin{restatable}[Rump's Bound~{\cite[Theorem 8.5.5]{Mishra93}}]{theorem}{rump}\label{rump}
  Suppose $r,r'\in\R$ are distinct roots of a polynomial $\apoly(\param) \in
\Z[\param]$ with degree~$d\in\nat$.  Then $|r-r'| > (d^{d+1}(1+
\unarysizeof{\apoly(\param)})^{2d})^{-1}$. 
\end{restatable}

We will also use an elementary fact about rational roots of integral polynomials. It is known as the Rational Root Theorem or Integral Root Test~\cite[Chapter IV, Prop. 3.3]{Lang2002}:
\begin{lemma}[Rational Root Theorem~{\cite[Chapter IV, Prop. 3.3]{Lang2002}}]\label{rational-root-theorem}
	Let $p(x)=c_nx^n+\cdots+c_0\in\Z[x]$ be a polynomial. If $r=a/b$ is a root of $p$ with $a,b$ co-prime, then $a$ divides $c_0$ and $b$ divides $c_n$. In particular, $|a|,|b|\le \maxc(p)$.
\end{lemma}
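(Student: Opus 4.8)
The plan is the classical argument: substitute the candidate root, clear denominators, and exploit that $\gcd(a,b)=1$.

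First I would expand $p(x)=\sum_{i=0}^{n} c_i x^i$, plug in $x=a/b$, and multiply the equation $p(a/b)=0$ by $b^n$ to obtain the integer identity
\[
\sum_{i=0}^{n} c_i a^{i} b^{n-i} = 0 .
\]
From this I read off two divisibilities. Every summand with $i\ge 1$ is divisible by $a$, so rearranging gives $c_0 b^n = -a\cdot\sum_{i=1}^{n} c_i a^{i-1}b^{n-i}$, hence $a \mid c_0 b^n$. Symmetrically, every summand with $i\le n-1$ is divisible by $b$, so $c_n a^n = -b\cdot\sum_{i=0}^{n-1} c_i a^{i}b^{n-1-i}$, hence $b \mid c_n a^n$. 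Since $\gcd(a,b)=1$, also $\gcd(a,b^n)=1$ and $\gcd(b,a^n)=1$; therefore $a\mid c_0$ and $b\mid c_n$, which is the divisibility assertion.

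For the stated bound, note that $c_n\ne 0$ by the degree convention, so $b\mid c_n$ gives $|b|\le|c_n|\le\maxc(p)$. If $c_0\ne 0$ as well, then $a\mid c_0$ gives $|a|\le|c_0|\le\maxc(p)$ and we are done. The only subtlety is the degenerate case $c_0=0$: here I would factor $p(x)=x^k q(x)$ with $q(0)\ne 0$; the coefficients of $q$ are among those of $p$, so $\maxc(q)\le\maxc(p)$, and any nonzero root $a/b$ of $p$ is a root of $q$, so the already-proved case applied to $q$ yields $|a|\le\maxc(q)\le\maxc(p)$ — while the root $0$ corresponds to $a=0$ and is trivial. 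I do not expect any real obstacle; this edge case is the only place requiring a moment of care.
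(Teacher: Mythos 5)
The paper does not prove this lemma; it cites it as a standard fact from Lang's \emph{Algebra} and relies on it as a black box. Your argument is the classical clear-the-denominators proof, and it is correct: multiplying $p(a/b)=0$ by $b^n$, reducing modulo $a$ and modulo $b$, and then using $\gcd(a,b)=1$ to strip off $b^n$ resp.\ $a^n$ gives exactly the two divisibilities. Your extra care about the degenerate constant term $c_0=0$ is warranted, since $a\mid 0$ gives no bound on $|a|$ and the lemma statement itself glosses over this; factoring out $x^k$ and applying the already-established case to the cofactor $q$ (whose coefficients are a subset of those of $p$) is the right fix. The only case not covered is the zero polynomial, where the stated bound is simply false, but this is a vacuous degenerate case that plays no role in the paper's application (the polynomials in $\Phi$ are nonzero). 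In short, there is no in-paper proof to compare against, and your proof is the standard one and sound.
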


Finally, we need a standard bound on all real roots of a polynomial~\cite[Corollary 8.3.2]{Mishra93}. This is known as Cauchy's bound.
\begin{lemma}[Cauchy's Bound~{\cite[Corollary 8.3.2]{Mishra93}}]\label{cauchy-bound}
If $r\in\R$ is a root of a polynomial $p\in\Z[\param]$, then $|r|\le 1+\unarysizeof{p}$.
\end{lemma}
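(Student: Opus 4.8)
The plan is the classical proof of Cauchy's bound: dominate the leading term of $p$ by the sum of all lower-order terms via a geometric-series estimate, and then exploit that $p$ has \emph{integer} coefficients so that its leading coefficient has absolute value at least $1$.

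First I would dispose of the degenerate cases and fix notation. The statement is understood for nonzero $p$; if $\degof{p}=0$ then $p$ is a nonzero constant and has no real root, so there is nothing to prove. Hence assume $k:=\degof{p}\ge 1$ and write $p(\param)=c_k\param^k+c_{k-1}\param^{k-1}+\cdots+c_0$ with $c_0,\dots,c_k\in\Z$ and $c_k\neq 0$. Put $M:=\max_{0\le i\le k-1}|c_i|$; since the $c_i$ are integers and $c_k\neq 0$, we have $|c_k|\ge 1$.

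The core step is to show that no real number of absolute value exceeding $1+M$ can be a root. Let $r\in\R$ with $|r|>1+M$; then $|r|-1>M\ge 0$, so $|r|>1$ and $M/(|r|-1)<1$. Using the triangle inequality and summing the geometric series,
\[
  \Bigl|\sum_{i=0}^{k-1}c_i r^{\,i}\Bigr|\ \le\ M\sum_{i=0}^{k-1}|r|^{\,i}\ =\ M\cdot\frac{|r|^{k}-1}{|r|-1}\ <\ |r|^{k}-1\ \le\ |c_k r^{\,k}|,
\]
whence $|p(r)|\ \ge\ |c_k r^{\,k}|-\bigl|\sum_{i<k}c_i r^{\,i}\bigr|\ >\ |r|^{k}-(|r|^{k}-1)\ =\ 1\ >\ 0$, so $r$ is not a root. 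Consequently every root $r$ of $p$ satisfies $|r|\le 1+M$.

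Finally I would translate $M$ into the size measure of the lemma: recalling that $\unarysizeof{c_i}=|c_i|+1$, we get $M\le\sum_{i=0}^{k-1}|c_i|\le\sum_{i=0}^{k}\unarysizeof{c_i}=\unarysizeof{p}$, hence $|r|\le 1+M\le 1+\unarysizeof{p}$. I do not anticipate a genuine obstacle here; the only points needing a little care are keeping the inequalities strict in the geometric-series estimate (using $|r|-1>M$, not merely $\ge$) and the small but essential observation that integrality of the coefficients is precisely what lets us drop the usual division by $|c_k|$ from the bound.
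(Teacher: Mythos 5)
The paper states this as a cited textbook fact (Mishra, Corollary 8.3.2) and gives no proof, so there is nothing to compare against. Your derivation is correct and self-contained: the geometric-series domination argument is the standard proof of Cauchy's bound, and you correctly exploit integrality of the leading coefficient ($|c_k|\ge 1$) to drop the usual $1/|c_k|$ factor, then translate $\max_i|c_i|\le \unarysizeof{p}$ using the paper's definition $\unarysizeof{a}=|a|+1$ and $\unarysizeof{p}=\sum_i\unarysizeof{c_i}$. The edge cases (constant $p$, $M=0$) are handled; the strict inequalities go through because $|r|-1>M\ge 0$ gives $|r|>1$ and hence $|r|^k-1>0$.
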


Let $r_1<\cdots<r_k\in\R$ be all the real roots of polynomials occurring in $\Phi$.
Observe that if $\arat\in\Q$ satisfies $\Phi(\param)$ and $\arat\in(r_i,r_{i+1})$, then any
rational number in $(r_i,r_{i+1})$ must satisfy $\Phi(\param)$, because none of the
polynomials in $\Phi$ changes its sign between $r_i$ and $r_{i+1}$. This allows
us to bound a rational solution, by distinguishing the following cases:

\begin{enumerate}
\item Suppose $\Phi(\param)$ is satisfied by some rational root $r_i$ of $p$ in $\Phi$. Write $r_i=\tfrac{a}{b}$ with $a,b$ co-prime. Then the Rational Root Theorem (\cref{rational-root-theorem}) implies $|a|,|b|\le \maxc(p)$.
\item Suppose $\Phi(\param)$ has a rational solution in some interval $(r_i,r_{i+1})$. Since $r_i$ and $r_{i+1}$ are the roots of some polynomials $p,q$ in $\Phi(\param)$, as observed above, any rational number in $(r_i,r_{i+1})$ is also a solution to $\Phi(\param)$. Note that $r_i,r_{i+1}$ are roots of $p(\param)\cdot q(\param)$ and thus by \cref{rump}, we have $|r_i-r_{i+1}|>\tfrac{1}{b}$ for some $b\in\Z$ that is exponentially bounded. Thus, there is an integer $a\in (br_i,br_{i+1})$. Note that then $\tfrac{a}{b}$ belongs to the interval $(r_i,r_{i+1})$ and thus satisfies $\Phi(\param)$. Moreover, by the Cauchy Bound (\cref{cauchy-bound}), we also have an exponential bound $U\in\R$ on $|r_i|,|r_{i+1}|$ and thus on $|a|\le |b|U$.  
\item Suppose $\Phi(\param)$ has a rational solution $\arat$ outside of $[r_1,r_k]$. If $\arat > r_k$ then every rational number in $[r_k,\infty)$ is also a solution. Moreover, by \cref{cauchy-bound}, any rational number $\arat'$ with $\arat'>1+\unarysizeof{p}$ for every polynomial $p$ occurring in $\Phi$ can be chosen, e.g.\ $\arat' = 2+c$, where $c = \max\{\unarysizeof{p} \mid p$ polynomial in $\Phi \}$. On the other hand, if $\arat < r_1$, then $\arat' = - (2+c)$ is a solution by an analogous argument.
\end{enumerate}

This proves that any feasible $\Phi(\param)$ has a rational solution that is exponentially bounded, which is what we will use in our application to B\"{u}chi VASS. The precise bounds of \cref{exist-solutions-size} are shown in \conferenceFull{the full version}{\cref{appendix-exist-solutions-size}}.

\subparagraph{Proof sketch for \cref{solution-size}} For showing
\cref{solution-size}, we can now use the fact that if $t\in\Q$ admits a
solution $(t,\bs)$, then by our argument in the proof of \cref{Theorem:QElim},
$\bs^*:=\tfrac{\adj(D_R(t))}{\det(D_R(t))}\cdot \bc_R(t)$ is also a
solution, for some subset $R\subseteq[1,m+n]$. This means that we can apply the
bound on $t$ and the bounds on $\adj(\bD_R(x))$ and $\det(\bD_R(x))$ established in
the proof of \cref{Theorem:QElim} to bound the solution $\bs^*$.
We can ensure that all components of $\bc_{R}(t)$ have the same denominator by increasing the bit size at most $\degof{\cS}$-fold.
If we compute $\bs^{*}$ starting from such a vector, we get an $\bs^{*}$ where all components have the same denominator.
Since the
entries in $\bD_R$ all appear in $\cS$ and it is well-known that the
determinant has polynomial bit size in the bit size of a matrix, it follows
that there exists a solution $(t,\bs)$ of polynomial bit size.
The precise
bounds promised in \cref{solution-size} are derived in
\conferenceFull{the full version}{\cref{appendix-solution-size}}.

\newcommand{\ekmark}[1]{\color{purple}#1\color{black}~}
\newcommand{\loopvector}{\mathbf{o}}
\newcommand{\valuevec}{\bz}
\newcommand{\countersp}{J}
\newcommand{\reachof}[2]{\mathsf{reach}(#1.#2)}

\section{$\omega$-Regular Separability}\label{section:search}

We use the results from \cref{section:quant-elim} to prove \cref{Theorem:UpperBound}.
Note that the lower bounds in \cref{Theorem:UpperBound} easily follow from \cite{Baumann23}. First, that paper already shows $\PSPACE$-completeness of regular separability for one-dimensional B\"{u}chi VASS, which yields the $\PSPACE$ lower bound for fixed dimension $\ge 1$. In fact, their argument also yields $\EXPSPACE$-hardness in the general case: The full version~\cite[Appendix E.1]{Baumann23full} describes a simple reduction from intersection emptiness of one-dimensional VASS that accept by final state to regular separability of B\"{u}chi VASS, and the construction is the same in higher dimension.
This yields $\EXPSPACE$-hardness of regular separability of B\"{u}chi VASS, since intersection emptiness of VASS of arbitrary dimension that accept by final state is $\EXPSPACE$-hard \cite{Lipton76}.

It remains to prove the upper bounds in \cref{Theorem:UpperBound}.
To adequately formulate our proofs, we need to introduce additional VASS-related concepts.

\subparagraph{More on B\"uchi VASS}
Let $\avass=(Q, q_0, \Sigma, T, F)$ be a B\"uchi VASS.
Consider a (possibly infinite) run 
$\sigma=(p_0, \bmm_{0})\xrightarrow{w_1}(p_1, \bmm_1)\xrightarrow{w_2}\cdots$ of $\avass$.
The sequence of transitions in $\sigma$ is called a \emph{path} and has the form
$\apath=(p_0, w_1, \delta_1, p_1)(p_1, w_2, \delta_1, p_2)\ldots$.
If a path is finite and the source state of its first transition coincides with the target state of its last transition, then we call it a \emph{loop}. %
Since a run is uniquely determined by the start configuration and its sequence of transitions, we also denote a run by $\sigma=(p_0, \bmm_{0}).\apath$. 
If $\sigma$ is finite and $(p_\ell, \bmm_\ell)$ is its last configuration, then we sometimes write
$\sigma=(p_0, \bmm_{0}).\apath.(p_\ell, \bmm_\ell)$ to emphasize this.
The \emph{effect} $\vassbalanceof{\apath}$ of some finite path $\apath=(p_0, w_1, \delta_1, p_1)\ldots(p_{\ell-1}, w_\ell, \delta_\ell, p_\ell)$ is the sum of all induced counter changes, formally $\vassbalanceof{\apath} = \sum_{1\leq i\leq \ell}\delta_{i}$. 

Recall that configurations of the B\"uchi VASS $\avass$ are elements of the set $Q\times\mathbb{N}^d$.
We call the second component in a configuration the \emph{counter valuation}
and refer to the $i$-th entry as the \emph{value of counter $i$}.
For a configuration $\aconfig$ and a set of counters $I \subseteq [1,d]$ we also use $\aconfig[I]$ to denote the counter valuation of $\aconfig$ restricted to the counters in $I$.
A configuration $(q,\bmm)$ is \emph{coverable} in $\avass$ if there is a run starting in the initial configuration $(q_0,\bzero)$ and reaching a configuration $(q,\bmm')$ with $\bmm' \geq \bmm$.
Here, $\geq$ is defined component-wise.

Moreover we also consider a set of \emph{extended configurations} $\states\times\natomega^{d}$,
where $\natomega = \N \cup \{\omega\}$.
Here $\omega$ is used to represent a counter value that has become unbounded.
For an extended configuration $(q,\bmm)$ we use $\omega(q,\bmm) \subseteq [1,d]$ to denote the
set of counters valued $\omega$ in $\bmm$.
Comparisons and arithmetic operations between integer values and $\omega$ behave as expected,
treating $\omega$ as $\infty$.
Formally, $\omega \geq \omega$, $\omega \geq z$, and $\omega + z = \omega$ for all $z \in \Z$.
The \emph{size} of an extended configuration is $|(q,\bmm)| = \log_2|Q| + \binarysizeof{\bmm} + d$, where the extra bit per counter encodes whether it has value $\omega$ or not.
We also use the size of a unary encoding $\unarysizeof{(q,\bmm)} = |Q| + \unarysizeof{\bmm} + d$.

The transition relation is also lifted to extended configurations in the expected manner.
Formally, for $(q, \bmm)$, $(q',\bmm') \in Q \times \natomega^d$ we have
$(q, \bmm)\xrightarrow{w}(q', \bmm')$
if there is a transition $(q, w, \delta, q')\in T$ such that $\bmm'=\bmm+\delta$,
where addition between elements of $\natomega^d$ and $\Z^d$ is defined component-wise.
Furthermore, our definitions of runs, paths, loops, etc.\ carry over to \emph{extended} versions over
the set of extended configurations in a straightforward way.
More precisely, an \emph{extended run} is a sequence of extended configurations $\aconfig_1 \xrightarrow{w_1} \aconfig_2 \xrightarrow{w_2} \cdots$, an \emph{extended path} is the underlying sequence of transitions of an extended run, and an \emph{extended loop} is a finite extended path starting and ending in the same state.
To cover an extended configuration, intuitively, the $\omega$-counters need to become unbounded, and the remaining counters need to be covered.
Formally, an extended configuration $(q,\bmm)$ is \emph{coverable} in $\avass$ if for every $k \in \N$ there is a run starting in the initial configuration $(q_0,\bzero)$ and reaching a configuration $(q,\bmm_k) \in Q\times\N^d$ such that $\bmm_k[j] \geq k$ for every counter $j \in \omega(q,\bmm)$ and $\bmm_k[i] \geq \bmm[i]$ for every counter $i \in [1,d] \setminus \omega(q,\bmm)$.

Finally, we sometimes want to restrict only some counters of the VASS to stay non-negative.
In this case, we consider extended configurations in $\states\times\Z_{\omega}^{d}$,
where $\Z_{\omega} = \Z \cup \{\omega\}$.
We say an extended run $\sigma = \aconfig.\apath$ \emph{remains non-negative} on counters
$I \subseteq [1,d]$ if $\aconfig'[I] \subseteq \N^{|I|}$ for all extended configurations
$\aconfig'$ on $\sigma$.

\subparagraph{Dyck Language}
Towards the $\EXPSPACE$ upper bound of \cref{Theorem:UpperBound}, a first step is to reduce
the separability problem to a variant where one language is fixed to the Dyck language.
The  Dyck language $\dycklang{n}$ with $n$-letters is defined over the alphabet $\analph_n=\setcond{\alet_i, \abarlet_i}{i\in[1, n]}$. 
It contains those words $\aword$ where, for every prefix $\awordp$ with $\aword=\awordp.\awordpp$, we have at least as many letters $\alet_i$ as $\abarlet_i$. 
Thus, the letters behave like VASS counters and, indeed, the Dyck language is accepted by a single-state VASS $\dyckvass{n}$ with $n$ counters that increments the $i$-th counter upon seeing letter $\alet_i$ and decrements the $i$-th counter upon seeing $\abarlet_i$.
If a VASS is defined over the Dyck alphabet $\Sigma_n$, we also call it $n$-visible.
We will sometimes treat an $n$-visible VASS of dimension $d$ as a $(d+n)$-dimensional VASS, and refer to the additional $n$ counters as external.
Note that this amounts to forming the product with $\dyckvass{n}$.
Given a path $\apath$, we use $\exteffof{\apath}$ for the effect on the external counters in this product construction.
Moreover, we write $\allbalanceof{\apath}$ to denote the combined effect on both internal and external counters, i.e.\ the $(d+n)$-dimensional vector $(\vassbalanceof{\apath},\balanceof{\apath})$.

To avoid an exponential blow-up, our reduction uses a variant of VASS whose transitions are labeled by compressed words. %
Essentially, the reduction takes $\avass_1$ and $\avass_2$ and produces a VASS $\avass$ that is a product of $\avass_1$ and $\avass_2$. Moreover, it acts on its %
counters like $\avass_1$; the input labels
of $\avass$ correspond to the counter updates of $\avass_2$. Since the latter are binary-encoded, the new VASS will have binary encoded input words. Let us make this precise. A \emph{label-compressed VASS} (lcVASS) $\avass$ is a VASS, where the transitions are of the form $(p,a^m,\delta,q)\in Q\times\Sigma^*\times\Z^d\times Q$, where $a\in\Sigma$ and $m\in\N$ is given in binary. Thus, for an lcVASS $\avass$, we define its \emph{size} as $\sizeof{\avass} = |Q|+|F|+\sum_{(q,a^m,\delta,q')\in T} (\log_2(m)+\binarysizeof{\delta})$.
The reduction that fixes the Dyck language is captured by the following lemma.
\begin{lemma}[{\hspace{1sp}\cite[Lemma 3.4]{Baumann23}}]\label{Lemma:Transducers}
Given $\avass_1$ and $\avass_2$ over $\analph$, we can compute in time polynomial in $\sizeof{\avass_1}+\sizeof{\avass_2}$ an $n$-visible lcVASS $\avass$ %
so that $\langof{\avass_1}\separable\langof{\avass_2}$ if and only if $\langof{\avass}\separable\dycklang{n}$. Here, $n$ is the dimension of $\avass_2$. 
\end{lemma}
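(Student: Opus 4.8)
The plan is to prove \cref{Lemma:Transducers} by building the $n$-visible lcVASS $\avass$ as a synchronized product of $\avass_1$ and $\avass_2$, where $\avass$ simulates the internal counters of $\avass_1$ while emitting, as visible Dyck letters, the counter updates that $\avass_2$ performs. The conceptual point is that separating $\langof{\avass_1}$ from $\langof{\avass_2}$ by a regular language is the ``same'' problem as separating the behaviour of $\avass_1$ (now recorded on the visible alphabet) from the Dyck language, since membership in the Dyck language on the $\avass_2$-update alphabet is exactly the condition that those updates can be matched by a nonnegative counter run of $\avass_2$.

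First I would set up notation: let $\avass_2$ have dimension $n$, so its transitions carry updates $\delta\in\Z^n$. Over the Dyck alphabet $\Sigma_n=\setcond{\alet_i,\abarlet_i}{i\in[1,n]}$, encode a vector $\delta\in\Z^n$ as a word $\mathrm{enc}(\delta)$ that contains $\delta[i]$ copies of $\alet_i$ for each positive component and $|\delta[i]|$ copies of $\abarlet_i$ for each negative component (order within the block is irrelevant since they act on distinct counters); note $|\mathrm{enc}(\delta)|$ is exponential in $\binarysizeof{\delta}$, which is precisely why we work with label-\emph{compressed} transitions: a single block $\alet_i^{m}$ or $\abarlet_i^{m}$ with $m$ in binary. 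Then I would define $\avass$ to have state set $Q_1\times Q_2$, and for every pair of transitions $(p_1,w,\delta_1,p_1')\in T_1$ and $(p_2,w,\delta_2,p_2')\in T_2$ \emph{with the same label $w$}, put a (sequence of) transition(s) in $\avass$ from $(p_1,p_2)$ to $(p_1',p_2')$ that applies the internal update $\delta_1$ on the $d_1$ counters of $\avass_1$ and emits the compressed Dyck word $\mathrm{enc}(\delta_2)$; the Büchi acceptance condition of $\avass$ is inherited as $F_1\times F_2$ (or $F_1\times Q_2$, matching how $\avass_2$'s acceptance is treated in the separability reduction — I would follow exactly the convention of \cite{Baumann23}). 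The final states are the only subtle bookkeeping item. Breaking $\mathrm{enc}(\delta_2)$ into at most $n$ compressed blocks per transition keeps $\sizeof{\avass}$ polynomial in $\sizeof{\avass_1}+\sizeof{\avass_2}$, and the whole construction is clearly a polynomial-time computation.

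Next I would prove the equivalence $\langof{\avass_1}\separable\langof{\avass_2}\iff\langof{\avass}\separable\dycklang{n}$. For this, observe that an accepting run of $\avass$ projects to an accepting run of $\avass_1$ on the same input word $u\in\Sigma_1^\omega$ together with a ``shadow'' run of $\avass_2$ on $u$ that ignores the nonnegativity constraint on its $n$ counters; the emitted word $v\in\Sigma_n^\omega$ is the concatenation of $\mathrm{enc}(\delta_2)$ over that shadow run. A prefix of $v$ lies in the Dyck prefix-language iff the corresponding prefix of the $\avass_2$-run stays nonnegative, so $v\in\dycklang{n}$ iff the shadow run is an honest (nonnegative) run of $\avass_2$; hence $v\in\langof{\avass}\cap\dycklang{n}$ iff there is a common input word accepted by both $\avass_1$ and $\avass_2$. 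More is needed for separability than just emptiness, though, so the real content is to transport separators back and forth: given a regular separator $S$ of $\langof{\avass_1},\langof{\avass_2}$ over $\Sigma_1$, one builds from it a regular separator of $\langof{\avass},\dycklang{n}$ over $\Sigma_n$ by composing with the (rational) relation $u\mapsto v$ induced by the construction, and conversely; since regular languages are closed under the relevant rational transductions and these transductions are length-multiplying but ``effectively finite-state'' on the compressed encoding, regularity is preserved in both directions. I would spell this out using the transducer description already alluded to in the paper, essentially citing the argument of \cite[Lemma 3.4]{Baumann23} while making the product explicit.

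The main obstacle I anticipate is the separator-transport step, i.e.\ arguing that a regular separator on one side yields a regular separator on the other — containment $L_1\subseteq S$ and disjointness $S\cap L_2=\emptyset$ must both be preserved under the transduction, and one has to be careful that the transduction relating $\avass$-inputs to $\avass_1$-inputs (which ``forgets'' the $\avass_2$ label $w$) and the one relating $\avass$-outputs $v$ to Dyck words is set up so that the image/preimage of a regular language stays regular and the inclusion/disjointness conditions are not spoiled by nondeterminism in the transduction. A secondary technical point is getting the Büchi acceptance convention on the product right so that ``$v$ labels an accepting run of $\avass$'' corresponds exactly to ``some common word is accepted by $\avass_1$ and $\avass_2$'' with the intended final-state semantics. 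Both are routine once the transducer is written down carefully, and since the statement is quoted verbatim from \cite{Baumann23}, I would ultimately defer the detailed verification to that reference.
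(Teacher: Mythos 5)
Your reconstruction matches the construction the paper defers to in \cite{Baumann23} (originating in \cite{DBLP:conf/lics/CzerwinskiZ20}): a synchronized product that keeps $\avass_1$'s counters internal and emits $\avass_2$'s binary-encoded counter updates as compressed Dyck blocks, with the equivalence of the two separability problems transported through the induced rational transduction. The paper itself gives no proof and only cites \cite[Lemma 3.4]{Baumann23}, adding the observation that the construction is polynomial; you have described that construction accurately and correctly identified the two delicate points (the Büchi acceptance condition on the product, where $F_1\times F_2$ or $F_1\times Q_2$ would need to be replaced by the standard Büchi-intersection gadget, and the separator-transport argument), both of which you explicitly defer to the reference, as the paper does.
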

The polynomial time bound is not mentioned in~\cite{Baumann23}, but the simple construction they use (from~\cite{DBLP:conf/lics/CzerwinskiZ20}) clearly implies this bound.
The latter separability problem $\langof{\avass}\separable\langof{\dycklang{n}}$ has been studied closely in~\cite{Baumann23} as well. 
They first show that $\avass$ can be transformed so as to make the language $\langof{\avass}$ pumpable. 
For the resulting VASS, they show that $\langof{\avass}\separable\langof{\dycklang{n}}$ holds if and only if the so called Karp-Miller graph of $\KM{\avass}$ does not contain an inseparability witness.  
Unfortunately, the transformation required for pumpability involves another Karp-Miller graph construction, and therefore does not fit into the space bound we aim for (said graph can be of Ackermannian size in the worst case).
Instead, we reformulate the witness.
\begin{definition}\label{Def:bloom}
  Let $\avass$ be an $n$-visible $d$-dimensional VASS.
  An \emph{inseparability bloom} for~$\avass$ is a tuple 
  $\abloom = (\finalstate, \counters, \alpha, \beta, \gamma)$ 
  with $\finalstate$ a final state, loops $\alpha$, $\beta$, $\gamma$ starting and ending in $\finalstate$, and a partition of the counters $\omegacounters\discup\counters = [1, d+n]$ so that 
  \begin{enumerate}[(i)]
    \item for all $\rho=\alpha, \beta, \gamma$, we have $\coordacc{\allbalanceof{\rho}}{\counters}\geq 0$, 
    \item $\vassbalanceof{\alpha}+\vassbalanceof{\beta}+\vassbalanceof{\gamma}\geq 0$
    \item $\balanceof{\alpha}+\balanceof{\beta}\geq 0$, 
    \item there is a $\arat\in\rat$ with $\balanceof{\alpha}+\balanceof{\beta}+\balanceof{\gamma}= \arat\cdot\balanceof{\alpha}$.
  \end{enumerate}
  The \emph{size} of the bloom is $\sizeof{\abloom}=\log_2\sizeof{\states}+\sizeof{\counters}+\sizeof{\alpha}+\sizeof{\beta}+\sizeof{\gamma}$.  
  
  A \emph{stem for $\abloom$} is an extended run $\aconfig.\sigma$ of the product VASS $\avass \times \dyckvass{n}$ with the following properties: It ends in an extended configuration $\aconfig' = (q_f,\bmm)$ for some $\bmm \in \nat_\omega^{d+n}$ with $\omegacounters\subseteq\omegaof{\aconfig'}$, and the counters in $\counters\setminus \omegaof{\aconfig}$ remain non-negative when executing $\sigma$ from $\aconfig$ resp.\ $\alpha$, $\beta$, and $\gamma$ from $\aconfig'$. 

  An \emph{inseparability flower} $\aflower=(\aconfig.\sigma, \abloom)$ consists of an inseparability bloom and a suitable stem. The size is $\sizeof{\aflower}=\unarysizeof{\aconfig}+\sizeof{\sigma}+\sizeof{\abloom}$. 
  The flower is \emph{coverable} if the extended configuration $\aconfig$ is coverable in the product VASS $\avass \times \dyckvass{n}$.
\end{definition}
The following can be derived from the results in \cite{Baumann23},
refer to \conferenceFull{the full version}{\cref{appendix:search}} for the details.
\begin{lemma}\label{Lemma:InsepCharacterization}
Let $\avass$ be $n$-visible. 
We have $\langof{\avass}\inseparable \dycklang{n}$ if and only if some inseparability flower is coverable. 
\end{lemma}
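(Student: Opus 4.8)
The plan is to prove \cref{Lemma:InsepCharacterization} as a bridge between the known characterization from \cite{Baumann23} and the reformulated witness of \cref{Def:bloom}, establishing both directions of the equivalence.

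\textbf{Recalling the baseline.} First I would recall precisely what \cite{Baumann23} gives: after a (possibly Ackermann-size) pumpability transformation of $\avass$, one has $\langof{\avass}\separable\dycklang{n}$ iff the Karp--Miller graph $\KM{\avass}$ contains no inseparability witness, where such a witness is a structure in $\KM{\avass}$ consisting of a final $\omega$-configuration reachable from the initial one, together with three cyclic runs $\alpha,\beta,\gamma$ at that configuration satisfying exactly the linear inequalities (i)--(iii) of \cref{Def:bloom} (on the internal and external counters, with the $\omega$-counters being those that the Karp--Miller procedure has pumped to $\omega$) plus the non-linear scalar-multiple condition (iv). The point of \cref{Lemma:InsepCharacterization} is to replay this characterization \emph{without} performing the pumpability transformation, so that the witness is instead presented as an inseparability \emph{flower}: a bloom $(\finalstate,\counters,\alpha,\beta,\gamma)$ whose conditions are precisely (i)--(iv), plus a stem $\aconfig.\sigma$ that witnesses coverability of the extended configuration on which the loops live, keeping the $\counters$-counters non-negative along $\sigma$ and along $\alpha,\beta,\gamma$.

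\textbf{From a flower to inseparability.} For the "if" direction, assume a coverable inseparability flower $\aflower=(\aconfig.\sigma,\abloom)$ exists. I would reconstruct an infinite run of $\avass$ (or rather of $\avass\times\dyckvass{n}$, reading off only the $\avass$-component for the language) that stays non-negative on all counters and visits $\finalstate$ infinitely often, while the external Dyck counters never go negative --- this shows a word in $\langof{\avass}\cap\dycklang{n}$ that cannot be separated off. The construction is the standard "pumping around a flower": using coverability of $\aconfig$ we first reach arbitrarily high values on the $\omega$-counters $\omegacounters$ via the prefix realizing coverability, then traverse $\sigma$ (legal because its $\counters$-counters stay non-negative and its $\omegacounters$-counters were pre-pumped high enough), arriving at the extended configuration with $\omegacounters\subseteq\omegaof{\aconfig'}$. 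Then one iterates the loops in a carefully chosen schedule: condition (i) makes each of $\alpha,\beta,\gamma$ individually non-decreasing on $\counters$; condition (iv) $\balanceof{\alpha}+\balanceof{\beta}+\balanceof{\gamma}=\arat\cdot\balanceof{\alpha}$ (together with (ii) and (iii)) is exactly what is needed so that a suitable combination of iterations of $\alpha$, then $\beta$, then $\gamma$, keeps \emph{all} counters --- internal and external --- non-negative forever while returning unboundedly often to $\finalstate$; the $\omegacounters$-counters are fed by the pre-pumped reservoir, the $\counters$-counters are maintained by the non-negativity conditions. This is essentially the argument of \cite{Baumann23}; I would either cite it or reproduce the scheduling computation. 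Since this run is accepting and its label lies in $\dycklang{n}$, no regular (indeed no) separator exists, i.e.\ $\langof{\avass}\inseparable\dycklang{n}$.

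\textbf{From inseparability to a flower.} For the "only if" direction, assume $\langof{\avass}\inseparable\dycklang{n}$. Apply the pumpability transformation of \cite{Baumann23} to get $\avass'$ with $\langof{\avass}\separable\dycklang{n}\iff\langof{\avass'}\separable\dycklang{n}$, so $\langof{\avass'}\inseparable\dycklang{n}$, hence by their theorem $\KM{\avass'}$ contains an inseparability witness: a final $\omega$-configuration with loops $\alpha',\beta',\gamma'$ and the inequalities plus (iv). Now I would pull this witness back to $\avass$ itself. The Karp--Miller loops correspond to actual cyclic paths in $\avass'$ (respectively $\avass$, since the pumpability transformation does not change the underlying paths in a way that affects the effect vectors --- this is the key compatibility fact I must check), giving loops $\alpha,\beta,\gamma$ at a final state $\finalstate$ in $\avass$ with the same effect vectors, hence satisfying (i)--(iv); let $\omegacounters$ be the $\omega$-counters of the Karp--Miller node and $\counters$ its complement. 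Coverability of the node in $\KM{\avass'}$ translates, via the standard Karp--Miller soundness, into the existence of a stem: an extended run $\aconfig.\sigma$ whose prefix covers $\aconfig$ and which reaches $\finalstate$ with $\omegacounters\subseteq\omegaof{\aconfig'}$ while keeping the $\counters$-counters non-negative; the non-negativity of the loops on $\counters$ is inherited from the fact that in the Karp--Miller graph the finite counters never drop below the recorded values along the cycles. This assembles an inseparability flower that is coverable.

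\textbf{Main obstacle.} The delicate point is the passage across the pumpability transformation: I must argue that the bloom data (the effect vectors $\vassbalanceof{\cdot}$ and $\balanceof{\cdot}$ of the loops, the partition into $\omegacounters$ and $\counters$, and the coverability of the base configuration) extracted from the Karp--Miller graph of the \emph{transformed} VASS $\avass'$ can be transported back to the \emph{original} $\avass$ --- equivalently, that the flower formulation is invariant under, or at least compatible with, the transformation. This requires unpacking exactly what \cite{Baumann23}'s transformation does (it only adds "reset-like" or "pumping" structure and does not alter reachable effects of cyclic runs relative to coverable configurations), and is the part where I would need to consult their construction in detail rather than treat it as a black box; the two pumping/scheduling arguments in the other direction are routine in comparison. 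Accordingly I would state in the proof that the details of this transport, together with the scheduling computation, are deferred to \conferenceFull{the full version}{\cref{appendix:search}}.
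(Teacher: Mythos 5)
Your two directions are handled with very different degrees of fidelity, and one of them has a real gap.

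For the direction ``inseparability implies a coverable flower'' (your second paragraph), your plan tracks the paper's argument closely: apply the pumpability transformation to get $\avass_{\mathsf{pump}}$, invoke the Karp--Miller witness characterization from \cite{Baumann23}, and transport the witness back to $\avass$, using that KM nodes of $\avass \times \dyckvass{n}$ are by construction coverable and that the loops live at a state whose $\omega$-set gives $\omegacounters$. The ``main obstacle'' you flag (compatibility across the transformation) is a fair thing to flag, and the paper's appendix spends most of its effort exactly there; you should be more careful that $\avass_{\mathsf{pump}}$'s states are \emph{themselves} KM-states of $\avass\times\dyckvass{n}$, so there are two extraction steps, but the plan is sound.

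The direction ``coverable flower implies inseparability'' is where the proposal goes wrong. You propose to build an accepting run whose label lies in $\dycklang{n}$, concluding that $\langof{\avass}\cap\dycklang{n}\ne\emptyset$ and hence no separator can exist. But the flower conditions do \emph{not} guarantee a common word. Condition~(i) controls non-negativity only on the counters in $\counters$, and the $\omegacounters$-counters are only boosted by a \emph{finite} prefix, so they can survive finitely many loop iterations but not $\omega$-many if the loops have a net negative drift there. Condition~(iv), $\balanceof{\alpha}+\balanceof{\beta}+\balanceof{\gamma}=t\cdot\balanceof{\alpha}$ with $t\in\Q$ (possibly with $t<1$, $t=0$, or $t<0$), precisely allows a net negative drift on external counters — that is the entire raison d'\^etre of inseparability in B\"uchi VASS, where the languages can be disjoint yet inseparable. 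The paper's actual argument for this direction is indirect: assume a finite union of basic separators $\bigcup_i P_{i,k}\cup\bigcup_{\aweight}S_{\aweight,k}$ covers $\langof{\avass}$, construct the run $(q_0,\zerovec).\apath.(\alpha^{k+1}\beta^{k+1}\gamma^{k+1})^\omega$ (with the prefix $\apath$ covering $\aconfig$ with padding $3(k+1)y$ on the $\omega$-counters), and show its label lies in \emph{none} of the basic separators — the boosted $\omega$-counters rule out the $P_{i,k}$, and conditions (ii)--(iv) rule out the $S_{\aweight,k}$ via the scalar-multiple structure of the external effects. Nowhere is the word claimed to be Dyck. You would need to replace your non-emptiness argument with this avoidance-of-basic-separators argument; as written, the ``if'' direction would only prove the (strictly weaker, and not always applicable) statement that a flower implies a common word.
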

Our main result is the following. 
Note: the unary counter encoding strengthens the bound. 
\begin{theorem}\label{Theorem:Bound}
If an inseparability flower is coverable in an $n$-visible lcVASS $\avass$ of dimension $d$, then there is one of size at most $\sizeof{\aflower}=2^{\sizeof{\avass}^{\bigoof{(d+n)^2}}}$.
\end{theorem}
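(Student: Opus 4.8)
The overall strategy is a Rackoff-style induction on the dimension, following the line of \cite{AtigH11,Habermehl97}, but with two twists: we must track which counters are $\omega$-counters (the set $\omegacounters$), and the non-linear constraint~(iv) in \cref{Def:bloom} must be handled via the SNLS bound from \cref{solution-size}. First I would fix a coverable inseparability flower $\aflower = (\aconfig.\sigma,\abloom)$ with $\abloom = (\finalstate,\counters,\alpha,\beta,\gamma)$ and argue in two phases: (a) bound the size of the bloom $\abloom$ itself, i.e.\ the lengths of the loops $\alpha,\beta,\gamma$; (b) bound the size of the stem $\aconfig.\sigma$ and the covering run witnessing coverability of $\aconfig$.

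For phase (a), the key idea is that, once we fix the target state $\finalstate$, the partition $\omegacounters\discup\counters$, and which of the three loops we are building, the constraints (i)--(iv) on $(\vassbalanceof\alpha,\balanceof\alpha)$, $(\vassbalanceof\beta,\balanceof\beta)$, $(\vassbalanceof\gamma,\balanceof\gamma)$ together with the requirement that each $\rho$ be realizable as a loop at $\finalstate$ in the product VASS $\avass\times\dyckvass n$ (staying non-negative only on the counters in $\counters\setminus\omegaof\aconfig$) can be encoded as follows. The realizability of a loop with a prescribed effect vector, subject to staying non-negative on a subset of counters, is exactly a coverability-type question; using Rackoff's technique one shows that if such loops exist at all, then there are loops of length at most doubly exponential in $\sizeof\avass$ (singly exponential in fixed dimension) whose effect vectors can be taken arbitrarily among the realizable ones \emph{up to} the linear constraints. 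But constraint (iv) is not linear. Here is where SNLS enters: write $\bu = \balanceof\alpha$ and let $x$ be the scalar from~(iv), so the combined external effect is $x\cdot\bu$. The set of realizable effect triples, intersected with the polyhedral constraints (i)--(iii), is (a projection of) a set defined by a system $\bA\cdot\by\ge\bb$ where $\by$ collects the Parikh images / counter coordinates of the three loops; adding constraint~(iv) amounts to substituting $\balanceof\gamma = x\cdot\bu - \balanceof\beta$ and requiring $x$ to multiply the $\bu$-block. After rewriting, this is precisely an SNLS $\bA(x)\cdot\by\ge\bb(x)$ of dimension polynomial in $\sizeof\avass$, degree $O(1)$ and coefficients exponential in $\sizeof\avass$ (because the Rackoff bound on loop lengths enters the coefficients of $\bb$). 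Applying \cref{solution-size} yields a solution $(x,\by)$ of size exponential in $\sizeof\avass$, and hence loops $\alpha,\beta,\gamma$ of the promised size $2^{\sizeof\avass^{O((d+n)^2)}}$.

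For phase (b), one bounds the stem $\sigma$ and the covering run. Once $\abloom$ is fixed (and small), the requirement on the stem is: reach an extended configuration $\aconfig' = (\finalstate,\bmm)$ with $\omegacounters\subseteq\omegaof{\aconfig'}$, staying non-negative on $\counters\setminus\omegaof\aconfig$ while executing $\sigma$ and while the loops can be fired from $\aconfig'$, and such that $\aconfig$ is coverable. The $\omega$-counters must be driven unboundedly high while the $\counters$-counters are merely covered at the (now fixed, small) threshold needed to run $\alpha,\beta,\gamma$; this is exactly the \emph{selective unboundedness} problem, so I would invoke Demri's result~\cite[Theorem 4.6]{Demri13} to get a witnessing run of length doubly exponential in $\sizeof\avass$ (singly exponential in fixed dimension). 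Concatenating, the whole flower has size $2^{\sizeof\avass^{O((d+n)^2)}}$. The main obstacle is engineering phase (a) so that the non-linear constraint genuinely becomes an SNLS of \emph{bounded degree} with only \emph{exponential} coefficients: one must be careful that the Rackoff pumping argument, which introduces large (but singly-exponential) effect values, is carried out \emph{before} forming the SNLS so that these values appear in $\bb(x)$ as constants rather than being re-quantified; and one must ensure that the $\omega$-counters are treated uniformly in the induction (they may change when the induction hypothesis is applied, unlike in \cite{AtigH11}), which is handled by carrying the set $\omegacounters$ as part of the inductive data. Combining the exponential SNLS bound with the doubly-exponential Rackoff/Demri bounds on run lengths gives the stated bound, and tracking the dependence on $d+n$ separately (linear in the exponent of the Rackoff bound, hence quadratic overall after the SNLS degree $O((d+n)^2)$ contribution) yields the $\PSPACE$ refinement in fixed dimension.
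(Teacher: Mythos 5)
Your high-level strategy — Rackoff-style induction on the non-negative counters, combined with the SNLS small-solution bound to handle constraint~(iv) — is the same as the paper's. You also correctly flag two genuine subtleties: that the $\omega$-counter set may change when the induction hypothesis is applied, and that the SNLS must have coefficients (not just variables) carrying the effect data so that the degree stays bounded. That said, there are concrete gaps that would break the proof as sketched.

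\emph{Phase~(b) is aimed at the wrong target.} The size of a flower is $\sizeof{\aflower}=\unarysizeof{\aconfig}+\sizeof{\sigma}+\sizeof{\abloom}$; the covering run is not part of the flower, so invoking Demri's selective unboundedness theorem to bound the covering run contributes nothing to Theorem~\ref{Theorem:Bound}. In the paper, Demri's theorem appears only in the decision procedure (the coverability \emph{check}), not in the flower bound. Moreover, even the stem $\sigma$ is not bounded by a selective-unboundedness argument: the Rackoff-style induction bounds $\sizeof{\sigma}+\sizeof{\abloom}$ jointly, via a quantity $\flowerbound=\max_\aconfig\min\{\sizeof{\sigma}+\sizeof{\abloom}\}$ and a recurrence $\rackfunof{i+1}\le(2^{\sizeof{\avass}}\cdot\rackfunof{i})^{O((d+n)^6)}$. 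Splitting the loop bound and the stem bound into two sequential phases risks a circular dependency (realizability of the loops from $\aconfig'$ depends on $\aconfig'$, which depends on the stem, which depends on how large the loops are); the paper's joint induction avoids this.

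\emph{You do not bound $\unarysizeof{\aconfig}$, and the fix is simpler than what you propose.} Since each transition changes a counter by at most $2^{\sizeof{\avass}}$ and the bounded flower has total run length at most $\flowerbound$, any counter value above $\flowerbound\cdot 2^{\sizeof{\avass}}$ in $\aconfig$ can be truncated to that value without breaking non-negativity or coverability (the truncated $\aconfig'\le\aconfig$ is coverable whenever $\aconfig$ is). No appeal to an external run-length theorem is needed.

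\emph{Phase~(a) is missing the decomposition that actually makes the SNLS well-defined.} You suggest the SNLS variables are ``Parikh images/counter coordinates of the three loops'' and that its dimension is polynomial. In the paper, the variables are multiplicities of \emph{irreducible $D$-loops} (loops with zero effect on a distinguished counter set $D$), and there are exponentially — in later induction steps, doubly-exponentially — many of them, so $\colof{\cS}$ is not polynomial. What is polynomial is the number of \emph{rows}, $\rowof{\cS}=O(d+n)$, which is why Theorem~\ref{solution-size} (whose exponent depends on $\rowof{\cS}^4\cdot\degof{\cS}^2$, while $\colof{\cS}$ appears only in the base) still gives a usable bound. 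This is not a cosmetic detail: without the irreducible-$D$-loop decomposition there is no reason the SNLS solution can be turned back into an actual run that stays non-negative on $D$, since gluing arbitrary effect vectors does not preserve feasibility. Finally, the $(d+n)^2$ in the exponent comes from unrolling the Rackoff recurrence (a factor $(d+n)$ from the depth, times $(d+n)$ from repeated exponentiation, after $\log$); the degree of the SNLS is a constant ($\degof{\cS}=1$), not $O((d+n)^2)$, so your accounting at the end does not track.
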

\subparagraph{Main algortihm}
We now have all ingredients to formulate the algorithm that proves the upper bounds in Theorem~\ref{Theorem:UpperBound}. 
We first describe the $\EXPSPACE$ upper bound. Given $\avass_1$ and $\avass_2$ whose languages we wish to separate, we first compute the lcVASS $\avass$ using Lemma~\ref{Lemma:Transducers}. 
This takes poly time. 
The task is to check $\langof{\avass}\separable\dycklang{n}$, where $n$ is the dimension of $\avass_2$. 
Using Lemma~\ref{Lemma:InsepCharacterization}, we have to find an inseparability flower for $\avass$ that is coverable. 
Theorem~\ref{Theorem:Bound} bounds the size of the flowers we have to consider. 
We thus use non-determinism to find a flower of bounded size followed by a somewhat involved coverability check. 
Savitch's theorem~\cite{Savitch70} turns the non-deterministic algorithm into a deterministic one. 

We detect a flower of bounded size as follows. 
We first guess the final state $\finalstate\in\finalstates$ and the partitioning of the counters $\counters\discup\omegacounters$. 
With this information, we can guess the stem.

Towards obtaining a suitable stem, we start by guessing an extended configuration~$\aconfig$, whose non-$\omega$-entries are at most doubly exponentially large. 
We can store such configurations in exponential space.
If $\omegacounters\subseteq\omegaof{\aconfig}$ fails, we abort. 
We now guess a path $\sigma$ of doubly exponential length from $\aconfig$ to a configuration $\aconfig'$. 
As we proceed, we store the length of the path, which only needs exponential space. 
We abort, if one of the following happens while guessing $\sigma$: a counter from $\counters\setminus\omegaof{\aconfig}$ becomes negative, $\sigma$ becomes too long, or the last state on $\sigma$ is different from $\finalstate$.
If we have not aborted until now, we have determined $\aconfig.\sigma.\aconfig'$ that may serve as a stem for a bloom with final state $\finalstate$ and partition $\counters\discup\omegacounters$. 

Given the stem, we can finish the construction of the bloom
by guessing the cycles $\rho=\alpha, \beta, \gamma$. 
The reason we proceed in this order is the following. 
The cycles are too long to be stored in exponential space. 
Instead, we check the non-negativity required by a stem on-the-fly, while constructing the cycles. 
To do so, we need the configuration~$\aconfig'$, which we can store upon finishing the guess of $\sigma$ above.
While guessing the cycles, we store their length and the numbers $\inteffof{\rho}$ and $\exteffof{\rho}$. 
It is readily checked that these numbers are bounded by
\begin{align*}
2^{\sizeof{\avass}}\cdot 2^{\sizeof{\avass}^{\bigoof{(d+n)^2}}}= 2^{\sizeof{\avass}^{\bigoof{(d+n)^2}}}\ .
\end{align*}
This means we can store them in exponential space.
We abort, if one of the following applies: an intermediate valuation becomes negative on $\counters\setminus\omegaof{\aconfig}$, the path becomes too long, or the last state is different from $\finalstate$. 
We compute the operations and comparisons required by (i) to (iv) in \cref{Def:bloom}.
For (iv), we start with counter $d+1$ and store the quotient $\frac{\coordacc{\balanceof{\alpha}}{d+1}+\coordacc{\balanceof{\beta}}{d+1}+\coordacc{\balanceof{\gamma}}{d+1}}{\coordacc{\balanceof{\alpha}}{d+1}}$ as the rational number $\arat$. 
As the numerator and denominator will be at most doubly exponential, we can store the number in exponential space. 
For the remaining entries, we only perform the required comparison. 
As noted above, they can be executed in polynomial time.
If a comparison fails, we reject. 
If we have not rejected up to now, we have determined an inseparability flower while using space $\sizeof{\avass}^{\bigoof{(d+n)^2}} \leq 2^{\sizeof{\avass} \cdot \bigoof{(d+n)^2}}=2^{\mathit{poly}(\sizeof{\avass_1}+\sizeof{\avass_2})}$. 

It remains to check whether this flower is coverable. 
There are two challenges.
First, since $\aconfig$ is extended, we have a combination of a simultaneous unboundedness and a coverability problem.  
Second, the non-$\omega$-entries in $\aconfig$ may be doubly exponentially large.  
We reduce the problem to simultaneous unboundedness, which is in $\EXPSPACE$ as shown by Demri~\cite[Theorem 4.6(I)]{Demri13}.  
The reduction uses a simple gadget that subtracts the counter valuation to be covered and, if succcessful, makes a new target counter $j$ unbounded.
In the end we check simultaneous unboundedness of $\omega(\aconfig) \cup \{j\}$.
To handle the large values, we utilize Lipton's construction~\cite{Lipton76},
which allows a VASS to simulate $\EXPSPACE$-computations.
As a remark, simultaneous unboundedness cannot be expressed by a polynomial-sized formula in Yen's logic~\cite{Yen92}, a fact that was first observed in \cite{Demri13}, which means we cannot just invoke the bounds in~\cite{AtigH11}. 

For the $\PSPACE$ upper bound, we merely need to observe that in fixed dimension, all our bounds on counter values become singly exponential. Moreover, for the gadget that subtracts counter values, we do not need the Lipton construction, as we can subtract exponentially bounded values directly using transitions. Finally, in fixed dimension, the simultaneous unboundedness check is also possible in $\PSPACE$, as shown by Demri~\cite[Theorem 4.6(II)]{Demri13}.

The remainder of the paper proves Theorem~\ref{Theorem:Bound}. 
SNLS help us deal with Constraint~(iv).

\newcommand{\abound}{\mathit{b}}
\newcommand{\largebound}{\mathit{u}_{\mathit{num}}}
\newcommand{\largeboundp}{\mathit{v}}
\newcommand{\distinguished}{\mathit{D}}
\newcommand{\dloopboundof}[2]{\mathit{u}_{\mathit{len}}}
\newcommand{\dloopbounddef}{\dloopboundof{\abound}{\rackcounter}}
\newcommand{\effectof}[1]{\allbalanceof{#1}}
\newcommand{\parikhof}[1]{\Psi(#1)}
\newcommand{\rhobase}{\rho_{0}}

\newcommand{\gzc}[1]{\textcolor{red}{#1}}
\section{Proof of Theorem~\ref{Theorem:Bound}}\label{section:witness}
In this section we fix an $n$-visible lcVASS $\avass=(Q, q_0, \Sigma_n, T, F)$.
Note that since $\avass$ is label-compressed, the effect on the external counters $\exteffof{\rho}$ for a path $\rho$ is also compressed. This matches the effect $\inteffof{\rho}$, which is anyway encoded in binary.

The proof is Rackoff-like, and we explain the analogy as we proceed. 
Like in Rackoff's upper bound for coverability \cite{DBLP:journals/tcs/Rackoff78}, we reason over all (extended) configurations.
Unlike Rackoff, however, we do not look at short covering sequences from a given configuration, but rather at small flowers rooted in said configuration. 
To bound the size while maximizing over all configurations, we measure each flower's size without considering the configuration it is rooted in.  
We therefore define the \emph{flower bound}
\begin{align*}
\flowerbound\ = \ 
\max_{\aconfig}\min \setcond{\sizeof{\sigma}+\sizeof{\abloom}}{(\aconfig.\sigma, \abloom)\text{ is an inseparability flower}}\ .   
\end{align*}
\Cref{Theorem:Bound} is an immediate consequence of the following. 
\begin{lemma}\label{Lemma:FlowerBound}
$\flowerbound \leq 2^{\sizeof{\avass}^{\bigoof{(d+n)^2}}}$. 
\end{lemma}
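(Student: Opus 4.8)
The plan is to prove \cref{Lemma:FlowerBound} by a Rackoff-style induction on the number of counters that are \emph{not} forced to stay non-negative, analogous to the classical argument of \cite{DBLP:journals/tcs/Rackoff78} and its adaptations in \cite{Habermehl97,AtigH11}, but carried out in a way that lets \cref{solution-size} absorb the non-linear Constraint~(iv) of \cref{Def:bloom}. Concretely, for a set of counters $J \subseteq [1,d+n]$ I would define a relativized flower bound $\flowerbound(J)$ by maximizing over all configurations $\aconfig$ the minimum size of an inseparability-flower-like object $(\aconfig.\sigma,\abloom)$ in which only the counters in $J$ are required to remain non-negative along $\sigma$ and along $\alpha,\beta,\gamma$ (so $\flowerbound = \flowerbound(\counters\setminus\omegaof{\aconfig})$, and the base case $J=\emptyset$ has no non-negativity constraints). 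The induction then bounds $\flowerbound(J)$ in terms of $\flowerbound(J')$ for $J'$ obtained by dropping one counter, with the multiplicative blow-up at each step being singly exponential in $|\avass|$, so that after at most $d+n$ steps one reaches the stated doubly-exponential bound $2^{\sizeof{\avass}^{\bigoof{(d+n)^2}}}$.

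The base case $J=\emptyset$ is where SNLS enters. With no non-negativity restrictions, a stem and the three loops $\alpha,\beta,\gamma$ are just paths in the underlying graph of $\avass$ (ignoring the $\N$-constraint), so their existence and their effects $\inteffof{\rho},\exteffof{\rho}$ are governed purely by Parikh images / Euler-condition flows through the transition graph. I would encode the search for a bloom $(\finalstate,\counters,\alpha,\beta,\gamma)$ satisfying (i)--(iv) as a feasibility question: the variables are the transition-multiplicities of the three loops (linear), plus the scalar $\arat$ from (iv), which is the single non-linear variable $\param$; conditions (i), (ii), (iii) and the connectivity/Euler constraints are linear in the loop multiplicities, while (iv), $\balanceof{\alpha}+\balanceof{\beta}+\balanceof{\gamma} = \param\cdot\balanceof{\alpha}$, is exactly of the form $\bA(\param)\cdot\by \ge \bb(\param)$ once $\balanceof{\rho}$ is written as a linear form in $\by$. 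This is an SNLS whose row/column dimension and coefficient size are polynomial in $|\avass|$, so by \cref{solution-size} there is a solution with $\unarysizeof{\arat},\unarysizeof{\by} \le 2^{|\avass|^{O((d+n)^2)}}$, which yields loops of at most doubly exponential length; a standard argument converts a small flow into a genuine short loop. (Some care is needed because Constraint~(iv) is written with $\balanceof{\alpha}$ in the ``denominator''; one handles the cases $\balanceof{\alpha}=\bzero$ versus $\balanceof{\alpha}\neq\bzero$ separately, and in the latter picks a coordinate witnessing non-zeroness, exactly as in the main algorithm's handling of (iv).)

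For the inductive step, I would follow the Rackoff template: given $J$ and a counter $c\in J$, consider an optimal flower where only $J$-counters stay non-negative. Split its stem at the first point where counter $c$ exceeds the threshold $\flowerbound(J\setminus\{c\})$ (suitably scaled by the maximal absolute transition effect, which is $\le 2^{|\avass|}$). Before that point all $J$-counters are bounded by that threshold, so the relevant configuration lies in a space of size $(\flowerbound(J\setminus\{c\})\cdot 2^{|\avass|})^{O(d+n)}$ and we can short-cut any repeated configuration, bounding the length of this prefix by that quantity. After that point, counter $c$ is ``high enough'' that it can never go negative along a continuation of bounded length, so counter $c$ may be dropped from the non-negativity requirement and the remainder is an instance governed by $\flowerbound(J\setminus\{c\})$ — but rooted at the new intermediate configuration, which is exactly what the ``$\max_{\aconfig}$'' in the definition of $\flowerbound(\cdot)$ accounts for. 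Combining, $\flowerbound(J) \le \big(\flowerbound(J\setminus\{c\})\cdot 2^{|\avass|}\big)^{O(d+n)} + \flowerbound(J\setminus\{c\})$, which unfolds to the claimed bound. A subtlety beyond the classical proofs, as the authors note, is that the witness carries $\omega$-counters and the target extended configuration $\aconfig'$ may change when invoking the induction hypothesis (a counter that becomes large can be promoted to $\omega$); I would treat promotion of a counter to $\omega$ uniformly with dropping it from $J$, since an $\omega$-counter also imposes no non-negativity obstruction on the loops.

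The main obstacle I expect is the base case rather than the induction: setting up the SNLS so that its description size (in particular $\degof{\cS}$ and $\rowof{\cS}$) is genuinely polynomial in $|\avass|$ — note $\degof{\cS}$ must be $O(1)$ here since Constraint~(iv) is only linear in $\param$ — and so that \emph{every} relevant bloom corresponds to a solution and vice versa, including the Euler/connectivity conditions needed to turn an abstract flow back into an actual loop through $\finalstate$ of the predicted length. Keeping the arithmetic so that the final exponent is $O((d+n)^2)$ and not worse requires being careful that the SNLS bound $2^{|\avass|^{O((d+n)^2)}}$ from \cref{solution-size} is applied only once (in the base case) and that each of the $\le d+n$ inductive steps contributes only a further polynomial to the exponent; I would track these constants explicitly and defer the bookkeeping to the full version, in line with the paper's style.
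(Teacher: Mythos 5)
Your high-level plan — a Rackoff-style induction with SNLS bounds from \cref{solution-size} doing the heavy lifting — matches the paper, but there is a real gap in where you invoke SNLS. You claim it is ``applied only once (in the base case)'' and that each inductive step then only contributes a polynomial to the exponent by ``short-cutting any repeated configuration.'' This cannot work: in the inductive step you must also handle the case where \emph{no} counter in $J$ ever exceeds the threshold, i.e.\ the whole flower, including the loops $\alpha,\beta,\gamma$, stays $b$-bounded on the $J$-counters. In that situation you cannot shorten the loops by cutting out repeated configurations, because cutting an infix out of $\alpha$, $\beta$, or $\gamma$ changes their Parikh images and thereby spoils Conditions (ii)--(iv) of \cref{Def:bloom} — in particular the non-linear scalar-multiple constraint (iv). The paper addresses exactly this point (``Simply cutting out infixes in $\alpha,\beta,\gamma$ might spoil the properties\ldots'') by isolating the bounded case in \cref{Lemma:RBoundedWitnesses} and invoking SNLS there, and that lemma is then applied at \emph{every} level of the induction, not just the base case; the recurrence is $\rackfunof{i+1}\leq(2^{\sizeof{\avass}}\cdot\rackfunof{i})^{\bigoof{(d+n)^6}}$ with the SNLS-derived $\bigoof{(d+n)^6}$ exponent appearing at each step, and the doubly-exponential total comes from iterating this recurrence $d+n$ times. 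So an argument that applies SNLS only at $J=\emptyset$ and pure pigeonhole shortcutting elsewhere is not repairable without effectively re-deriving the bounded-case lemma.

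There is a second, related issue with the base-case encoding. You set up the SNLS over transition multiplicities of the three loops together with Euler/connectivity constraints. This is fine for $J=\emptyset$, where no non-negativity has to be enforced along the way, and it has the pleasant consequence that $\colof{\cS}$ stays polynomial in $|\avass|$. But this encoding gives you no handle on intermediate counter values, so it cannot be used in the bounded case at inner levels of the induction — which is precisely where you would need it. The paper instead takes \emph{irreducible $D$-loops} as the SNLS variable set: these carry, by construction, the promise that gluing them into the base loop preserves non-negativity on the bounded counters. The price is that $\colof{\cS}$ becomes exponential (there can be exponentially many irreducible $D$-loop effects), but since \cref{solution-size} depends on $\colof{\cS}$ only inside the base, not the exponent, this is affordable. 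Finally, a small arithmetic point: a single application of \cref{solution-size} yields a \emph{singly} exponential bound ($(\colof{\cS}\cdot\degof{\cS}\cdot\maxcof{\cS})^{\bigoof{\degof{\cS}^2\cdot\rowof{\cS}^4}}$ with $\rowof{\cS}=\bigoof{d+n}$, $\degof{\cS}=1$), not the doubly exponential $2^{|\avass|^{\bigoof{(d+n)^2}}}$ you attribute to it; the doubly exponential bound is only reached after iterating the singly exponential per-level bound through $d+n$ inductive steps.
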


\subparagraph{Step I: From length bounds to flower size bounds}
Before we prove \cref{Lemma:FlowerBound}, let us see how it implies \cref{Theorem:Bound}. Notice that \cref{Lemma:FlowerBound} makes a statement about the lengths of the runs $\sigma$, $\alpha$, $\beta$, $\gamma$, whereas \cref{Theorem:Bound} also promises a small starting configuration $\aconfig$. Thus, it remains to construct a small starting configuration.
\begin{proof}[Proof of \cref{Theorem:Bound}]
  To begin, assume $\aflower=(\aconfig.\sigma, \abloom)$ is coverable.
  By Lemma~\ref{Lemma:FlowerBound}, there is another flower $(\aconfig.\sigma', \abloom')$ that is rooted in the same extended configuration and satisfies $\sizeof{\sigma'}+\sizeof{\abloom'}\leq \flowerbound$. 
  The extended configuration $\aconfig$ may not obey the desired bound.
  Thus, we replace it by $\aconfig'$ defined by $\omegaof{\aconfig'}=\omegaof{\aconfig}$ and $\coordacc{\aconfig'}{\acounter}=\min\set{\coordacc{\aconfig}{\acounter}, \flowerbound\cdot 2^{\sizeof{\avass}}}$ for all $\acounter\in\counters\setminus\omegaof{\aconfig}$.  
  Here, we keep the $\omega$-entries,
  and for each non-$\omega$-counter, we take the value of $\aconfig$ unless it is larger than $\flowerbound\cdot 2^{\sizeof{\avass}}$, in which case we truncate to this value.
  We now claim that $\aflower'=(\aconfig'.\sigma', \abloom')$ is still a coverable inseparability flower, and its size is at most doubly exponential (satisfying the desired bound). 
  Coverability is immediate by $\aconfig'\leq \aconfig$. 
  We also have $\unarysizeof{\aconfig'}\leq 2^{2^{\bigoof{\sizeof{\avass}}}}$. 
  To be an inseparability flower, we have to check that the counters whose values we truncated when moving from $\aconfig$ to $\aconfig'$ remain non-negative while executing $\sigma$, and also $\sigma.\rho$ with $\rho=\alpha, \beta, \gamma$.
  This, however, is clear by the fact that $\sigma.\rho$ has length at most $\flowerbound$, and each transition can subtract at most $2^{\sizeof{\avass}}$ tokens.
  As we have this budget available in $\aconfig'$, the run remains non-negative.
\end{proof}

In the remainder of the section, we prove Lemma~\ref{Lemma:FlowerBound}. 
Similar to Rackoff's proof for coverability, we generalize the notion of flowers to admit negative counter values. 
Then we use an induction on the number of non-negative counters to establish a bound on the length of shortest generalized flowers. 
Let $\rackcounter\in[0, d+n]$, $\abloom=(\finalstate, \counters, \alpha, \beta, \gamma)$ a bloom, and $\omegacounters=[1, d+n]\setminus\counters$.  
We call an extended run $\aconfig.\sigma$ of $\avass \times \dyckvass{n}$ an \emph{$\rackcounter$-stem for $\abloom$}, if it ends in an extended configuration $\aconfig'$ with $\aconfig'=(\finalstate,\bmm')$ for some $\bmm'$, $\omegacounters\subseteq\omegaof{\aconfig}$, and the counters in $([1, \rackcounter]\cap \counters)\setminus \omegaof{\aconfig}$ remain non-negative when executing $\sigma$ from $\aconfig$ resp.\ $\alpha$, $\beta$, and $\gamma$ from~$\aconfig'$ in $\avass \times \dyckvass{n}$. 
Here, we use $[1, 0]=\emptyset$.
We call a pair $(\aconfig.\sigma, \abloom)$ consisting of an $\rackcounter$-stem and an inseparability bloom an \emph{$\rackcounter$-inseparability flower}.  
Note that a $(d+n)$-inseparability flower is an inseparability flower as %
in \cref{Def:bloom}. 
We say the flower is \emph{$\abound$-bounded} with $\abound\in\nat\setminus\set{0}$, if the counters in $([1, \rackcounter]\cap\counters)\setminus\omegaof{\aconfig}$ are bounded by $\abound$ along $\aconfig.\sigma.\rho$ for all $\rho=\alpha, \beta, \gamma$. 
We wish to establish an estimate on the following function $f$, note that $\rackfunof{d+n} = \flowerbound$: 
\begin{align*}\label{rackfun}
\rackfunof{\rackcounter} = \ 
\max_{\aconfig}\min \setcond{\sizeof{\sigma}+\sizeof{\abloom}}{(\aconfig.\sigma, \abloom)\text{ is an $\rackcounter$-inseparability flower}} \ .
\end{align*}

\subparagraph{Step II: From value bounds to length bounds}
Similar to Rackoff's proof, we will bound $f(i+1)$ in terms of $f(i)$, which will then yield the bound on $\flowerbound=f(n+d)$. However, there is a key difference to Rackoff's proof: When constructing runs that, in the first $i+1$ coordinates stay non-negative, Rackoff argues that if such a run only uses counter values in $[0,b]$ on the first $i+1$ coordinates, then there is such a run of length at most $|Q|\cdot (b+1)^{i+1}$: whenever a combination of values in $[0,b]$ (and a control state) repeats, we can cut out the infix in between. Hence, \emph{value bounds yield length bounds}. Our setting requires a different argument: Simply cutting out infixes in $\alpha,\beta,\gamma$ might spoil the properties (ii), (iii), and (iv) of $i$-inseparability flowers. Instead, we use our results on SNLS to obtain length bounds from value bounds.
The technique is similar to some other generalizations of Rackoff's result by Yen~\cite[Lemma~3.5]{Yen92} Habermehl~\cite[Lemma~3.2]{Habermehl97} and Atig and Habermehl~\cite[Lemma~5]{AtigH11}. However, the following proof also needs to work with non-linear constraints (which are also not present in Demri's extension of Rackoff's result~\cite{Demri13}).

\begin{lemma}\label{Lemma:RBoundedWitnesses}
    Let $\aflower=(\aconfig.\sigma, \abloom)$ be a $\abound$-bounded $\rackcounter$-inseparability flower. 
    Then there is an $\rackcounter$-inseparability flower $\aflower'=(\aconfig.\sigma', \abloom')$ with $\sizeof{\sigma'}+\sizeof{\abloom'}\leq (2^{\sizeof{\avass}}\cdot \sizeof{\states}\cdot \abound)^{\bigoof{(d+n)^{6}}}$. 
\end{lemma}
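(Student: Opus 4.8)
The plan is to encode the existence of a short $\rackcounter$-inseparability flower rooted at the (fixed) extended configuration $\aconfig$ as the feasibility of a singly non-linear system, and then invoke \cref{solution-size} to extract a small solution, from which we read off $\sigma'$, $\alpha'$, $\beta'$, $\gamma'$. The key insight is that the counter values along the $\abound$-bounded flower $\aflower$ live in a bounded range, so only polynomially (in $\abound$, $\sizeof{\states}$, $d+n$) many distinct ``configuration signatures'' occur; by contracting repeated signatures we may first assume $\sigma$ itself has length at most $\sizeof{\states}\cdot(\abound{+}1)^{d+n}$ (standard Rackoff-style cutting applies to $\sigma$, since no nonlinear constraint constrains $\sigma$ beyond nonnegativity on $([1,\rackcounter]\cap\counters)\setminus\omegaof{\aconfig}$ and ending in $\finalstate$). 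So the real work is bounding the loops $\alpha,\beta,\gamma$, where naive cutting would spoil conditions (ii)--(iv) of \cref{Def:bloom}.

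For the loops, I would set up an SNLS whose linear variables $\by$ count, for each of $\rho\in\{\alpha,\beta,\gamma\}$, how often each ``short loop type'' is used, together with auxiliary connector variables describing a skeleton path through a graph on configuration signatures; the single nonlinear variable is the scalar $\arat$ from condition~(iv). Concretely: work in the product VASS $\avass\times\dyckvass{n}$ restricted to the signature graph $G$ whose vertices are pairs (state, value-vector in $[0,\abound]^{([1,\rackcounter]\cap\counters)\setminus\omegaof{\aconfig}}$) — of polynomial size — and whose edges carry the $(d+n)$-dimensional effect $\allbalanceof{-}$. An Eulerian-type / Parikh argument (as in Yen, Habermehl, Atig--Habermehl) lets us represent each loop $\rho$ by a nonnegative integer flow on $G$ that is connected and balanced at $\finalstate$; the effect $\allbalanceof{\rho}$ is then a linear function $\bE\cdot\by_\rho$ of the flow vector $\by_\rho$, with $\bE$ an integer matrix whose entries are bounded by $2^{\sizeof{\avass}}$. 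Conditions (i)--(iii) become linear inequalities in the $\by_\rho$; condition (iv), $\bE_{\mathrm{ext}}(\by_\alpha+\by_\beta+\by_\gamma)=\arat\cdot\bE_{\mathrm{ext}}\by_\alpha$, becomes, after moving $\arat$ to the left, a system where the entries of the matrix multiplying $\by_\alpha$ are the degree-$1$ polynomials $(\text{const}-\arat\cdot\text{const})$ in $\param:=\arat$, while $\by_\beta,\by_\gamma$ occur linearly — exactly the SNLS shape $\bA(\param)\by\ge\bb(\param),\ \by\ge\bzero$. (Connectedness of each flow is enforced by standard auxiliary ``reachability-tree'' variables, all linear, so it stays within SNLS.) Feasibility of this SNLS is witnessed by the given flower $\aflower$.

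Then \cref{solution-size} yields a solution $(\arat,\asol)$ with $\unarysizeof{\arat},\unarysizeof{\asol}\in(\colof{\cS}\cdot\degof{\cS}\cdot\maxcof{\cS})^{\bigoof{\degof{\cS}^2\rowof{\cS}^4}}$. Here $\degof{\cS}\le 1$ (only $\arat$, linearly), $\maxcof{\cS}\le 2^{\sizeof{\avass}}$ (effects and bounds), and $\rowof{\cS},\colof{\cS}$ are polynomial in the size of $G$, hence in $\sizeof{\states}\cdot\abound\cdot 2^{\bigoof{d+n}}$; plugging in gives loop multiplicities bounded by $(2^{\sizeof{\avass}}\cdot\sizeof{\states}\cdot\abound)^{\bigoof{(d+n)^6}}$. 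From $\asol$ we reconstruct actual loops $\alpha',\beta',\gamma'$ of that combined length (each short loop type has length $\le|G|$, and multiplicities are bounded as above), re-threaded into genuine cyclic paths at $\finalstate$; since the flow is balanced and connected this is always possible, and by construction conditions (i)--(iv) hold with the same $\arat$. Combining with the already-bounded $\sigma$ (and noting $\abound$-boundedness guarantees nonnegativity of the relevant counters is preserved along the reassembled run, as all intermediate values stay in $[0,\abound]$ by the flow's support) gives the desired $\aflower'$ with $\sizeof{\sigma'}+\sizeof{\abloom'}\le(2^{\sizeof{\avass}}\cdot\sizeof{\states}\cdot\abound)^{\bigoof{(d+n)^6}}$.

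The main obstacle I anticipate is the faithful encoding of the loops as connected balanced flows together with a clean reconstruction of genuine cyclic paths from a flow solution — i.e. making precise the Euler-lemma-style correspondence between (short-loop-decomposed) cyclic paths and integer points of a polyhedron, while keeping the connectedness constraints expressible inside an SNLS and keeping the matrix entries singly exponential. A secondary subtlety is that the flower shares the configuration $\aconfig$ and the partition $\counters\discup\omegacounters$ between the stem and all three loops; I would handle this by fixing $\aconfig$, $\finalstate$, and the partition up front (they are not optimized here) so the SNLS only ranges over the path/loop data. Everything else is routine bookkeeping with determinant and effect bounds.
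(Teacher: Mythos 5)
Your overall strategy---reduce the length of $\alpha,\beta,\gamma$ by translating the search for a short flower into a feasibility question for an SNLS in which the scalar $\arat$ is the single non-linear variable, then invoke \cref{solution-size}---is exactly the strategy of the paper, and your bookkeeping (effects bounded by $2^{\sizeof{\avass}}$, degree $1$, row/column counts polynomial in $\sizeof{\states}\cdot(\abound+1)^{\bigoof{d+n}}$, hence the promised $(2^{\sizeof{\avass}}\cdot\sizeof{\states}\cdot\abound)^{\bigoof{(d+n)^6}}$) lines up with the paper's calculation. You also correctly observe that $\sigma$ can be shortened separately by plain Rackoff cutting.

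The genuine gap is exactly the one you flag at the end: connectedness of a balanced flow on the signature graph \emph{cannot} be written as a conjunction of linear inequalities $\bA(\param)\by\ge\bb(\param)$, $\by\ge\bzero$. The standard Verma--Seidl--Schwentick / Parikh-style ``reachability-tree'' encodings yield an existential Presburger formula with disjunctions (``either $v$ carries no flow, or some tree edge into $v$ has positive flow''); collapsing this to a single SNLS would require guessing the spanning tree (equivalently, the flow support) first, an exponentially large extra non-deterministic step that you neither describe nor account for. The paper avoids the issue entirely with a different decomposition: each loop $\rho$ is split into a distinguished \emph{base loop} $\rhobase$ on $q_f$ of length at most $u_{\mathsf{len}}^2$, kept with multiplicity $\ge 1$, together with irreducible $D$-loops $\rho_1,\dots,\rho_v$ that are \emph{anchored} at configuration signatures appearing on $\rhobase$. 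Because every $\rho_j$ has zero net effect on the tracked counters and its anchor survives in $\rhobase$, scaling the multiplicities $\bx$ always yields a genuine loop by re-insertion, with no connectedness constraint needed at all---the base loop is the connecting backbone. (A small subtlety the paper has to address, and which would bite a flow encoding too, is that cutting $D$-loops out of $\rhobase$ to shorten it would destroy anchor configurations, which is why $\rhobase$ is only guaranteed to have length $u_{\mathsf{len}}^2$ rather than $u_{\mathsf{len}}$.) So your approach is not wrong in spirit, but as written it asserts a linear encoding of connectedness that does not exist, and the Euler-tour reconstruction you call ``routine'' is precisely the step that the paper replaces with the base-loop/anchored-loop decomposition.
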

\begin{proof}
Let $\aflower = (\aconfig.\sigma, \abloom)$ be a $\abound$-bounded $\rackcounter$-inseparability flower with $\abloom =(\finalstate, \counters, \alpha, \beta, \gamma)$. 
Let $\omegacounters$ be the complement of $\counters$. 
Let $\widehat \aconfig$ be the extended configuration reached by $\aconfig.\sigma$.

Let $\distinguished = ([1, \rackcounter]\cap \counters)\setminus\omegaof{\aconfig}$ be the counters we wish to keep non-negative. 
By a \emph{$\distinguished$-loop}, we mean an extended run $\aconfig_1.\tau.\aconfig_2$ of $\avass \times \dyckvass{n}$ where $\tau$ is a loop, $\coordacc{\aconfig_1}{\distinguished}=\coordacc{\aconfig_2}{\distinguished}$, and $\coordacc{\aconfig_1}{\omegacounters}=\coordacc{\aconfig_2}{\omegacounters}$. 
The $\distinguished$-loop is called \emph{irreducible} if it does not contain further $\distinguished$-loops. 
By the pigeonhole principle, any run that is longer than $\dloopbounddef=\sizeof{\states}\cdot (\abound+1)^{\sizeof{\distinguished \cup \omegacounters}}$ contains a $\distinguished$-loop.
This means the length of an irreducible $\distinguished$-loop is at most $\dloopbounddef$. 
Moreover, these loops can have at most $2\cdot 2^{\sizeof{\avass}}\cdot\dloopbounddef$ distinct effects on each counter, where the leading $2$ is for the distinction between positive and negative values. 
This analysis yields an upper bound of $\largebound=(2^{\sizeof{\avass}+1}\cdot\dloopbounddef)^{d+n}$ on the number of irreducible $\distinguished$-loops with distinct effects. 
We show how to construct a flower as promised in the lemma.

We begin by cutting out irreducible $\distinguished$-loops from $\sigma$,
yielding $\sigma'$ of length at most $\dloopbounddef$. 
The extended configuration $\aconfig'$ reached by $\aconfig.\sigma'$ coincides with~$\widehat \aconfig$ on the $\omega$-entries and on the non-$\omega$-entries for $\distinguished$.

Next, we want to shorten $\rho = \alpha, \beta, \gamma$.
To this end, we first decompose them into irreducible $\distinguished$-loops. 
We assume all such $\distinguished$-loops have distinct effects, otherwise we pick a representative. 
With the previous analysis, the result is $\rhobase, \rho_1, \ldots, \rho_{\largeboundp}$ with $\largeboundp\leq \largebound$. 
Here, $\rhobase$ is the loop on the final state into which the irreducible $\distinguished$-loops $\rho_1, \ldots, \rho_{\largeboundp}$ are inserted.  
Note that $\rhobase$ is not necessarily a $\distinguished$-loop.
What we know, however, is that $\rhobase$ has a non-negative effect on the counters in $\counters$, due to Condition~(i) in the definition of inseparability blooms.
Since we want to be able to insert all the $\distinguished$-loops directly into $\rhobase$, the latter should still contain the same starting configurations for such loops as $\rho$, at least when only considering the counters in $\distinguished \cup \omegacounters$.
Therefore we cannot guarantee that $\rhobase$ contains no $\distinguished$-loops, because cutting all of them out might reduce the number of such configurations visited by $\rhobase$.
However, $\rhobase$ still has length at most $\dloopbounddef^2$ by the following argument.
If we mark in $\rhobase$ the first occurrence of each element from $\states \times [0,b]^{\distinguished \cup \omegacounters}$, then each infix leading from one such marker to the next has at most length $\dloopbounddef$, because longer infixes still contain a $\distinguished$-loop that has no marked configuration and can therefore be removed.
Since there are at most $\sizeof{\states \times [0,\abound]^{\distinguished \cup \omegacounters}} = \sizeof{\states}\cdot (\abound+1)^{\sizeof{\distinguished \cup \omegacounters}} = \dloopbounddef$ markers, we obtain the stated length bound of $\dloopbounddef^2$ for $\rhobase$.

A vector $\bx\in\nat^{\largeboundp}$ with $\coordacc{\bx}{0}>0$ and $\coordacc{\bx}{\acounter}\geq 0$ for $1\leq j\leq \largeboundp$ can now be turned into a run $\rho_{\bx}$ by glueing together $\coordacc{\bx}{\acounter}$-many instances of $\rho_{\acounter}$.  
Note that also the base loop $\rhobase$ may be repeated. 
As the order of the transitions does not influence the effect of the run, $\effectof{\rho}=\effectof{\rho_{\by}}$ holds, where $\by=\parikhof{\rho}$ is the so-called Parikh vector that counts the occurrences of irreducible loops in~$\rho$. 
As a consequence, we can directly define the effect on the vector $\bx$, namely $\effectof{\bx}=\sum_{i=0}^{\largeboundp}\coordacc{\bx}{\acounter}\cdot \effectof{\rho_{\acounter}}\in\ints^{d+n}$.  

With irreducible $\distinguished$-loops at hand, we can formulate the search for a small bloom as an SNLS.
We define the vectors $\bx_{\alpha}, \bx_{\beta}, \bx_{\gamma}$, and the following constraints:
\begin{alignat*}{5}
\coordacc{\bx_{\alpha}}{0}, \coordacc{\bx_{\beta}}{0}, \coordacc{\bx_{\gamma}}{0}\geq&\ 1\qquad&\qquad\coordacc{\vassbalanceof{\bx_{\alpha}+\bx_{\beta}+\bx_{\gamma}}}{1, d}\geq&\ 0\\
\bx_{\alpha}, \bx_{\beta},  \bx_{\gamma}\geq&\ 0
&\coordacc{\balanceof{\bx_{\alpha}+\bx_{\beta}}}{d+1, d+n}\geq&\ 0\\ 
\coordacc{\effectof{\bx_{\alpha}}}{\counters}, \coordacc{\effectof{\bx_{\beta}}}{\counters}, \coordacc{\effectof{\bx_{\gamma}}}{\counters}\geq&\ 0&
\coordacc{(\balanceof{\bx_{\alpha}+\bx_{\beta}+\bx_{\gamma}}-\arat\cdot\balanceof{\bx_{\alpha}})}{d+1, d+n}=&\ 0& 
\end{alignat*}

The constraints on the left say that we repeat the base loops at least once, and the remaining loops a non-negative number of times.
The last constraint is Condition (i) in the definition of blooms.
The constraints on the right correspond to the Conditions (ii) to (iv).

It is readily checked that $(\parikhof{\alpha}, \parikhof{\beta}, \parikhof{\gamma})$ solves the SNLS. 
This means Theorem~\ref{solution-size} applies and yields a small rational solution.
To turn it into an integer solution, we observe that our SNLS is monotonic in the sense that if $(\bx_{\alpha}, \bx_{\beta}, \bx_{\gamma})$ is a solution, so is $(k\bx_{\alpha}, k\bx_{\beta}, k\bx_{\gamma})$ for any $k\geq 1$.
We multiply the rational solution by the common denominator to obtain the integer solution $(\bx_{\alpha}', \bx_{\beta}', \bx_{\gamma}')$ with the associated loops $\alpha', \beta', \gamma'$.

We argue that $\aflower'=(\aconfig.\sigma', \abloom')$ with $\abloom'=(\finalstate, \counters, \alpha', \beta', \gamma')$ is an $\rackcounter$-inseparability flower. 
Remember that $\aconfig'$ is the configuration reached by $\aconfig.\sigma'$. 
As $\aconfig$ and $\counters$ have not changed, the only thing we have to show is the non-negativity for the counters in $\distinguished$. 
For $\sigma'$, this follows from the non-negativity of $\sigma$, and the fact that we only cut-out $\distinguished$-loops. 
For $\rho'=\alpha', \beta', \gamma'$, which are executed from $\aconfig'$, we argue as follows.  
The base loop $\rho_0$ has the same effect on the $\distinguished$-counters as $\rho$, because we obtained $\rho_0$ by cutting-out $\distinguished$-loops from~$\rho$. 
The effect of~$\rho$ on even the entire set $\counters$ is non-negative due to Condition~(i) for blooms. 
This means if one repetition of $\rho_0$ stays non-negative from~$\aconfig'$, arbitrarily many repetitions will. 
The one repetition stays non-negative, because $\rho$ was non-negative from $\widehat \aconfig$ (the extended configuration reached by $\aconfig.\sigma$), and $\aconfig'$ coincides with $\widehat \aconfig$ on $\distinguished$. 
Since the $\distinguished$-loops that we glue into $\rhobase$ come with a valuation of the counters in $\distinguished$, and this valuation keeps them non-negative in $\rho$, the entire $\rho'$ stays non-negative.

We analyze the complexity of our system $\cS$.
It has at most $\rowof{\cS}\in \bigoof{d+n}$ many rows, and note that the non-negativity constraints do not count towards the rows. 
There are at most $\colof{\cS}\in \bigoof{\largebound}$ many columns.
The degree is $\degof{\cS}=1$.
The maximal coefficient is bounded from above by the largest possible loop effect, 
$\maxcof{\cS}\leq 2^{\sizeof{\avass}}\cdot\dloopbounddef^2$. 
Then, \cref{solution-size} gives us rational solutions $\bx_{\alpha}, \bx_{\beta}, \bx_{\gamma}$ of the form $\bx_{\rho}[i]=\dfrac{a}{K}$, meaning $K$ is the common denominator of all entries, with 
\begin{align*} 
\unarysizeof{\bx_{\rho}}\in &\   (\colof{\cS}\cdot \degof{\cS} \cdot
\maxcof{\cS})^{\bigoof{\degof{\cS}^2 \cdot \rowof{\cS}^4}}\\
=&\  ( \bigoof{\largebound}\cdot 2^{\sizeof{\avass}}\cdot\dloopbounddef^2)^{\bigoof{(d+n)^4}}\\
=&\  ( \bigoof{(2^{\sizeof{\avass}+1}\cdot \dloopbounddef)^{d+n}}\cdot 2^{\sizeof{\avass}}\cdot\dloopbounddef^2)^{\bigoof{(d+n)^4}}\\
=&\  (2^{\sizeof{\avass}}\cdot \dloopbounddef)^{\bigoof{(d+n)^5}}
=\ (2^{\sizeof{\avass}}\cdot\sizeof{\states}\cdot (\abound+1))^{\bigoof{(d+n)^6}}\\
=&\ (2^{\sizeof{\avass}}\cdot\sizeof{\states}\cdot \abound)^{\bigoof{(d+n)^6}} \ .
\end{align*}

We already argued that the integer vectors $K\bx_{\alpha}$, $K\bx_{\beta}$, $K\bx_{\gamma}$ are also solutions with runs $\alpha', \beta', \gamma'$.  
Since each entry of these vectors is smaller than $(2^{\sizeof{\avass}}\cdot\sizeof{\states}\cdot \abound)^{\bigoof{(d+n)^{6}}}$, and we have at most 
$\largebound = (2^{\sizeof{\avass}}\cdot \sizeof{\states}\cdot (\abound+1))^{\bigoof{(d+n)^2}}$ many loops with maximal size $\dloopbounddef^2 = \sizeof{\states}^2\cdot (\abound+1)^{2(d+n)}$, we get 
    \begin{align*}
        \sizeof{\rho'}&\leq (2^{\sizeof{\avass}}\cdot \sizeof{\states}\cdot
        \abound)^{\bigoof{(d+n)^{6}}}\cdot (2^{\sizeof{\avass}}\cdot
        \sizeof{\states}\cdot(\abound+1))^{\bigoof{(d+n)^2}}\cdot
        \sizeof{\states}^2\cdot (\abound+1)^{2(d+n)}\\
        &=(2^{\sizeof{\avass}}\cdot \sizeof{\states}\cdot \abound)^{\bigoof{(d+n)^{6}}}\ .\qedhere
    \end{align*}
\end{proof}
\subparagraph{Step III: Rackoff-style induction}
We now give the bound on $\rackfunof{\rackcounter}$ that we need to prove \cref{Lemma:FlowerBound}.
In the base case, no counter has to remain non-negative and so we have a $1$-bounded $0$-inseparability flower. 
We employ the bound from Lemma~\ref{Lemma:RBoundedWitnesses}. 
\begin{lemma}
$\rackfunof{0}=(2^{\sizeof{\avass}}\sizeof{\states})^{\bigoof{(d+n)^{6}}}$.
\end{lemma}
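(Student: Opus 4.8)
The plan is to treat $\rackfunof{0}$ as the base case of the Rackoff-style induction and to obtain it directly as an instance of \cref{Lemma:RBoundedWitnesses}. First I would observe that for $\rackcounter = 0$ we have $[1,0] = \emptyset$, so the set of distinguished counters $\distinguished = ([1,0]\cap\counters)\setminus\omegaof{\aconfig}$ that must remain non-negative is empty. This means that \emph{every} $0$-inseparability flower is automatically $1$-bounded, since the $1$-boundedness condition only constrains the counters in $([1,0]\cap\counters)\setminus\omegaof{\aconfig} = \emptyset$, and that constraint is vacuous. (Here we use $\abound = 1$, the smallest admissible bound in $\nat\setminus\set{0}$.)

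Next I would argue that a $0$-inseparability flower always exists for every extended configuration $\aconfig$ with $\omegacounters\subseteq\omegaof{\aconfig}$, so that the inner minimum in the definition of $f(0)$ is taken over a nonempty set and is thus finite. More precisely, since $\langof{\avass}\inseparable \dycklang{n}$ is assumed (we are in the case where a coverable inseparability flower exists, by \cref{Lemma:InsepCharacterization}), there is \emph{some} inseparability flower, hence some $(d+n)$-inseparability flower, and its bloom together with any $0$-stem witnesses that the minimum is over a nonempty set for the relevant $\aconfig$. For configurations $\aconfig$ that do not extend to any $0$-inseparability flower, the minimum is $+\infty$ and they would break the bound — but such configurations cannot arise here, because a $0$-stem imposes no non-negativity requirement at all, so any bloom reachable from $\finalstate$ works; one just needs $\omegacounters\subseteq\omegaof{\aconfig}$, which is part of what it means to be an $\rackcounter$-stem.

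Then, given an arbitrary extended configuration $\aconfig$, take the (necessarily $1$-bounded) $0$-inseparability flower $\aflower = (\aconfig.\sigma,\abloom)$ that realizes the minimum and apply \cref{Lemma:RBoundedWitnesses} with $\rackcounter = 0$ and $\abound = 1$. This yields a $0$-inseparability flower $\aflower' = (\aconfig.\sigma',\abloom')$ rooted in the same $\aconfig$ with
\[
\sizeof{\sigma'}+\sizeof{\abloom'}\ \leq\ (2^{\sizeof{\avass}}\cdot\sizeof{\states}\cdot 1)^{\bigoof{(d+n)^6}}\ =\ (2^{\sizeof{\avass}}\sizeof{\states})^{\bigoof{(d+n)^6}}\ .
\]
Since this bound is uniform over all $\aconfig$ and $\aflower'$ is again a valid $0$-inseparability flower rooted in $\aconfig$, taking the maximum over $\aconfig$ of the minimum flower size gives exactly $\rackfunof{0}\leq (2^{\sizeof{\avass}}\sizeof{\states})^{\bigoof{(d+n)^6}}$, and the matching lower bound (or rather, the claimed equality) follows because the exponent is an asymptotic $\bigoof{\cdot}$ statement, so the displayed bound \emph{is} the asserted value.

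The main obstacle, and the only genuinely nontrivial point, is the well-definedness issue: one must be sure that $f(0)$ is finite, i.e.\ that the inner minimum is never over the empty set for any $\aconfig$ that the outer maximum ranges over. I would address this by the remark above — with no distinguished counters, the stem's only real obligation is $\omegacounters\subseteq\omegaof{\aconfig}$ together with ending in $\finalstate$, which is always satisfiable given that an inseparability bloom exists at all under our standing assumption that $\langof{\avass}\inseparable\dycklang{n}$. Everything else is a direct substitution of $\abound=1$ into \cref{Lemma:RBoundedWitnesses}.
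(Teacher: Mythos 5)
Your proposal is correct and takes essentially the same route as the paper: for $\rackcounter=0$ the set of distinguished counters is empty, so every $0$-inseparability flower is trivially $1$-bounded, and plugging $\abound=1$ into \cref{Lemma:RBoundedWitnesses} gives $(2^{\sizeof{\avass}}\cdot\sizeof{\states}\cdot 1)^{\bigoof{(d+n)^6}}$, which is the stated bound.

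One small quibble, not fatal: your well-definedness digression overclaims. It is not true that ``such configurations cannot arise'' because ``any bloom reachable from $\finalstate$ works'' --- a bloom must still satisfy conditions (i)--(iv) of \cref{Def:bloom}, which are nontrivial constraints, and some $\aconfig$ (and indeed some VASS) admit no bloom at all. The correct resolution is simply that $\max_{\aconfig}$ in the definition of $\rackfunof{\rackcounter}$ is implicitly taken over those extended configurations for which an $\rackcounter$-inseparability flower exists; otherwise the quantity is $-\infty$ and the bound holds vacuously. This is how the function is used in the proof of \cref{Theorem:Bound}, where one always starts from an existing flower and shrinks it in place. With that reading, the finiteness concern evaporates and your substitution of $\abound=1$ into \cref{Lemma:RBoundedWitnesses} is all that is needed.
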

In the induction step, and as in Rackoff's result, the bound takes the form of a recurrence.
\begin{lemma}
$\rackfunof{i+1}\leq (2^{\sizeof{\avass}}\cdot \rackfunof{\rackcounter})^{\bigoof{(d+n)^6}}$.
\end{lemma}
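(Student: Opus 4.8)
**Proof plan for the recurrence $\rackfunof{i+1}\leq (2^{\sizeof{\avass}}\cdot \rackfunof{\rackcounter})^{\bigoof{(d+n)^6}}$.**

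The plan is to adapt Rackoff's induction step to the setting of inseparability flowers. Fix an extended configuration $\aconfig$ that is "worst" for $f(i+1)$, i.e.\ it admits an $(i{+}1)$-inseparability flower but we want to bound the minimal size of such a flower. Take a minimal such flower $\aflower = (\aconfig.\sigma, \abloom)$ with $\abloom = (\finalstate, \counters, \alpha, \beta, \gamma)$, where without loss of generality $i+1 \in \counters$ (otherwise the $(i{+}1)$-stem condition is the same as the $i$-stem condition, and $\rackfunof{i+1} \le \rackfunof{i}$ trivially). Along the run $\aconfig.\sigma.\rho$ for $\rho = \alpha,\beta,\gamma$, look at the first time (if any) that counter $i+1$ exceeds the threshold $\abound := \rackfunof{\rackcounter}$. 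Two cases arise, exactly as in Rackoff's argument.

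\emph{Case 1: counter $i+1$ stays below $\abound$ throughout $\aconfig.\sigma.\rho$ for all $\rho$.} Then, since the counters in $[1,i]\cap\counters$ are already kept within the bound $\abound' \le \abound$ guaranteed for $i$-flowers (here one must be slightly careful and first pass to an $i$-flower to learn that the first $i$ coordinates are small — but the minimal $i$-flower from $\aconfig$ has size $\le \rackfunof{\rackcounter}$ and hence all counter values it touches are $\le 2^{\sizeof{\avass}}\cdot\rackfunof{\rackcounter}$), the whole flower is $\abound$-bounded as an $(i{+}1)$-flower with $\abound = (2^{\sizeof{\avass}}\cdot\rackfunof{\rackcounter})^{\bigoof{1}}$. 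Now invoke \cref{Lemma:RBoundedWitnesses}: it produces an $(i{+}1)$-inseparability flower rooted at the same $\aconfig$ of size at most $(2^{\sizeof{\avass}}\cdot\sizeof{\states}\cdot\abound)^{\bigoof{(d+n)^6}} = (2^{\sizeof{\avass}}\cdot\rackfunof{\rackcounter})^{\bigoof{(d+n)^6}}$, as desired.

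\emph{Case 2: counter $i+1$ first exceeds $\abound$ at some point.} If this happens within $\sigma$, split $\sigma = \sigma_1\sigma_2$ at the first configuration $\aconfigbar$ where counter $i+1$ has value $> \abound = \rackfunof{\rackcounter}$. The prefix $\aconfig.\sigma_1$ keeps counters $[1,i{+}1]\cap\counters$ small ($\le \abound$), so it is $\abound$-bounded; the suffix $\aconfigbar.\sigma_2$ is an $i$-stem for $\abloom$ (we may now drop the non-negativity requirement on counter $i+1$, since it has a surplus of more than $\rackfunof{\rackcounter}$, and $\sigma_2.\rho$ together have length $\le \rackfunof{\rackcounter}$ with per-step decrement $\le 2^{\sizeof{\avass}}$, which one should double-check fits the budget — if not, one inflates $\abound$ by the factor $2^{\sizeof{\avass}}$, absorbed by the final exponent). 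By definition of $f$, there is an $i$-inseparability flower rooted at (a configuration bounded by the $\omega$-pattern and values of) $\aconfigbar$ of size $\le \rackfunof{\rackcounter}$, giving a replacement $\abloom'$ for $\abloom$. Glue: $\aconfig.\sigma_1$ (shortened to be $\abound$-bounded via \cref{Lemma:RBoundedWitnesses}-style cutting, or directly by pigeonhole since all $i{+}1$ relevant counters are $\le\abound$, giving length $\le \sizeof{\states}\cdot(\abound+1)^{d+n}$) followed by the new flower yields an $(i{+}1)$-flower at $\aconfig$ of total size $(2^{\sizeof{\avass}}\cdot\rackfunof{\rackcounter})^{\bigoof{(d+n)^6}} + \rackfunof{\rackcounter} \le (2^{\sizeof{\avass}}\cdot\rackfunof{\rackcounter})^{\bigoof{(d+n)^6}}$. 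The subcase where counter $i+1$ first exceeds $\abound$ inside one of the loops $\alpha,\beta,\gamma$ is handled analogously, replacing the remaining tail of that loop and the subsequent loops by an $i$-flower after observing that the surplus on counter $i+1$ makes its non-negativity automatic; here one should be careful that properties (ii)–(iv) of the bloom are preserved, which is exactly what \cref{Lemma:RBoundedWitnesses} (and its underlying SNLS argument) is designed to guarantee when we re-solve for the loop multiplicities.

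The main obstacle I anticipate is the bookkeeping in Case 2 when the threshold is crossed inside a loop rather than inside the stem: unlike Rackoff's plain coverability setting, we cannot simply truncate a loop, because the flower conditions (ii)–(iv) couple $\alpha,\beta,\gamma$ to each other and to the scalar $\arat$ of condition (iv). The clean way around this is to not truncate loops directly but to argue that, once counter $i+1$ has a large enough surplus, we are free to treat it as a non-restricted counter, so the remaining loop structure together with any later loops forms (the bloom part of) an $i$-inseparability flower rooted at the crossing configuration, to which the induction hypothesis applies verbatim; \cref{Lemma:RBoundedWitnesses} then re-derives short loops satisfying all four conditions. Making the "large enough surplus" precise — namely that $\abound$ must exceed the total possible decrement $2^{\sizeof{\avass}}\cdot\rackfunof{\rackcounter}$ of the replacement flower — is the one quantitative point to get right, and it only costs a multiplicative $2^{\sizeof{\avass}}$ inside the base of the exponent, which is harmless.
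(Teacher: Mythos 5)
Your Case 2 decomposition captures the core of the paper's argument (split at the first threshold crossing, apply the induction hypothesis to the suffix, use the surplus on the crossed counter to upgrade the resulting $i$-flower to an $(i{+}1)$-flower, and glue), and your Case 1 correctly identifies that the bounded case is discharged by \cref{Lemma:RBoundedWitnesses}. But the case split itself is incomplete, and the fix you propose for Case 1 does not go through.

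The problem is that you split on whether \emph{counter $i{+}1$ specifically} exceeds the threshold. The lemma requires $\abound$-boundedness on \emph{all} counters in $([1,i{+}1]\cap\counters)\setminus\omegaof{\aconfig}$, so you silently miss the subcase where counter $i{+}1$ stays small but some counter $j\le i$ blows up. The paper instead cases on whether the whole $(i{+}1)$-flower is $R(i)$-bounded with $R(i)=2^{|\cV|}\cdot f(i)$; in the unbounded case it locates the first time \emph{any} of the relevant counters exceeds $R(i)$ and declares that counter to be (after relabeling) counter $i{+}1$. Your parenthetical repair attempt in Case 1 — ``first pass to an $i$-flower to learn that the first $i$ coordinates are small'' — is circular: the minimal $i$-flower from $\aconfig$ is a \emph{different} object that need not keep counter $i{+}1$ non-negative at all, so its small size tells you nothing about whether you have a \emph{bounded $(i{+}1)$-flower} on which to invoke \cref{Lemma:RBoundedWitnesses}. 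Once you replace the split by ``flower is $R(i)$-bounded or not'' and relabel the first offending counter as $i{+}1$, both cases become clean and your Case~2 machinery goes through; note also that \cref{Lemma:RBoundedWitnesses} is invoked only in the bounded case (the unbounded case only needs the induction hypothesis plus a pigeonhole cut on the prefix $\sigma.\rho_1$ to bound its length by $|Q|\cdot R(i)^{i+1}$), whereas you suggest re-solving the SNLS again to preserve (ii)--(iv) after the crossing, which is unnecessary because the bloom $\abloom'$ produced by the induction hypothesis already satisfies those conditions by definition.
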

\begin{proof}
Consider an $(\rackcounter+1)$-inseparability flower $\aflower=(\aconfig.\sigma, \abloom)$ with $\abloom=(\finalstate, \counters, \alpha, \beta, \gamma)$ and $\omegacounters = [1, d+n]\setminus\counters$. 
Let~$\rackboundof{\rackcounter} = 2^{\sizeof{\avass}}\cdot\rackfunof{\rackcounter}$ serve as an abbreviation. 
We proceed by a case distinction.  
If $\aflower$ is $\rackboundof{\rackcounter}$-bounded, then Lemma~\ref{Lemma:RBoundedWitnesses} provides another $(\rackcounter+1)$-inseparability flower $(\aconfig.\sigma', \abloom')$ with $\sizeof{\sigma'}+\sizeof{\abloom'}\leq (2^{\sizeof{\avass}}\cdot \sizeof{\states}\cdot \rackboundof{\rackcounter})^{\bigoof{(d+n)^{6}}}=(2^{\sizeof{\avass}}\cdot \rackfunof{\rackcounter})^{\bigoof{(d+n)^6}}$.  
This satisfies the bound stated in the lemma.

If $\aflower$ is not $\rackboundof{\rackcounter}$-bounded, then $\sigma.\rho$ with $\rho=\alpha, \beta, \gamma$ exceeds $\rackboundof{i}$.  
We identify the first moment when this happens, say after $\rho_1$ and for the $(i+1)$-th counter. 
The case where the run exceeds the bound already in $\sigma$ is simpler. 
The run decomposes into 
\begin{align*}
\aconfig\xrightarrow{\sigma}\aconfig_1\xrightarrow{\rho_1}\aconfig'\xrightarrow{\rho_2}\aconfig_2\ .
\end{align*}

We argue that also $(\aconfig'.\rho_2, \abloom)$ is an $(\rackcounter+1)$-flower, which means $\aconfig'.\rho_2$ is an $(\rackcounter+1)$-stem for $\abloom$. 
Since $\rho$ is a loop, it returns to~$\finalstate$. 
We have $\omegaof{\aconfig}=\omegaof{\aconfig'}$ and since $\omegacounters\subseteq\omegaof{\aconfig}$, we get $\omegacounters\subseteq\omegaof{\aconfig'}$. 
As $\abloom$ is a bloom, $\rho$ has a non-negative effect on the counters in $\counters$. 
This means $\coordacc{\aconfig_1}{\counters}\leq\coordacc{\aconfig_2}{\counters}$, and so the counters in $([1, \rackcounter+1]\cap\counters)\setminus\omegaof{\aconfig'}$ remain non-negative when executing $\alpha$, $\beta$, $\gamma$ from $\aconfig_2$ as they did from~$\aconfig_1$. 
Also $\rho_2$ remains non-negative from~$\aconfig'$, because $\rho$ remained non-negative from $\aconfig_1$.  

Since $(\aconfig'.\rho_2, \abloom)$ is an $(\rackcounter+1)$-flower, it is an $\rackcounter$-flower.
The induction hypothesis yields another $\rackcounter$-flower $\aflower'=(\aconfig'.\sigma', \abloom')$ with $\sizeof{\sigma'}+\sizeof{\abloom'}\leq \rackfunof{\rackcounter}$. 
Let $\abloom'=(\finalstate', \counters', \alpha', \beta', \gamma')$ and let~$\omegacounters'$ be the complement of $\counters'$.  

We argue that $\aflower'$ is actually an $(\rackcounter+1)$-flower. 
If the counter $i+1$ that exceeds the bound $\rackboundof{\rackcounter}$ does not belong to $\counters'$, there is nothing to show. 
Otherwise, even $\sigma'.\alpha'.\beta'.\gamma'$ in succession could subtract at most $\rackfunof{\rackcounter}\cdot2^{\sizeof{\avass}}=\rackboundof{\rackcounter}$ tokens from counter $i+1$.  
Since this counter carries more than $\rackboundof{\rackcounter}$ tokens, this leaves us with a positive balance.

We show that also $(\aconfig.\sigma.\rho_1.\sigma', \abloom')$ is an $(\rackcounter+1)$-flower, meaning $\aconfig.\sigma.\rho_1.\sigma'$ is an $(\rackcounter+1)$-stem for $\abloom'$. 
We have $\omegaof{\aconfig}=\omegaof{\aconfig'}$ and so $\omegacounters'\subseteq\omegaof{\aconfig'}$ implies $\omegacounters'\subseteq\omegaof{\aconfig}$. 
It remains to show that the counters in $([1, \rackcounter+1]\cap\counters')\setminus \omegaof{\aconfig}$ remain non-negative. 
Consider the prefix $\sigma.\rho_1$ executed from $\aconfig$. 
For the counters that also belong to $\counters$, non-negativity holds as $\aflower$ is an $(i+1)$-flower. 
Assume there was a counter in $([1, \rackcounter+1]\cap\counters')\setminus \omegaof{\aconfig}$ that did not belong to $\counters$. 
Then it belonged to $\omegacounters$.  
But as $\omegacounters\subseteq\omegaof{\aconfig}$, we had a contradiction.
For the suffix $\sigma'$ executed from $\aconfig'$, and for $\alpha', \beta', \gamma'$, non-negativity holds as $\aflower'$ is an $(\rackcounter+1)$-flower. 

To estimate the size of the newly constructed flower rooted in $\aconfig$, we assume $\sigma.\rho_1$ does not repeat configurations on the first $(\rackcounter+1)$-counters. 
If it does, we cut out the infix and adapt the values of the counters that are allowed to fall below zero. 
Then the length of $\sigma.\rho_1$ is bounded by $\sizeof{\states}\cdot\rackboundof{\rackcounter}^{\rackcounter+1}$, and we have 
\begin{align*}
\sizeof{\sigma.\rho_1.\sigma'}+\sizeof{\abloom'}
\leq&\ \sizeof{\states}\cdot\rackboundof{\rackcounter}^{\rackcounter+1}+\rackfunof{\rackcounter} 
\leq\ (2^{\sizeof{\avass}}\cdot \rackfunof{\rackcounter})^{\bigoof{\rackcounter+1}}
\leq (2^{\sizeof{\avass}}\cdot \rackfunof{\rackcounter})^{\bigoof{(d+n)^6}}\ .\qedhere
\end{align*}
\end{proof}

It remains to solve the recurrence.
Let $a=2^{\sizeof{\avass}}$ and $b=\bigoof{(d+n)^6}$. 
We have 
\begin{align*}
\rackfunof{d+n} = (a\ldots (a\cdot \rackfunof{0})^b\ldots )^b \leq (a^{d+n}\cdot \rackfunof{0})^{b^{d+n}}\ .  
\end{align*}
Since $\rackfunof{0}=(2^{\sizeof{\avass}}\cdot\sizeof{\states})^{\bigoof{(d+n)^6}}$, we obtain the promised $\flowerbound=\rackfunof{d+n}\leq 2^{\sizeof{\avass}^{\bigoof{(d+n)^2}}}$.

\label{beforebibliography}
\newoutputstream{pages}
\openoutputfile{main.pages.ctr}{pages}
\addtostream{pages}{\getpagerefnumber{beforebibliography}}
\closeoutputstream{pages}

\bibliographystyle{plainurl}%
\bibliography{bibliography.bib}

\newoutputstream{todos}
\openoutputfile{main.todos.ctr}{todos}
\addtostream{todos}{\arabic{@todonotes@numberoftodonotes}}
\closeoutputstream{todos}

\label{endofdocument}
\newoutputstream{pagestotal}
\openoutputfile{main.pagestotal.ctr}{pagestotal}
\addtostream{pagestotal}{\getpagerefnumber{endofdocument}}
\closeoutputstream{pagestotal}

\conferenceFull{}{\newpage}
\conferenceFull{}{\appendix}
\conferenceFull{}{\section{Additional material on Section~\ref{section:quant-elim}}\label{appendix-snls}

We continue our notation $n=\colof{\bA(\param)}$ and $m=\rowof{\bA(\param)}$ from the main paper.
In contexts like $\degof{-}$, $\maxcof{-}$, etc. we ommit the parameter $\param$ where convenient.
For example we write $\degof{\bA}$ instead of $\degof{\bA(\param)}$.

\subsection{Precise bounds for Theorem~\ref{Theorem:QElim}}\label{precise-bounds-qelim}
Here, we prove that the DNFLB $\Phi$ constructed in \cref{phi-qelim} satisfies the bounds promised in \cref{Theorem:QElim}.

To that end, it suffices to compute for all~$R$ the maximal degrees and maximal coefficients of the terms $\detof{\bD_R(\param)}$, $\bD(\param) \cdot \adjof{\bD_R(\param)} \cdot \bc_R(\param)$, and $\detof{\bD_R(\param)} \cdot \bc(\param)$.
Let $d_{R,i,j}(\param)$ be the entry in the $i$-th row and $j$-th column of $\bD_R(\param)$.
Further let $[1,n]=P\uplus Q$ where $Q$ is the set of row indices $i<n$ for which a $j_i<n$ exists with $d_{R, i, j_i}=1$ and $d_{R, i, j}=0$ for all $j\neq j_i$.
We construct $\bD_R(\param)$ by first augmenting $\bA$ by $n$ rows of the form $(\ldots, 1, \ldots)$ with exactly one non-zero component, and then choosing $\sizeof{R}=n$ rows.
This implies $\sizeof{P}\leq m$.
We recall Leibniz' formula for computing the determinant $\detof{\bD_R(\param)} = \sum_{\sigma \in S_n} \sgnof{\sigma}\cdot \prod_{i = 1}^n d_{R,i,\sigma(i)}$. 
Define $T=\setcond{\sigma\in S_n}{\forall i\in Q.\; \sigma(i)=j_i}$, where $j_i$ is the non-zero valued column of the $i$-th row in $\bD_{R}$.
We observe that if $\sigma \in S_n$ with $\sigma(i)\neq j_{i'}$ for some $i'\in Q$, then $\prod_{i = 1}^n d_{R,i,\sigma(i)}=0$ because it contains the operand $d_{R, i', j_{i'}}=0$.
Then, 
$$\sum_{\sigma \in S_n} \sgnof{\sigma}\cdot \prod_{i = 1}^n d_{R,i,\sigma(i)}=\sum_{\sigma \in T}\sgnof{\sigma}\cdot \prod_{i = 1}^n d_{R,i,\sigma(i)}=\sum_{\sigma \in T}\sgnof{\sigma}\cdot \prod_{i \in P} d_{R,i,\sigma(i)}.$$
The latter equality follows from the fact that $\sigma(i)=j_i$ for all $\sigma\in T$ and $i\in Q$, where we have $d_{R, i, j_i}=1$ by definition of $Q$.
The formula shows $\degof{\detof{\bD_R(\param)}} \leq \sizeof{P} \cdot \degof{\cS'}\leq m \cdot \degof{\cS'}$. 
Because $\sigma(i)=j_i$ for all $\sigma\in S_n$ and $i\in Q$, this fixes the images of all but $\sizeof{P}$ inputs.
Since $\sigma\in T$ are permutations, we can conclude $\sizeof{T}=\sizeof{P}!\leq m!$.
Regarding the maximal coefficient, note that each polynomial $d_{R,i,j}(\param)$ is an additive term of at most $\degof{\bD_R(\param)} + 1$ monomials, each with a potentially large coefficient.
Therefore, Leibniz' formula and the above equality yield
  \begin{align*}
    \maxcof{\detof{\bD_R(\param)}}~ &\leq m!\cdot\big((\degof{\bD_R(\param)}+1) \cdot \maxcof{\bD_R(\param)}\big)^{\sizeof{P}} \\
    &\leq m!\cdot\big((\degof{\bD_R(\param)}+1) \cdot \maxcof{\bD_R(\param)}\big)^{m} \\
    &\leq 2^{m \log_2 m} \cdot (\degof{\cS'}+1)^m \cdot (\maxcof{\cS'})^{m}\\
    &\leq (\degof{\cS'} \cdot \maxcof{\cS'})^{\bigoof{m^2}}.
  \end{align*}
Recall that each entry of the adjugate $\adjof{\bD_R(\param)}$ can be computed by deleting one row and one column from $\bD_R(\param)$ and computing the determinant of the resulting matrix.
This implies $\degof{\adjof{\bD_R(\param)}}\leq\degof{\detof{\bD_R(\param)}}$.
We cannot conclude $\maxcof{\adjof{\bD_R(\param)}} \leq \maxcof{\detof{\bD_R(\param)}}$, because terms with conflicting signs in the Leibniz formula might negate each other in $\detof{\bD_R(\param)}$, while not doing so in $\adjof{\bD_R(\param)}$.
However, the approximation we do above for $\maxcof{\detof{\bD_R(\param)}}$ assumes that all summands have the same sign, and therefore is also an upperbound for $\maxcof{\adjof{\bD_R(\param)}}$.
We have 
  \begin{align*}
    \degof{\bD(\param) \cdot \adjof{\bD_R(\param)} \cdot \bc_R(\param)}~
    &\leq 2\cdot \degof{\cS'} + m \cdot \degof{\cS'}\\
    &= (m + 2) \cdot \degof{\cS'}\\
    &= \bigoof{m \cdot \degof{\cS'}}\ . 
  \end{align*}
For the maximal coefficient, we again remark that each polynomial consists of at most $\deg + 1$ monomials,
and further that computing each entry per matrix multiplication here involves adding up $n$ terms.

  Thus, we get
  \begin{align*}
    &\maxcof{\bD(\param) \cdot \adjof{\bD_R(\param)} \cdot \bc_R(\param)} \\
    &~\begin{aligned}
      \leq~ & (\degof{\bD(\param)} \cdot \maxcof{\bD(\param)}) \cdot n \cdot (\degof{\adjof{\bD_R(\param)}} \cdot \maxcof{\adjof{\bD_R(\param)}}) \\
      &\cdot n \cdot (\degof{\bc_R(\param)} \cdot \maxcof{\bc_R(\param)}) \\
      \leq~ & (\degof{\cS'} \cdot \maxcof{\cS'})^{2} 
      \cdot n^{2} \cdot (m \cdot \degof{\cS'})
      \cdot (\degof{\cS'} \cdot \maxcof{\cS'})^{\bigoof{m^2}}\\
      =~ &  (n\cdot \degof{\cS'} \cdot \maxcof{\cS'})^{\bigoof{m^2}}\ .
    \end{aligned}
  \end{align*}
  $\degof{\detof{\bD_R(\param)} \cdot \bc(\param)} \leq  m \cdot \degof{\cS'} + \degof{\cS'} = \bigoof{m \cdot \degof{\cS'}}$,
  \begin{align*}
    &\maxcof{\detof{\bD_R(\param)} \cdot \bc(\param)}\\
    &~\leq (\maxcof{\detof{\bD_R(\param)}}\cdot \degof{\detof{\bD_R(\param)}})\cdot m\cdot(\maxcof{\bc(\param)}\cdot\degof{\bc(\param)})\\
    &~\leq (\degof{\cS'} \cdot \maxcof{\cS'})^{\bigoof{m^2}}\cdot \degof{\cS'}^{2}
    \cdot \maxcof{\cS'} \\
    &~= (\degof{\cS'} \cdot \maxcof{\cS'})^{\bigoof{m^2}}\ .
  \end{align*}
  Since $\degof{\cS'} = \degof{\cS}$ and $\maxcof{\cS'} = \maxcof{\cS}$ by construction,
  we obtain the required values for polynomials appearing in $\qelimformdef$.

\subsection{Precise bounds for Lemma~\ref{exist-solutions-size}}\label{appendix-exist-solutions-size}
In this subsection, we provide a proof of \cref{exist-solutions-size} that yields the precise bounds promised in the statement.

With a DNFLB $\qelimformdef$ in hand, we now establish a bound on the size of a smallest solution.  
To this end we apply a root separation theorem that gives a lower bound on the distance of distinct real roots of a polynomial. 
We define the \emph{real roots} of a polynomial $\apoly(\param)\in\ints[\param]$ as  
$\realrootsof{\apoly(\param)} = \set{-\infty,\infty} \cup \setcond{\areal \in \R}{\apoly(\areal) = 0}$. 
The advantage of including $-\infty$ and $\infty$ is that the entire set $\R$ decomposes into intervals between adjacent roots of $\apoly(\param)$, which will save us case distinctions. 
Thus, recall that the number of (even complex) roots of a polynomial of degree~$d$ is at most $d+2$. 
For a DNFLB $\qelimformdef$, we define $\realrootsof{\qelimformdef}$ as the set of all real roots among its polynomials. 
Finally, we define $\Separation{\apoly(\param)} = \min_{r \neq r' \in \realrootsof{\apoly(\param)}}\sizeof{r-r'}$ as the minimal distance between two distinct real roots. 
We restate Rump's bound from the main paper.
\rump*

We now formulate the feasibility of $\qelimformdef$ in
terms of constraints on real roots.
Consider one of its lower bound constraints $\apoly(\param) > 0$.  
Let $-\infty = r_1 < r_2 < \ldots < r_k = \infty$ be the real roots of $\apoly(\param)$.
The key observation is that, on the entire interval $(r_i, r_{i+1})$, the polynomial $\apoly(\param)$ is either positive or negative. 
We evaluate up-front which is the case, and collect in the set $P\subseteq \realrootsof{\apoly(\param)}$ the roots $r_i$ where a positive interval starts.
Then the lower bound constraint $\apoly(\param) > 0$ is equivalent to $\bigvee_{r_i\in P} r_i<\param<r_{i+1}$. 
We refer to a formula of the form $\ell<\param<u$ as an \emph{interval constraint}, and call $\ell$ a lower and $u$ an upper bound. 

Consider a conjunction of interval constraints $\bigwedge_{i\in I}\ell_i < \param< u_i$ with $I$ a finite index set. 
A rational number $\arat\in\rats$ will only satisfy this conjunction if it is larger than the largest lower bound $\ell=\max_{i\in I}\ell_i$, and smaller than the least upper bound $u=\min_{i\in I}u_i$. 
This means the conjunction is equivalent to the single interval constraint $\ell<\param<u$. 

To deal with lower bound constraints $\apoly(\param) \geq 0$ that are not strict, we
generalize interval constraints to admit both relations, $<$ and $\leq$. 
To generalize the elimination of conjunctions, these relations can also be mixed, so we also consider $\ell\leq \param< u$ an interval constraint.
We refer to a disjunction of interval constraints as a DINC $\dincformdef$. 
We use $\Bounds{\dincformdef}$ for the set of bounds that appear in $\dincformdef$. 
The discussion yields the following.

\begin{lemma}\label{Lemma:DINC}
For every DNFLB~$\qelimformdef$, there is a DINC $\dincformdef$ that satisfies $\qelimformdef\ratequiv\dincformdef$ and $\Bounds{\dincformdef}\subseteq \realrootsof{\qelimformdef}$. 
\end{lemma}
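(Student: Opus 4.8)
\textbf{Proof plan for \cref{Lemma:DINC}.}
The statement packages the informal discussion that precedes it, so the plan is to assemble three observations into one lemma. First I would take the DNFLB $\qelimformdef = \bigvee_{i \in I} \bigwedge_{j \in J_i} \qelimform_{i,j}(\param)$ and work inside a single conjunctive clause $\bigwedge_{j \in J_i} \qelimform_{i,j}(\param)$, since a disjunction of DINCs is again a DINC and the bounds only accumulate. Within such a clause, each literal is a lower bound constraint $\apoly(\param) \geq 0$ or $\apoly(\param) > 0$. For a fixed such literal, list the real roots $-\infty = r_1 < r_2 < \cdots < r_k = \infty$ of $\apoly$ in $\realrootsof{\apoly(\param)}$; on each open interval $(r_i, r_{i+1})$ the sign of $\apoly$ is constant, so by collecting the indices where a positive (resp.\ nonnegative) interval begins into a set $P$, the literal becomes a finite disjunction $\bigvee_{r_i \in P} (r_i \mathrel{\lhd} \param \mathrel{\lhd} r_{i+1})$ of interval constraints, where $\lhd$ is $<$ or $\leq$ depending on whether the endpoint is a root at which $\apoly$ vanishes (for the non-strict case the endpoints themselves may be admitted, which is why we allow mixed relations). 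All bounds introduced here lie in $\realrootsof{\apoly(\param)} \subseteq \realrootsof{\qelimformdef}$.

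Next I would push the disjunctions over each literal outward through the conjunction using distributivity, so that $\bigwedge_{j \in J_i} \qelimform_{i,j}$ turns into a disjunction of conjunctions of interval constraints. The third ingredient is the collapse of a conjunction of interval constraints: $\bigwedge_{t \in T} (\ell_t \mathrel{\lhd_t} \param \mathrel{\lhd_t'} u_t)$ is satisfied by a rational $\arat$ iff $\arat$ exceeds every lower bound and falls below every upper bound, hence iff $\ell \mathrel{\lhd} \param \mathrel{\lhd'} u$ where $\ell = \max_t \ell_t$ and $u = \min_t u_t$, with the relations chosen to be strict unless the maximizing (resp.\ minimizing) bound came with a non-strict relation (and if several tie, a non-strict relation survives only if all tying occurrences were non-strict). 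The resulting single bounds $\ell, u$ are among the $\ell_t, u_t$, hence still in $\realrootsof{\qelimformdef}$. Composing: each clause of $\qelimformdef$ becomes a finite disjunction of single interval constraints with bounds in $\realrootsof{\qelimformdef}$, and taking the disjunction over all clauses yields the desired DINC $\dincformdef$ with $\Bounds{\dincformdef} \subseteq \realrootsof{\qelimformdef}$ and $\qelimformdef \ratequiv \dincformdef$.

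The only mild subtlety — and the step I would be most careful about — is the bookkeeping of strict versus non-strict relations when two endpoints coincide (e.g.\ a root of one polynomial equal to a root of another, or an interval boundary meeting a point where the literal is satisfied with equality). This is a finite case analysis and does not affect the bound claim, since the lemma only asserts containment of bounds, not strictness type; but stating the reduction cleanly requires the generalized interval constraints $\ell \mathrel{\lhd} \param \mathrel{\lhd'} u$ with $\lhd, \lhd' \in \{<, \leq\}$ that were introduced just before the lemma. Everything else is routine propositional rewriting, and the equivalence $\qelimformdef \ratequiv \dincformdef$ holds because at each step we only used that the sign of an integer polynomial is constant between consecutive real roots, which is true over $\R$ and hence a fortiori determines membership for rational $\arat$.
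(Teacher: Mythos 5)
Your proposal is correct and follows essentially the same route as the paper: the paper presents precisely this argument (sign-constancy between consecutive real roots converts each literal to a disjunction of interval constraints, distributivity and the max-lower-bound/min-upper-bound collapse give a DINC) as the informal discussion immediately preceding the lemma and states the lemma as its summary. The only differences are cosmetic---you spell out the distribution step and the strict/non-strict bookkeeping a bit more explicitly than the paper does.
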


With this, we are finally ready to prove \cref{exist-solutions-size}.

\begin{proof}[Proof of \cref{exist-solutions-size}]
Consider an SNLS $\cS(\param, \by) = \bA(\param)\cdot \by \geq \bb(\param) \wedge \by \geq 0$  that is feasible (over the rationals). 
Let $\bA(\param)$ be an $m \times n$ matrix. 
We first apply \cref{Theorem:QElim} to obtain the DNFLB $\qelimformdef$ whose polynomials have a degree bounded by $\bigoof{m \cdot \degof{\cS}}$ and a maximal coefficient bounded by $n^{2}\cdot (\degof{\cS} \cdot \maxcof{\cS})^{\bigoof{m^2}}$. 
We then apply \cref{Lemma:DINC} to turn this DNFLB into a DINC $\dincformdef$. 

Since the SNLS is feasible and the two transformations yield formulae that are equivalent over the rationals, there is an interval constraint $\ell<\param<u$ in $\dincformdef$ that has a rational solution.
We show that there is even a rational $\arat=\frac{a}{b}$ with $\ell<\arat<u$
that satisfies the bound promised by \cref{exist-solutions-size}. 
We proceed by a case distinction along $\ell$ and $u$. 

\begin{description}
    \item[Case $\bm{\ell < u}$:]
    In the special case $\ell = -\infty$ and $u = \infty$, we pick $a = 0$ and $b = 1$ as a solution.
    Assume $u \neq \infty$, the case $\ell \neq -\infty$ is similar.
    Then $\ell\in \realrootsof{\apoly_1}$ and $u \in \realrootsof{\apoly_2}$ for two polynomials in $\qelimformdef$.
    Note that $\apoly_1$ and $\apoly_2$ may be different.
    However, $\ell$ and $u$ are both roots of the product polynomial $\apoly_\times = \apoly_1\cdot\apoly_2$, meaning 
    $\ell, u\in \realrootsof{\apoly_\times}$. 
    Note that $\degof{\apoly_\times} = \degof{\apoly_1} +\degof{\apoly_2}\in\bigoof{m \cdot \degof{\cS}}$ and
    $\unarysizeof{\apoly_\times} \leq \unarysizeof{\apoly_1} \cdot \unarysizeof{\apoly_2}$.
    Therefore using Rump's Bound we get
    \begin{align*}
      & \sizeof{u - \ell}\\
      ~>~& \big(\degof{\apoly_\times}^{\degof{\apoly_\times}+1}
      \cdot (1+\sizeof{\apoly_\times}_1)^{2\degof{\apoly_\times}}\big)^{-1}\\
      ~\geq~& \left(\begin{aligned}
      &\bigoof{m\cdot\degof{\cS}}^{\bigoof{m\cdot\degof{\cS}}} \\
      &\cdot \Big(\bigoof{m\cdot\degof{\cS}} \cdot (n\cdot \degof{\cS} \cdot \maxcof{\cS})^{\bigoof{m^2}}\Big)^{\bigoof{m\cdot\degof{\cS}}}
      \end{aligned}\right)^{-1}\\
      ~\geq~& (n\cdot \degof{\cS} \cdot \maxcof{\cS})^{-\bigoof{m^3\cdot\degof{\cS}}}.
    \end{align*}
    In the following, we use $B$ as a shorthand for
    $(n\cdot \degof{\cS} \cdot \maxcof{\cS})^{\bigoof{m^3\cdot\degof{\cS}}}$.
    Immediately, we get $B(u - \ell) > 1$.
    Thus, there is an integer $a \in \Z$ such that $B \ell < B u - 1 \leq a < B u$.
    Dividing by $B$ yields $\ell < \frac{a}{B} < u$, meaning we can choose a
    denominator $b \in \N \setminus \{0\}$ for a suitable solution with $b \leq B$. 
    Let $u$ be a root of $\apoly_1$. 
    We apply Cauchy's bound Lemma~\ref{cauchy-bound} on root sizes,
    and get
    \begin{equation*}
      \sizeof{u} \leq 1 + \sizeof{\apoly_1}_1 \leq \bigoof{m \cdot \degof{\cS}} \cdot (n\cdot \degof{\cS} \cdot \maxcof{\cS})^{\bigoof{m^2}}
      \leq (n\cdot \degof{\cS} \cdot \maxcof{\cS})^{\bigoof{m^2}},
    \end{equation*}
    and therefore $\sizeof{a} \leq B\sizeof{u} + 1 \leq B \cdot (n\cdot \degof{\cS} \cdot \maxcof{\cS})^{\bigoof{m^2}}$.
    Due to the $\mathcal{O}$-notation, the second factor is subsumed by $B$ here.
  \item[Case $\bm{\ell = u}$:]
    Then $\frac{a}{b} = \ell = u$, and this is a root of a polynomial $\apoly(\param)$ in $\qelimformdef$.
    Thus, there is a polynomial $p$ in $\qelimformdef$ such that $\frac{a}{b} \in \Roots{p}$.
    Since also $\frac{a}{b} \in \Q$ we get that $a,b \leq \maxc(p)$
    using the Rational Root Theorem (Theorem~\ref{rational-root-theorem}).
    This implies $a,b \leq (n\cdot \degof{\cS} \cdot \maxcof{\cS})^{\bigoof{m^2}}$.
  \end{description}
  In all cases $B = (n\cdot \degof{\cS} \cdot \maxcof{\cS})^{\bigoof{m^3\cdot\degof{\cS}}}$
  is a bound on the complexity of $\param$, as required.
\end{proof}

\subsection{Proof of Theorem~\ref{solution-size}}\label{appendix-solution-size}
We can now put the ingredients together and show \cref{solution-size}. 
We write $n=\colof{\cS}=\colof{\cS'}$ and $m=\rowof{\cS}$, as in the previous subsections.
According to \cref{exist-solutions-size}, we know that if our system
\[ \cS := \bA(\param)\cdot \by \geq \bb(\param) \wedge \by \geq 0\]
is feasible with some $(t,\bs)\in\Q\times\Q^n$, then for some $t\in\Q$ with 
\[\unarysizeof{t}\le (n\cdot \degof{\cS} \cdot \maxcof{\cS})^{\bigoof{m^3\cdot\degof{\cS}}}. \]
We can now apply the argument from \cref{Theorem:QElim} again: We have shown there that if there exists a solution $\bs$ for a particular $t$ with $\bA(t)\cdot\bs\geq \bb(t)$ and $\bs\geq\bzero$, then for some $R\subseteq[1,m+n]$, for which $\bD_R(t)$ is invertible, the following $\bs^*$ is a solution to $\cS$:
\[ \bs^* = \bD_R^{-1}(t)\cdot c_R(t) = \frac{\adj(\bD_R(t))}{\det(\bD_R(t))}\cdot c_R(t). \]
    
Moreover, in the proof of \cref{Theorem:QElim}, we have seen that
\begin{align*}
	\deg(\det(\bD_R(x))) & \le m\cdot \deg(\cS') \\
	\deg(\adj(\bD_R(x))) & \le m\cdot \deg(\cS') \\
	\maxc(\det(\bD_R(x))) & \le (n\cdot \deg(\cS')\cdot\maxc(\cS'))^{\bigoof{m^2}} \\
	\maxc(\adj(\bD_R(x))) & \le (n\cdot \deg(\cS')\cdot\maxc(\cS'))^{\bigoof{m^2}} 
\end{align*}
\begin{align*}
	\unarysizeof{\adj(D_R(t))}&\le \deg(\adj(\bD_R(x)))\cdot \maxc(\adj(\bD_R(x)))\cdot\unarysizeof{t}^{\deg(\adj(\bD_R(\param)))} \\
	&\le m\cdot\deg(\cS')\cdot (n\cdot \deg(\cS')\cdot\maxc(\cS'))^{\bigoof{m^2}}\cdot \unarysizeof{t}^{m\cdot\deg(\cS')} \\
  &\le (n\cdot \deg(\cS')\cdot\maxc(\cS'))^{\bigoof{m^2}}\cdot \unarysizeof{t}^{m\cdot\deg(\cS')} \\
	&\le (n\cdot \deg(\cS')\cdot\maxc(\cS'))^{\bigoof{m^2}}\cdot \left((n\cdot \degof{\cS}\cdot\maxc\cS)^{\bigoof{m^3\cdot\degof{\cS}}}\right)^{m\cdot\deg(\cS')} \\
	&\le (n\cdot \degof{\cS}\cdot\maxcof{\cS})^{\bigoof{m^4\cdot\degof{\cS}^2}}
\end{align*}

Since for $\det(D_R(x))$, we have the same bounds on maximal coefficients and degree as for $\adj(D_R(x))$, we can show similarly:
\begin{align*}
	\unarysizeof{\det(\bD_R(t))}\le  (n\cdot\degof{\cS}\cdot\maxcof{\cS})^{\bigoof{m^4\cdot\degof{\cS}^2}}
\end{align*}
Moreover, clearly:
\[ \unarysizeof{c_R(t)}\le \maxcof{\cS}\cdot \unarysizeof{t}\le (n\cdot\degof{\cS} \cdot \maxc\cS)^{\bigoof{m^3\cdot\degof{\cS}}}. \]
Furthermore, we have
\[ \unarysizeof{\adj(\bD_R(t))\cdot c_R(t)} \le n\cdot (\unarysizeof{\adj(\bD_R(t))}+\unarysizeof{c_R(t)})\le (n\cdot\degof{\cS}\cdot\maxcof{\cS})^{\bigoof{m^4\cdot\degof{\cS}^2}}. \]
Therefore, we can upper bound $\bs^*$ as follows:
\begin{align*}
	\unarysizeof{\bs^*}&=\left\|\frac{\adj(\bD_R(t))}{\det(\bD_R(t))}\cdot c_R(t)\right\|_1\le \unarysizeof{\adj(\bD_R(t))\cdot c_R(t)}+\unarysizeof{\det(\bD_R(t))} \\
	&\le (n\cdot \degof{\cS}\cdot\maxcof{\cS})^{\bigoof{m^4\cdot\degof{\cS}^2}} + (n\cdot \degof{\cS}\cdot\maxcof{\cS})^{\bigoof{m^4\cdot\degof{\cS}^2}} \\
	&\le (n\cdot \degof{\cS}\cdot\maxcof{\cS})^{\bigoof{m^4\cdot\degof{\cS}^2}} \\
	&\le (\colof{\cS}\cdot \degof{\cS}\cdot\maxcof{\cS})^{\bigoof{\rowof{\cS}^4\cdot\degof{\cS}^2}}
\end{align*}
As we mentioned in the main paper, if we can ensure that all components of $c_{R}(t)$ have the same denominator, we also ensure this for $\bs^{*}$.
To avoid clutter, we did not argue this property in the bound above.
However, this does not change the bound.
In order to ensure that the components of $c_{R}(t)$ all have the same denominator, we only need to multiply the denominator and nominator of each component by a factor of at most $q^{\degof{\cS}}$ where $q$ is the denominator of $t$.
This is because the components of $c_{R}(t)$ are obtained by summing integer multiples of $1, t^{1}, \ldots, t^{\degof{\cS}}$.
The bit size of $c_{R}(t)$, and thus the resulting $\bs^{*}$, blows up by a factor of at most $\degof{\cS}$, which is absorbed by the term $(\colof{\cS}\cdot \degof{\cS}\cdot\maxcof{\cS})^{\bigoof{\rowof{\cS}^4\cdot\degof{\cS}^2}}$.
This completes the proof of \cref{solution-size}.

}
\conferenceFull{}{
\section{Additional material on Section~\ref{section:search}}\label{appendix:search}
In this section we show Lemma~\ref{Lemma:InsepCharacterization}.
To do this, we use the inseparability characterization given in 
\cite{Baumann23}.
We break Lemma~\ref{Lemma:InsepCharacterization} down into its two directions.
\begin{lemma}\label{Lemma:CharacterizationBck}
  Let $\avass$ be an $n$-visible VASS.
  If there is a coverable inseparability flower $(\aconfig.\sigma, \abloom)$, then $\notregsep{\langof{\avass}}{D_{n}}$.
\end{lemma}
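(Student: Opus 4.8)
The plan is to follow the template of the corresponding direction in~\cite{Baumann23}: there, inseparability of $\langof{\avass}$ from $\dycklang{n}$ is witnessed by a structure in a Karp--Miller graph (after a pumpability transformation), and the point of the reformulation via inseparability \emph{flowers} is precisely that a coverable flower already packages all the data of such a witness without our having to build that graph. So I would write $\abloom = (\finalstate, \counters, \alpha, \beta, \gamma)$, set $\omegacounters = [1,d+n]\setminus\counters$, let $\aconfig.\sigma$ be the given stem ending in $\aconfig'=(\finalstate,\bmm')$, and assume $\aconfig$ is coverable in $\avass\times\dyckvass{n}$. To establish $\notregsep{\langof{\avass}}{\dycklang{n}}$ it suffices to show that every $\omega$-regular language $S$ with $\langof{\avass}\subseteq S$ already contains a word of $\dycklang{n}$; the operational content of ``the flower is an inseparability witness'' is to produce, inside any such $S$, a Dyck word.

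The first step is to convert coverability of the extended configuration $\aconfig$ into arbitrarily large concrete slack. Using the definition of coverability of extended configurations together with the stem property of \cref{Def:bloom}, for every $K\in\nat$ there is a concrete run $\tau_K$ of $\avass\times\dyckvass{n}$ from the initial configuration to a configuration $(\finalstate,\bv)$ with $\bv[\omegacounters]\geq K$ in every component and $\bv[\counters]\geq\bmm'[\counters]$, along which --- and along single executions of $\alpha$, $\beta$, $\gamma$ afterwards --- the counters in $\counters$ never go negative. This is the point where the stem of the flower plays the role of a path into the Karp--Miller witness.

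The second and main step is to assemble, for each $m\in\nat$, a genuine accepting run of $\avass$ whose label has a Dyck-valid prefix of length at least $m$. Starting from the target configuration of $\tau_{K(m)}$ with $K(m)$ chosen large, I would iterate the three loops $\alpha,\beta,\gamma$ according to a carefully chosen schedule: conditions (i) and (ii) of \cref{Def:bloom} keep the internal counters non-decreasing (so the constructed path is always a legal run of $\avass$), condition (iii), $\balanceof{\alpha}+\balanceof{\beta}\geq\zerovec$ on the external counters, keeps the Dyck counters from dropping across each $\alpha\beta$-block, and condition (iv), $\balanceof{\alpha}+\balanceof{\beta}+\balanceof{\gamma}=\arat\cdot\balanceof{\alpha}$, imposes the rigid linear relation that lets the Dyck effect of the ``correction'' loop $\gamma$ (needed to keep the internal counters up in the long run) be cancelled against a controlled number of extra $\alpha$-iterations governed by $\arat$. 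The large slack $K(m)$ on the $\omegacounters$-counters absorbs the transient dips that occur inside $\gamma$, so the Dyck counters stay non-negative for a number of steps that grows with $K(m)$, while the run still visits $\finalstate$ infinitely often; pushing $m$ up gives Dyck-valid prefixes of unbounded length. (If it turns out that the periodic part of the run keeps \emph{all} Dyck counters non-negative forever, then we have directly exhibited a word in $\langof{\avass}\cap\dycklang{n}$ and are done.)

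Finally, the family $w_m\in\langof{\avass}\subseteq S$ with Dyck-valid prefixes of length $\to\infty$ has to be turned into an element of $S\cap\dycklang{n}$. Since $\dycklang{n}$ is a closed (safety) language and $S$ is $\omega$-regular, a pumping argument on a finite automaton for $S$ shows that a regular language disjoint from $\dycklang{n}$ admits a uniform bound $N$ beyond which every one of its words already has a Dyck-invalid prefix --- which $w_{N+1}\in S$ contradicts; alternatively one applies a König-lemma compactness argument to the runs $\tau_{K(m)}$ inside $\avass\times\dyckvass{n}$ to extract an accepting run of the product, whose label lies in $\langof{\avass}\cap\dycklang{n}$. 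The hard part will be the second step: no single loop among $\alpha,\beta,\gamma$ is simultaneously ``good'' for the internal and for the external counters, so the Dyck-valid-but-internally-legal prefixes can only be built by interleaving all three in the right multiplicities, and this is exactly where the scalar condition~(iv) is indispensable; matching this up cleanly with the witness notion of~\cite{Baumann23} (whose witness lives in the Karp--Miller graph of a pumpability-transformed VASS rather than in $\avass$ itself) is the remaining bookkeeping obstacle.
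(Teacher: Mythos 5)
Your high-level framing is in the same spirit as the paper's: you want to use coverability of $\aconfig$ to get arbitrarily large concrete slack on the $\omegacounters$-counters, append iterations of $\alpha,\beta,\gamma$, and show that the resulting accepting runs of $\avass$ produce words that cannot live in a regular set disjoint from $\dycklang{n}$. However, your final step contains a genuine error, and it is precisely the step the paper handles with its key borrowed tool.

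Your proposed closing argument is that an $\omega$-regular $S$ disjoint from $\dycklang{n}$ must admit a uniform bound $N$ such that every word of $S$ already has a Dyck-invalid prefix of length $\le N$, and you argue that your family $w_m \in \langof{\avass}\subseteq S$ with unboundedly long Dyck-valid prefixes contradicts this. This claim about regular languages is false. Take $n=1$ and $S=a^*\bar a^\omega$. Then $S$ is $\omega$-regular, $S\cap\dycklang{1}=\emptyset$ (the prefix $a^m\bar a^{m+1}$ of $a^m\bar a^\omega$ has balance $-1$), and yet the word $a^m\bar a^\omega \in S$ has the Dyck-valid prefix $a^m\bar a^m$ of length $2m$. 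So no such $N$ exists. Your alternative via K\"onig's lemma on the runs $\tau_{K(m)}$ would aim to exhibit an accepting run of $\avass\times\dyckvass{n}$, i.e.\ a word in $\langof{\avass}\cap\dycklang{n}$; but the hypothesis of the lemma (existence of a coverable flower) does not imply $\langof{\avass}\cap\dycklang{n}\ne\emptyset$ --- that is exactly why inseparability is a weaker and subtler notion than intersection non-emptiness for this problem, so this route would prove something strictly too strong and hence cannot go through in general.

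The paper avoids both pitfalls by invoking the \emph{basic separator characterization} from~\cite{Baumann23} (its Lemma~\ref{Lemma:BasicSeparators}): $\regsep{\langof{\avass}}{\dycklang{n}}$ holds if and only if $\langof{\avass}$ is covered by finitely many languages $P_{i,k}$ and $S_{\aweight,k}$. This pins down exactly what a regular separator \emph{can look like}, which is the structural information missing from a bare ``unbounded Dyck-valid prefixes'' argument. The proof then assumes such a cover with parameters $k$ and $X$, uses coverability to pick a concrete run $\apath$ reaching $(q_f,\bmm')$ with the $\omegacounters$-components pushed above $3(k+1)y$ (where $y$ bounds prefix dips of the loops), and considers the single accepting run $(q_0,\zerovec).\apath.(\alpha^{k+1}\beta^{k+1}\gamma^{k+1})^\omega$. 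It then shows its label lies in no $P_{i,k}$ (using the flower's partition $\counters\uplus\omegacounters$) and in no $S_{\aweight,k}$ (citing the argument of~\cite{Baumann23}, which is where condition~(iv) with the scalar $\arat$ is actually used). Notice the schedule is the fixed block $\alpha^{k+1}\beta^{k+1}\gamma^{k+1}$ repeated forever --- it does not try to keep the Dyck counters non-negative, in contrast to the ``correction'' scheduling you sketch; the point is to violate membership in the basic separators, not to stay inside $\dycklang{n}$. Your proof would need to either import that characterization or supply a genuinely new argument about which regular sets can contain $\langof{\avass}$; as written, the gap is not a bookkeeping obstacle but a missing main idea.
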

\begin{lemma}\label{Lemma:CharacterizationFwd}
  Let $\avass$ be an $n$-visible VASS.
  If $\notregsep{\langof{\avass}}{D_{n}}$, then there is a coverable inseparability flower $(\aconfig.\sigma, \abloom)$.
\end{lemma}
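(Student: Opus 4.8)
The plan is to derive the flower from the inseparability characterisation of~\cite{Baumann23}; this is the harder ``completeness'' direction, whereas its converse (\cref{Lemma:CharacterizationBck}) amounts to pumping a flower into a family of $\omega$-words witnessing inseparability and is comparatively routine. Recall that \cite{Baumann23} proceeds in two stages: it first replaces $\avass$ by a \emph{pumpable} $n$-visible VASS $\avass_{\mathsf{p}}$ for which $\langof{\avass_{\mathsf{p}}}\inseparable\dycklang{n}$ holds iff $\langof{\avass}\inseparable\dycklang{n}$ does, and it then characterises $\langof{\avass_{\mathsf{p}}}\inseparable\dycklang{n}$ by the presence in the Karp--Miller graph $\KM{\avass_{\mathsf{p}}}$ of an \emph{inseparability witness}: a path to an $\omega$-node located at a final state, together with loops at that node whose combined counter effects satisfy a scalar-multiple pumping constraint. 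So from $\langof{\avass}\inseparable\dycklang{n}$ I first obtain such a witness $W$ in $\KM{\avass_{\mathsf{p}}}$, and the remaining work is to turn $W$ into a coverable inseparability flower for $\avass$.

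First I would read off the stem and the counter partition. I use the standard correspondence between $\omega$-nodes of a Karp--Miller graph and coverability of extended configurations: the $\omega$-node reached by $W$, say at state $q_f$ with $\omega$-set $\Omega\subseteq[1,d+n]$, exists precisely when the extended configuration $(q_f,\bmm)$ with $\omegaof{(q_f,\bmm)}=\Omega$ is coverable in $\avass\times\dyckvass{n}$. I set $\omegacounters=\Omega$ and let $\counters$ be its complement, and from a run witnessing this coverability I extract a coverable extended configuration $\aconfig$ and a finite extended run $\sigma$ from $\aconfig$ to some $(q_f,\bmm)$; choosing $\sigma$ inside a witnessing run ensures that the counters in $\counters\setminus\omegaof{\aconfig}$ stay non-negative along $\sigma$.

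Next I would extract the three loops. The loops of $W$ live in $\avass_{\mathsf{p}}$; projecting them along the pumpability construction back to $\avass$ yields loops $\alpha,\beta,\gamma$ at $q_f$ with the same $(d+n)$-dimensional effects. I then verify conditions~(i)--(iv) of \cref{Def:bloom}. Condition~(i) holds because a loop of $W$ at an $\omega$-node may only have a non-negative effect on the non-$\omega$-counters $\counters$ (a negative effect there would have forced a strictly larger $\omega$-set); conditions~(ii) and~(iii) are the local-balance requirements of the witness ensuring that over one period the internal counters, respectively the external counters under $\alpha$ and $\beta$, do not drift downward; and condition~(iv) is precisely the scalar-multiple constraint of~\cite{Baumann23}, which is the ingredient letting one close the witness into a periodic accepting run of the shape $\sigma\,\alpha^{N}(\alpha^{a}\beta^{b}\gamma^{c})^{\omega}$ whose external-counter trajectory stays non-negative, so that the induced $\omega$-word lies in $\dycklang{n}$ while $q_f$ recurs. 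It remains to check that the counters in $\counters\setminus\omegaof{\aconfig}$ also stay non-negative when $\alpha,\beta,\gamma$ are run from the configuration reached after $\sigma$; this follows from the non-negativity guarantees attached to the witness loops together with condition~(i). Assembling these, $(\aconfig.\sigma,(\finalstate,\counters,\alpha,\beta,\gamma))$ is a coverable inseparability flower.

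The step I expect to be the main obstacle is bridging between the \emph{pumpable} VASS $\avass_{\mathsf{p}}$ of~\cite{Baumann23} and the \emph{original} $\avass$: our flower must live in $\avass$ itself, whereas the witness $W$ is produced for $\avass_{\mathsf{p}}$. This demands a careful inspection of the pumpability construction, to argue that witness paths and witness loops transfer back to $\avass$ with their counter effects intact and at the corresponding final state, together with the precise non-negativity bookkeeping of a stem---in particular that the truncation to the counters in $\counters\setminus\omegaof{\aconfig}$ can be met \emph{simultaneously} for $\sigma$ and for all three loops $\alpha,\beta,\gamma$---rather than merely the ``non-negative in the limit'' statements that the Karp--Miller graph supplies directly.
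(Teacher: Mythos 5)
Your proposal follows essentially the same route as the paper: invoke the Karp--Miller characterization from \cite{Baumann23} to obtain a KM-inseparability flower in $\KM{\pump{\avass}}$, turn the $\omega$-node at the final state into a coverable extended configuration, project the loops back to $\avass$, and verify conditions~(i)--(iv) of \cref{Def:bloom}. The paper's execution, however, is cleaner than yours in two places where your sketch actually has gaps.

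First, the stem. You propose extracting a coverable extended configuration $\aconfig$ and a non-trivial extended run $\sigma$ from ``a run witnessing this coverability.'' But a coverability witness is an ordinary run through $\states\times\N^{d+n}$; it carries no $\omega$-entries, so it is not clear where to place $\aconfig$ on it or how the counters in $\omegacounters$ should become $\omega$-valued before the end of $\sigma$. The paper sidesteps this entirely: it takes $\aconfig$ to be the KM-node $(q_f',\bmm')$ itself and the stem to be the \emph{empty} run $(q_f',\bmm').\varepsilon$. Coverability of $\aconfig$ is then immediate from reachability of $(q_f',\bmm')$ in $\KM{\avass\times\dyckvass{n}}$, and the non-negativity requirement on $\sigma$ is vacuous.

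Second, your justification of condition~(i) is off. You argue that a loop ``may only have a non-negative effect on the non-$\omega$-counters $\counters$'' because a negative effect ``would have forced a strictly larger $\omega$-set.'' That is not the mechanism: a negative effect on a precisely-tracked counter does not trigger $\omega$-abstraction. The correct argument, which the paper gives, is that the loops are cycles on a single KM-node, and non-$\omega$ coordinates are tracked \emph{exactly} in the Karp--Miller graph, so the effect on $\counters$ is in fact $\zerovec$ (hence $\geq\zerovec$). The same exact-tracking property, together with the fact that the KM construction never enters negative values, also yields the non-negativity of $\alpha,\beta,\gamma$ along the $\counters$-coordinates when executed from $\aconfig$ --- the point you flag as needing ``precise non-negativity bookkeeping.''

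Finally, you identify the bridge between $\pump{\avass}$ and $\avass$ as the main obstacle but leave it unresolved. The paper closes it by observing that the state space of $\pump{\avass}$ is by construction $\states_{\mathsf{KM}}$ of $\KM{\avass\times\dyckvass{n}}$, and each $\pump{\avass}$-transition is labelled by the $\avass\times\dyckvass{n}$-transition it simulates; two successive unlabelling steps therefore yield loops $\alpha_\avass,\beta_\avass,\gamma_\avass$ in $\avass$ with the same $(d+n)$-dimensional effects, and conditions~(ii)--(iv) are inherited verbatim from (a)--(c) of the KM-flower. With the trivial stem and the exact-tracking argument in place, the rest of your sketch goes through.
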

For Lemma~\ref{Lemma:CharacterizationBck}, we use the 
basic separator characterization of regular separability in \cite{Baumann23}.
Given a B\"uchi VASS $\avass$, regular separability of its language $\langof{\avass}$ from $D_{n}$ is characterized in terms of inclusion in a finite union of so-called basic separators.
The authors identify two families of languages that together make up the set of basic separators, 
$\setcond{P_{i, k}}{i \in [1,n], k\in \nat}$ and $\setcond{S_{\aweight, k}}{\bx\in\N^{n}, k\in\nat}$.
We will define these families below. Let us first formally state the characterization.
\begin{lemma}[\hspace{1sp}\cite{Baumann23}]\label{Lemma:BasicSeparators}
    Let $\avass$ be an $n$-visible VASS.
    Then, $\regsep{\langof{\avass}}{D_{n}}$ if and only if there is a number $k\in\nat$ and a finite set of vectors $X\subseteq\nat^{n}$ such that 
    \[\langof{\avass}\subseteq\bigcup_{i \in [1,n]} P_{i, k}\cup\bigcup_{\aweight \in X}S_{\aweight, k}\ .\]
\end{lemma}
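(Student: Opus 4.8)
The plan is not to reprove this from first principles: it is the basic-separator characterization of $\omega$-regular separability established in~\cite{Baumann23}, and I would invoke it directly. For completeness, here is the route one would follow.

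The ``if'' direction is routine. One checks that each basic separator $P_{i,k}$ and $S_{\aweight,k}$ is an $\omega$-regular language that is disjoint from $D_n$: each is defined by a monitorable prefix condition — that counter $i$ is forced below zero, respectively that a fixed $\aweight$-weighted combination of the counters violates the Dyck bound by more than $k$ — which can be recognized by a finite B\"uchi automaton and is incompatible with membership in $D_n$. Since $\omega$-regular languages are closed under finite union, and a union of languages each disjoint from $D_n$ is again disjoint from $D_n$, the set $\bigcup_{i\in[1,n]}P_{i,k}\cup\bigcup_{\aweight\in X}S_{\aweight,k}$ is an $\omega$-regular language disjoint from $D_n$. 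If it contains $\langof{\avass}$, it is the desired regular separator, so $\regsep{\langof{\avass}}{D_n}$ holds.

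The ``only if'' direction is the substantial one. Suppose $\regsep{\langof{\avass}}{D_n}$ via a regular separator $S$, presented by a B\"uchi automaton $\cA$ with $\langof{\avass}\subseteq S$ and $S\cap D_n=\emptyset$; it suffices to show that $S$ itself is contained in a finite union of basic separators with a single common threshold $k$, since then so is $\langof{\avass}$. The heart of the matter is a ``maximality'' statement: any $\omega$-regular language $R$ with $R\cap D_n=\emptyset$ already lies inside $\bigcup_{i\in[1,n]}P_{i,k}\cup\bigcup_{\aweight\in X}S_{\aweight,k}$ for a threshold $k$ depending only on $|\cA|$ and a finite set $X$ of weight vectors extracted from the simple cycles of $\cA$. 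To establish it, I would take a word $w\in R$ that escapes all $P_{i,k}$ (so, once $k$ is chosen large with respect to $|\cA|$, along every prefix of $w$ each counter surplus stays bounded from below) and escapes all $S_{\aweight,k}$ (so every $\aweight$-weighted counter combination also stays bounded from below along prefixes), and then run a Ramsey/pumping argument on an accepting run of $\cA$ on $w$: rearranging and pumping the cycles of that run produces an infinite word $w'$ still accepted by $\cA$ but all of whose prefixes meet the Dyck condition, i.e.\ $w'\in D_n$, contradicting $S\cap D_n=\emptyset$. Hence no such $w$ exists, $R$ is covered as claimed, and taking $R=S$ concludes the proof.

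The main obstacle is exactly this maximality argument. It requires controlling a safety-type condition over \emph{all} prefixes of an infinite word at once, reconciling it with the B\"uchi acceptance condition, and extracting from a single ``escaping'' word a genuine Dyck word in $L(\cA)$ through a limit/compactness construction combined with cycle rearrangement. Crucially, one must make the threshold $k$ and the weight set $X$ uniform, so that only finitely many basic separators — all sharing one parameter — are needed; handling the interaction of the $n$ counters in the weighted separators $S_{\aweight,k}$, and arguing that finitely many directions $\aweight$ suffice, is where the genuine technical effort lies.
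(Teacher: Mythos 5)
This lemma is not proved in the paper; it is simply cited from~\cite{Baumann23}, which is exactly what you do --- invoke the basic-separator characterization of~\cite{Baumann23} directly, and the ``for completeness'' sketch you append is not part of the paper's treatment and so has nothing to be compared against. Your approach therefore matches the paper's.
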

Both language families impose conditions on the letter balance along a word.
Given a word $w\in\Sigma_{n}^{*}$ its \emph{letter balance} $\varphi(w)$ is a vector in $\Z^n$, whose $i$-th component $\varphi_i(w)$ counts how many more letters $a_i$ than $\bar{a}_i$ appear in $w$.
In other words this balance is equal to the effect on the external counters of a finite run $\aconfig.\apath$ that is labeled by the word $w$, i.e.\ $\varphi(w) = \exteffof{\apath}$.

The language $P_{i, k} \subseteq \Sigma_{n}^\omega$ contains words whose balance is bounded by $k$ in the $i$-th component before becoming negative for some prefix.
In contrast, the language $S_{\aweight, k} \subseteq \Sigma_{n}^\omega$ is prefix independent.
It contains words whose $\aweight$-weighted balance gradually decreases beyond some point, while all suffixes beyond the same point have an $\aweight$-weighted balance bounded by $k$.
Here the \emph{$\aweight$-weighted balance} of a word $v \in \Sigma_n^*$ for some vector $\aweight \in \nat^n$ is the dot product $\dotpof{\aweight}{\varphi(v)} \in \Z$.

These language families are defined formally below, for all $i \in [1,n]$, $\aweight\in\nat^{n}$, and $k\in\nat$.
For a word $\aword\in\Sigma_{n}^{*}$ or $\aword\in\Sigma_{n}^{\omega}$, we use $\prefixof{\aword}$ to denote the set of finite prefixes of $\aword$, and $\infixof{\aword}$ to denote the set of finite infixes of $\aword$.
\begin{align*}
    P_{i, k}&=\setcond{\aword\in\Sigma_{n}^{\omega}}{\exists\awordp\in\prefixof{\aword}.\ibalanceof{i}{\awordp}<0\wedge\forall\awordpp\in\prefixof{\awordp}.\ibalanceof{i}{\awordpp}\leq k}\\
    S_{\aweight, k}&=\setcond{\aword.\awordp\in\Sigma_{n}^{\omega}}{\forall\awordpp\in\infixof{\awordp}.\dotpof{\aweight}{\balanceof{\awordpp}}\leq k \;\;\wedge\;\;
    \awordp=\awordp_{0}.\awordp_{1}\ldots, \forall j\in\nat.\;\dotpof{\aweight}{\balanceof{\awordp_{j}}}<0}
\end{align*}
With this at hand, we show Lemma~\ref{Lemma:CharacterizationBck}.
We recall the argument in \cite[Proposition 5.5] {Baumann23} and extend it to include the basic separators of the form $P_{i, k}$ for all $i \in [1,n]$, $k\in\nat$.
\begin{proof}[Proof of Lemma~\ref{Lemma:CharacterizationBck}]
    Let $\avass$ be an $n$-visible VASS, and let $(\aconfig.\sigma, \abloom)$ be a coverable inseparability flower where $\abloom=(\finalstate, \counters, \alpha, \beta, \gamma)$ and $[1, d+n]=\Omega\uplus\counters$.
    Towards a contradiction, suppose $\regsep{\langof{\avass}}{D_{n}}$.
    Then, by Lemma~\ref{Lemma:BasicSeparators}, we know that there is a number $k\in\nat$ and a finite set $X\subseteq\nat^{n}$ such that
    \[\langof{\avass}\subseteq \bigcup_{i\in[1,n]}P_{i,k}\cup\bigcup_{\aweight\in X}S_{\aweight, k} \ .\]
    Let the effect $\allbalanceof{-}$ of the prefixes of $\alpha$, $\beta$, and $\gamma$ be bounded from below by $-y\leq -1$ for every counter, and let the run $\aconfig.\sigma$ end in the extended configuration $(q_f, \bmm)$.
    Let $\bmm'\in\nat^{d+n}$ be obtained from $\bmm$ by replacing all instances of $\omega$ with $3\cdot (k+1)\cdot y$.
    Since $(\aconfig.\sigma, \abloom)$ is coverable, $(q_f, \bmm)$ is coverable as well, hence there must be witnessing runs covering the non-$\omega$ counters of $(q_f, \bmm)$ and reaching arbitrarily high values for the counters in $\omega(q_f, \bmm)\supseteq\Omega$.
    In particular, there must be some run $(q_0, \zerovec).\apath$ that covers the configuration $(q_f,\bmm')$.

    We know that upon taking $\alpha$,  $\beta$, and $\gamma$ from $(\finalstate, \bmm)$, the counter $i$ remains positive and the counter valuation does not decrease, for all $i\in \counters$.
    Clearly this remains true if we repeat these loops $(k+1)$-times, i.e.\ taking $\alpha^{k+1}$, $\beta^{k+1}$, and $\gamma^{k+1}$.
    By our choice of $y$, we also know that starting from some counter valuation $u\in\nat$ for counter $i\in[1, d+n]$, during execution of $\alpha$, $\beta$, and $\gamma$ we always remain above the valuation $u-y$ for this counter.
    Therefore, since $\bmm'$ has a counter valuation of $3\cdot (k+1)\cdot y$ for all counters in $\Omega \subseteq \omega(q_f,\bmm)$, taking $\alpha^{k+1}.\beta^{k+1}.\gamma^{k+1}$ from $(q_f,\bmm')$ keeps the counters in $\Omega$ positive as well.
    So, we deduce that the run $(q_0, \zerovec).\apath.\alpha^{k+1}.\beta^{k+1}.\gamma^{k+1}$ remains positive in all counters $[1, d+n]$.
    Furthermore condition (ii) for an inseparability bloom, $\vassbalanceof{\alpha}+\vassbalanceof{\beta}+\vassbalanceof{\gamma}\geq 0$, implies that $\alpha^{k+1}.\beta^{k+1}.\gamma^{k+1}$ can be infinitely repeated to produce a run that remains positive in the counters $\set{1, \ldots d}$.
    Because $\alpha$, $\beta$, and $\gamma$ are loops on the final state $\finalstate$, the run $(q_0, \zerovec).\apath.(\alpha^{k+1}.\beta^{k+1}.\gamma^{k+1})^{\omega}$ is an accepting run in $\avass$.
    Let $\aword \in \Sigma_n^\omega$ be the word labeling this run.
    The argument in the proof of \cite[Proposition 5.5] {Baumann23} applies to show $\aword\not\in S_{\aweight, k}$ for all $\aweight\in X$.

    It remains to argue that $\aword\not\in P_{i, k}$ for all $i<n$.
    If $d+i\in\counters$, then we know that the counter $d+i$ never becomes negative along the run, and thus the same holds for the $i$th component of the letter balance.
    Then $\aword\not\in P_{i, k}$.
    If $d+i\in \Omega$, then taking $\apath$ from $(q_0, \zerovec)$ covers $(q_f, \bmm')$, where $\bmm'[d+i]\geq k+1$ by definition.
    Therefore the $i$th letter balance exceeds bound $k+1$ before it can drop below $0$, so $\aword\not\in P_{i, k}$.
\end{proof}

For the direction of Lemma~\ref{Lemma:CharacterizationFwd}, we do not directly argue from an assumption of inseparability.
Instead, we use another characterization provided in \cite[Theorem 5.3]{Baumann23}.
This characterization follows in two steps.
First, it is shown that a B\"{u}chi VASS $\avass_{\mathsf{pump}}$ can be constructed from $\avass$ such that the language $\langof{\avass_{\mathsf{pump}}}$ is (so-called) \emph{pumpable}, and $\regsep{\langof{\avass}}{D_{n}}$ if and only if $\regsep{\langof{\avass_{\mathsf{pump}}}}{D_{n}}$.
The definition of pumpability is not needed for the purposes of this paper.

For our proof, we only need the second step of the characterization: If $\langof{\avass}$ is pumpable, then $\notregsep{\langof{\avass}}{D_{n}}$ if and only if the B\"uchi Automaton $\mathsf{KM}(\avass)$ contains a structure called a \emph{KM-inseparability flower}\footnote{In \cite{Baumann23} this structure is simply called \emph{inseparability flower}. Since we already use this term for a different structure, which does not involve the KM-graph, we felt it appropriate to add the prefix ``KM-'' for the version from \cite{Baumann23}, to differentiate the two.}.
We define these kinds of flowers and the notation $\mathsf{KM}(\avass)$ below.
Putting the two steps together, we get the following characterization.
\begin{lemma}[\hspace{1sp}\cite{Baumann23}]\label{Lemma:OldKMCharacterization}
    Let $\avass$ be an $n$-visible VASS.
    Then, $\notregsep{\langof{\avass}}{D_{n}}$ if and only if $\mathsf{KM}(\avass_{\mathsf{pump}})$ contains a KM-inseparability flower.
\end{lemma}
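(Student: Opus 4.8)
The plan is to derive \cref{Lemma:OldKMCharacterization} by composing the two ingredients from \cite{Baumann23} that were just recalled, rather than reproving either of them. First I would invoke the \emph{pumpification} step: starting from the $n$-visible VASS $\avass$, \cite{Baumann23} constructs a B\"uchi VASS $\avass_{\mathsf{pump}}$ whose language $\langof{\avass_{\mathsf{pump}}}$ is pumpable and which satisfies $\regsep{\langof{\avass}}{D_n}$ if and only if $\regsep{\langof{\avass_{\mathsf{pump}}}}{D_n}$; negating both sides, $\notregsep{\langof{\avass}}{D_n}$ iff $\notregsep{\langof{\avass_{\mathsf{pump}}}}{D_n}$. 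Before using this I would record the (easy) fact that $\avass_{\mathsf{pump}}$ is again $n$-visible, i.e.\ labelled over $\Sigma_n$, since the second ingredient is stated for $n$-visible inputs: the construction only modifies internal counters and control flow and leaves the input labelling untouched, so the number $n$ of visible (Dyck) counters is preserved.

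Second, I would apply the inseparability characterization for pumpable languages, \cite[Theorem 5.3]{Baumann23}: for any $n$-visible VASS $\avass'$ with $\langof{\avass'}$ pumpable, $\notregsep{\langof{\avass'}}{D_n}$ holds if and only if the B\"uchi automaton $\mathsf{KM}(\avass')$ contains a KM-inseparability flower. By the previous paragraph the hypotheses are met for $\avass' = \avass_{\mathsf{pump}}$, so $\notregsep{\langof{\avass_{\mathsf{pump}}}}{D_n}$ iff $\mathsf{KM}(\avass_{\mathsf{pump}})$ contains a KM-inseparability flower. Chaining this equivalence with the one from the first step gives precisely $\notregsep{\langof{\avass}}{D_n}$ iff $\mathsf{KM}(\avass_{\mathsf{pump}})$ contains a KM-inseparability flower, which is the statement. (The size of $\mathsf{KM}(\avass_{\mathsf{pump}})$ is irrelevant here; it is exactly the reason the main part of the paper later replaces KM-flowers by the blooms of \cref{Def:bloom}.)

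The only real obstacle is definitional bookkeeping: one has to make sure that the notions of Karp--Miller automaton $\mathsf{KM}(-)$ and of KM-inseparability flower that are set up below in this appendix agree literally with the corresponding notions in \cite{Baumann23} (the footnote already signals that we only renamed their ``inseparability flower'' to avoid the clash with \cref{Def:bloom}). I would therefore phrase those definitions so the match is immediate, and emphasize that the argument uses nothing about $\avass_{\mathsf{pump}}$ beyond pumpability of its language, so that the definition of pumpability itself can be treated as a black box throughout.
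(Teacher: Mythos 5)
Your proposal is correct and matches the paper's treatment: the paper likewise obtains this lemma purely by chaining the two cited ingredients from \cite{Baumann23} — the pumpification step preserving (in)separability from $D_n$, and the KM-flower characterization for pumpable languages applied to $\avass_{\mathsf{pump}}$ — with no further argument needed. Your added remark that $\avass_{\mathsf{pump}}$ stays $n$-visible (labelled over $\Sigma_n$) is the right bookkeeping point and is consistent with the paper's definition of $\avass_{\mathsf{pump}}$.
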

We recall the definitions we need.
For the rest of this section, fix a B\"{u}chi VASS $\avass=(\states, q_0, \Sigma_{n}, T, \finalstates)$.
We start with the definition of the Karp-Miller graph \cite{KM69} of a VASS, as formulated in \cite{Baumann23}.
The \emph{Karp-Miller graph} $\mathsf{KM}(\avass)=(\states_\mathsf{KM}, (q_0, \zerovec), T, T_{\mathsf{KM}}, \finalstates\times\nat_{\omega}^{d})$ is a B\"uchi Automaton with states $\states_\mathsf{KM} \subseteq \states\times\nat_{\omega}^{d}$, whose transitions are labeled with the transitions of $\avass$.
Its states and transitions are constructed by starting with $\states_\mathsf{KM} = \emptyset$ and $T_{\mathsf{KM}} = \set{(q_0, \zerovec)}$, and repeating the following steps towards a fixed point.
Pick a transition $t=(q, a, \bx ,q')\in T$ and a state $(q, \bmm) \in T_{\mathsf{KM}}$.
First simulate $(q, a, \bx, q')$ from $(q, \bmm)$, to get $(q', \bmm + \bx)$.
If $(q', \bmm + \bx)\notin\states\times\nat^{d+n}_{\omega}$, continue.
In the other case construct $\bmm_{\omega}$ from $\bmm + \bx$ as follows:
(1) for $i\in[1,d]$, if there is a state $(q', \bmm'), \in \states_\mathsf{KM}$ with $\bmm'\leq \bmm + \bx$ and $\bmm'[i]<(\bmm+\bx)[i]$ such that $(q, \bmm)$ is reachable from $(q', \bmm')$ using the transitions generated so far, then set $\bmm_{\omega}[i]=\omega$;
(2) otherwise, set $\bmm_{\omega}[i]=(\bmm+\bx)[i]$.
Finally set $\states_\mathsf{KM} := \states_\mathsf{KM} \cup \set{(q',\bmm_\omega)}$ and $T_{\mathsf{KM}} := T_{\mathsf{KM}} \cup \set{(q, \bmm) \xrightarrow{t} (q',\bmm_\omega)}$.
This process reaches a fixed point after finitely many states, since $(\nat_{\omega}^{d}, \leq)$ is a well quasi order.
In particular, this means that the state space of the Karp Miller graph is finite.
It is immediate from the definition that if $(q, \bmm)\in\states\times\nat^{d}_{\omega}$ can be reached in $\mathsf{KM}(\avass)$, then $(q, \bmm)$ is coverable in $\avass$.

Now we define $\avass_{\mathsf{pump}}$.
The VASS $\avass_{\mathsf{pump}}=(\states_{\mathsf{pump}}, q_0, \Sigma_{n}, T_{\mathsf{pump}}, \finalstates\times\nat^{d+n}_{\omega})$ has the states $\states_{\mathsf{KM}}$ of the Karp-Miller graph $\mathsf{KM}(\avass\times \mathcal{D}_{n})$ as its state space, and $((q, \bmm), a, \bx, (q', \bmm'))\in T_{\mathsf{pump}}$ if and only if $(q, \bmm) \xrightarrow{t} (q', \bmm')$ is a transition in $\mathsf{KM}(\avass\times \mathcal{D}_{n})$ for some $t=(q, a, \bx, q')$.

A KM-inseparability flower in $\mathsf{KM}(\avass_{\mathsf{pump}})$ consists of a reachable final state $(q_f, \bmm)$ in $\mathsf{KM}(\avass_{\mathsf{pump}})$ with three loops $\alpha$, $\beta$, $\gamma$ on that state that fulfill the following three conditions.
\begin{enumerate}[(a)]
    \item $\delta(\alpha)+\delta(\beta)+\delta(\gamma)\geq\zerovec$,
    \item $\varphi(\alpha)+\varphi(\beta)\geq\zerovec$,
    \item there is $t\in\rats$ with $\varphi(\alpha)+\varphi(\beta)+\varphi(\gamma)-t\cdot\varphi(\alpha)=\zerovec$.
\end{enumerate}
Here, for a path (or cycle) $\rho$ in the Karp-Miller graph, we write $\delta(\rho)$ to refer to the effect of the transition sequence it was labelled by.
In other words for all $\chi \in \set{\alpha, \beta, \gamma}$ in $\mathsf{KM}(\avass_{\mathsf{pump}})$ labelled by a sequence of transitions $\rho_\chi$ of $\avass_{\mathsf{pump}}$, we define $\delta(\chi) := \delta(\rho_\chi)$.

We are now ready to prove Lemma~\ref{Lemma:CharacterizationFwd}.

\begin{proof}[Proof of Lemma~\ref{Lemma:CharacterizationFwd}]
    Let $\avass$ be an $n$-visible VASS with $\notregsep{\langof{\avass}}{D_{n}}$.
    Then by \cref{Lemma:OldKMCharacterization} $\mathsf{KM}(\avass_{\mathsf{pump}})$ contains a KM-inseparability flower.
    This means that there is a state $(\finalstate, \bmm)$ of $\mathsf{KM}(\avass_{\mathsf{pump}})$ and there are three loops $\alpha$, $\beta$, $\gamma$ on this state with the properties (a), (b), (c).
    Recall that the state space of $\avass_{\mathsf{pump}}$ is that of $\mathsf{KM}(\avass\times \mathcal{D}_{n})$.
    Also recall that $((q, \bmn), a, \bx, (q', \bmn'))\in T_{\mathsf{pump}}$ if and only if there is a transition $(q, \bmn) \xrightarrow{t} (q', \bmn')$ in $\mathsf{KM}(\avass\times \mathcal{D}_{n})$ for some $t=(q, a, \bx, q') \in T$.
    Therefore $q_{f}=(q_f', \bmm')$ is a state in $\mathsf{KM}(\avass\times \mathcal{D}_{n})$, for some $\bmm'\in\nat^{d+n}_{\omega}$ and final $q_f'$ in $\avass$ with three loops $\alpha'$, $\beta'$, $\gamma'$ on it in $\mathsf{KM}(\avass\times \mathcal{D}_{n})$ that fulfill (a), (b), and (c).
    Here $\alpha'$, $\beta'$, $\gamma'$ are obtained by extracting the $\avass_{\mathsf{pump}}$-transitions from the labels of $\alpha$, $\beta$, $\gamma$ and then matching them to the corresponding transitions in $\mathsf{KM}(\avass\times \mathcal{D}_{n})$.
    Since $\alpha'$, $\beta'$, $\gamma'$ are themselves labelled by transition sequences (of $\avass\times\mathcal{D}_{n}$), we can extract again to obtain the loops $\alpha_{\avass}$, $\beta_{\avass}$, $\gamma_{\avass}$ in $\avass$ (ignoring the $\mathcal{D}_{n}$-components of the transitions).
    We claim that $((q_f', \bmm').\varepsilon, \abloom)$ is an inseparability flower with $\abloom=(q_f', \counters, \alpha_\avass, \beta_\avass, \gamma_\avass)$ and $\counters=\setcond{i\in \set{1, \ldots d+n}}{\bmm'[i]\neq \omega}$.
    Since $(q_f', \bmm')$ is reachable in $\mathsf{KM}(\avass\times \mathcal{D}_{n})$, we get that $(q_f', \bmm')$ is coverable in $\avass\times \mathcal{D}_{n}$.
    This shows the coverability of $((q_f', \bmm').\varepsilon, \abloom)$.
    For all $i\in \counters$, the loops $\alpha'$, $\beta'$, $\gamma'$ lead from a non-$\omega$ counter valuation to the same non-$\omega$ counter valuation in $\mathsf{KM}(\avass\times \mathcal{D}_{n})$ at counter $i$.
    Since non-$\omega$ counter valuations are precisely tracked in Karp-Miller graphs, we deduce that $\allbalanceof{\rho_\avass}[\counters]=\zerovec$ for all $\rho_\avass\in\set{\alpha_\avass, \beta_\avass, \gamma_\avass}$.
    This shows Condition (i) of inseparability flowers.
    The Conditions (ii), (iii), (iv) are the same as (a), (b), (c), respectively, and are inherited from $\alpha'$, $\beta'$, $\gamma'$.
    It only remains to show that $(q_f', \bmm').\varepsilon$ is a stem for $\abloom$.
    The state $q_f'$ is already final in $\avass$ so $(q_f', \bmm').\varepsilon$ trivially ends in a final state.
    Since $\bmm'\in\nat^{d+n}_{\omega}$ the counters trivially remain non-negative along $(q_f', \bmm').\varepsilon$.
    We also know $\Omega=\set{1, \ldots, d+n}\setminus\counters=\omega(q_f', \bmm')$.
    The trivial run $(q_f', \bmm').\varepsilon$ reaches $(q_f', \bmm')$.
    The argument for $\alpha$, $\beta$, $\gamma$ remaining non-negative on the counters $\counters$ when executed from $(q_f', \bmm')$ is similar to the argument for condition (i).
    The counters $i\in\counters$ are non-$\omega$ and thus precisely tracked along $\alpha$, $\beta$, $\gamma$ from $(q_f', \bmm')$.
    Since the Karp-Miller graph does not allow transitions into or from negative counter valuations, we deduce that the valuations of the counters $\counters$ remain positive along $\alpha$, $\beta$, $\gamma$ from $(q_f', \bmm')$.
    This concludes the proof.
\end{proof}

}

\end{document}